\newtheorem{theorem}{Theorem}[section]
\newtheorem*{lemma*}{Lemma}
\newtheorem{corollary}[theorem]{Corollary}
\newtheorem{remark}[theorem]{Remark}
\newtheorem{definition}[theorem]{Definition}
\theoremstyle{definition}
\newtheorem{example}[theorem]{Example}
\renewcommand\subsubsection{\@startsection{subsubsection}{3}{\z@}%
  {3.25ex \@plus1ex \@minus.2ex}%
  {-1em}%
  {\normalfont\normalsize\bfseries}}
\newcounter{para}[subsubsection]
\definecolor{comb}{HTML}{ddf1fb}
\definecolor{seq}{HTML}{6ae4a3}
\definecolor{neutral}{HTML}{f8f8f2}
\definecolor{unit}{HTML}{b9b9b9}
\newcommand{\dotsize}{3pt}
\tikzset{x=0.9em, y=2ex, baseline=-0.5ex}
\tikzset{ihbase/.style={inner sep=0,circle,draw,fill=lightgray,minimum size={\dotsize}, node contents={}}}
\tikzset{ihblack/.style={ihbase,fill=black}}
\tikzset{ihwhite/.style={ihbase,fill=white}}
\tikzset{mat/.style={draw,fill=white,rectangle,node font=\scriptsize}}
\tikzset{ha/.style={mat,rounded rectangle,rounded rectangle left arc=none}}
\tikzset{haop/.style={mat,rounded rectangle,rounded rectangle right arc=none}}
\tikzset{blackha/.style={mat,rounded rectangle,rounded rectangle left arc=none,font=\color{white},fill=black}}
\tikzset{blackhaop/.style={mat,rounded rectangle,rounded rectangle right arc=none,font=\color{white},fill=black}}
\tikzset{anti/.style={inner sep=0,isosceles triangle,fill=black,draw=black, minimum width=0.75em, node contents={}}}
\tikzset{antiop/.style={anti,shape border rotate=180}}
\tikzset{antisq/.style={inner sep=0,rectangle,fill=black, minimum height=1em, minimum width=0.6em, node contents={}}}
\tikzset{count/.style={above,inner ysep=0.15em,font=\scriptsize}}
\tikzset{axiom/.style={above,font=\small}}
\tikzset{dir/.style={-Latex}}
\tikzset{st/.style={decoration={markings,
    mark={at position 0.5 with {\draw (0, 2pt) to (0, -2pt);}}},
    postaction=decorate}}
\tikzstyle{none}=[inner sep=0mm]
\tikzstyle{flip}=[draw={rgb,255: red,86; green,86; blue,86}, fill={rgb,255: red,86; green,86; blue,86}, 
\tikzstyle{cflip}=[circle, draw={rgb,255: red,128; green,128; blue,128}, fill={rgb,255: red,128; green,128; blue,128}, inner sep=0pt, minimum size=4pt]
\tikzstyle{and}=[fill=white, draw=black, and gate, scale=.4]
\tikzstyle{or}=[fill=white, draw=black, or gate, scale=.4, anchor=center]
\tikzstyle{not}=[fill=white, draw=black, not gate, scale=.35]
\tikzstyle{xor}=[fill=white, draw=black, xor gate, scale=.6]
\tikzstyle{if}=[trapezium, draw=black, fill=white, minimum width=6pt, minimum height=8pt, rotate=270]
\tikzstyle{plain}=[inner sep=0pt]
\tikzstyle{black}=[circle, draw=black, fill=black, inner sep=0pt, minimum size={\dotsize}]
\tikzstyle{black-faded}=[circle, draw=light-gray, fill=light-gray, inner sep=0pt, minimum size=4pt]
\tikzstyle{white}=[circle, draw=black, fill=white, inner sep=0pt, minimum size={\dotsize}]
\tikzstyle{white-faded}=[circle, draw=light-gray, fill=white, inner sep=0pt, minimum size=4.5pt]
\tikzstyle{reg}=[draw, fill=white, rounded rectangle, rounded rectangle left arc=none, minimum height=1.2em, minimum width=1.4em, node font={\scriptsize}]
\tikzstyle{effect}=[draw, fill=white, rounded rectangle, rounded rectangle left arc=none, minimum height=1.2em, minimum width=1.4em]
\tikzstyle{coreg}=[draw, fill=white, rounded rectangle, rounded rectangle right arc=none, minimum height=1.2em, minimum width=1.4em, node font={\scriptsize}]
\tikzstyle{box}=[shape=rectangle, text height=1.5ex, text depth=0.25ex, yshift=0.2mm, fill=white, draw=black, minimum height=3mm, minimum width=5mm, font={\small}]
\tikzstyle{basic box}=[shape=rectangle, text height=1.5ex, text depth=0.25ex, yshift=0.2mm, fill={probcircuitcolor}, draw={probcircuitcolor}, minimum height=3mm, minimum width=5mm, text={probcircuitcolorop}, inner sep=4pt]
\tikzstyle{small box}=[draw, fill={probcircuitcolor}, rectangle, minimum height=1.2em, minimum width=1.4em, node font={\scriptsize}, text={probcircuitcolorop}]
\definecolor{probcircuitcolor}{rgb}{0.33,0.33,0.33}
\definecolor{probcircuitcolorop}{rgb}{1.0,1.0,1.0}
\definecolor{boolcircuitcolor}{rgb}{1.0,1.0,1.0}
\definecolor{boolcircuitcolorop}{rgb}{0.0,0.0,0.0}
\tikzstyle{prob circuit}=[shape=rectangle, text height=1.5ex, text depth=0.25ex, yshift=0.2mm, fill={probcircuitcolor}, draw={probcircuitcolor}, minimum height=3mm, minimum width=5mm, text={probcircuitcolorop}, inner sep=4pt]
\tikzstyle{small prob circuit}=[draw={probcircuitcolor}, fill={probcircuitcolor}, rectangle, minimum height=1.2em, minimum width=1.4em, node font={\scriptsize}, text={probcircuitcolorop}]
\tikzstyle{tall prob circuit}=[shape=rectangle, text height=1.5ex, text depth=0.25ex, yshift=0.2mm, fill={probcircuitcolor}, draw={probcircuitcolor}, minimum height=8mm, minimum width=5mm, text={probcircuitcolorop}]
\tikzstyle{bool circuit}=[shape=rectangle, text height=1.5ex, text depth=0.25ex, yshift=0.2mm, fill={boolcircuitcolor}, draw=black, minimum height=3mm, minimum width=5mm, font={\small}, text={boolcircuitcolorop}]
\tikzstyle{small bool circuit}=[draw, fill={boolcircuitcolor}, rectangle, minimum height=1.2em, minimum width=1.4em, node font={\scriptsize},text={boolcircuitcolorop}]
\tikzstyle{tall bool circuit}=[rectangle, draw, fill={boolcircuitcolor}, minimum height=8mm, minimum width=5mm ,text={boolcircuitcolorop}]
\newcommand{\genericcomult}[2]{
  \begin{tikzpicture}
    \node at (1, 0) [ihbase,solid,name=copy,#1];
    \draw[#2] (copy) .. controls (1.25, 0.5) .. (2, 0.5);
    \draw[#2] (0, 0) -- (copy);
    \draw[#2] (copy) .. controls (1.25, -0.5) .. (2, -0.5);
  \end{tikzpicture}
}
\newcommand{\genericcomultn}[2]{
  \tikz {
    \draw (1, 0) node[ihbase,name=copy,#1] .. controls (1.25, 0.5) .. (1.5, 0.5)
    -- node[count] {$#2$} (2.25, 0.5);
    \draw (0, 0) -- node[count] {$#2$} (copy) .. controls (1.25, -0.5) .. (1.5, -0.5)
    -- node[count] {$#2$} (2.25, -0.5);
  }
}
\newcommand{\genericcounit}[2]{
  \tikz \draw[#2] (0, 0) -- (1, 0) node[ihbase,#1, solid];
}
\newcommand{\genericcounitn}[2]{
  \tikz \draw (0, 0) -- node[count] {$#2$} (1, 0) node[ihbase,#1];
}
\newcommand{\genericmult}[2]{
  \tikz {
    \node at (1,0) (copy) [ihbase,#1,solid];
    \draw[#2] (0,  0.5) .. controls (0.75,  0.5) .. (copy);
    \draw[#2] (0, -0.5) .. controls (0.75, -0.5) .. (copy);
    \draw[#2] (copy) -- (2, 0);
  }
}
\newcommand{\Bcomult}{\genericcomult{ihblack}{}}
\newcommand{\Bcomultn}[1]{\genericcomultn{ihblack}{#1}}
\newcommand{\Bcounit}{\genericcounit{ihblack}{}}
\newcommand{\Bcounitn}[1]{\genericcounitn{ihblack}{#1}}
\newcommand{\Bmult}{\genericmult{ihblack}{}}
\newcommand\idzero{
\InputIfFileExists{empty-diag.tikz}{}{\input{./tikz/empty-diag.tikz}}
} %
\newcommand{\idone}{
  \tikz \draw (0, 0) -- (2, 0);
}
\newcommand{\Andgate}{
\begin{tikzpicture}[circuit logic US]
	\begin{pgfonlayer}{nodelayer}
		\node [style=none] (91) at (-1.5, 0.5) {};
		\node [style=and] (93) at (0, 0) {};
		\node [style=none] (96) at (-1.5, -0.5) {};
		\node [style=none] (103) at (1.25, 0) {};
	\end{pgfonlayer}
	\begin{pgfonlayer}{edgelayer}
		\draw [in=0, out=150] (93) to (91.center);
		\draw [in=-150, out=0, looseness=0.75] (96.center) to (93);
		\draw (93) to (103.center);
	\end{pgfonlayer}
\end{tikzpicture}
}
\newcommand{\ConvexSum}[1]{
\begin{tikzpicture}
	\begin{pgfonlayer}{nodelayer}
		\node [style=none] (0) at (-3.25, 0) {};
		\node [style=none] (3) at (-0.75, 0) {};
		\node [style=if] (6) at (-0.5, 0) {};
		\node [style=none] (9) at (1, 0) {};
		\node [style=none] (10) at (-3.25, -0.5) {};
		\node [style=none] (11) at (-0.75, -0.5) {};
		\node [style=flip] (15) at (-2.5, 0.75) {$#1$};
		\node [style=none] (16) at (-0.75, 0.5) {};
	\end{pgfonlayer}
	\begin{pgfonlayer}{edgelayer}
		\draw (0.center) to (3.center);
		\draw (6) to (9.center);
		\draw [in=180, out=0] (10.center) to (11.center);
		\draw [in=180, out=0, looseness=1.50] (15) to (16.center);
	\end{pgfonlayer}
\end{tikzpicture}
}
\newcommand{\Orgate}{
\begin{tikzpicture}[circuit logic US]
	\begin{pgfonlayer}{nodelayer}
		\node [style=none] (91) at (-1.5, 0.5) {};
		\node [style=or] (93) at (0, 0) {};
		\node [style=none] (96) at (-1.5, -0.5) {};
		\node [style=none] (103) at (1.25, 0) {};
		\node [style=none] (104) at (-0.25, -0.25) {};
		\node [style=none] (105) at (-0.25, 0.25) {};
	\end{pgfonlayer}
	\begin{pgfonlayer}{edgelayer}
		\draw (93) to (103.center);
		\draw [in=180, out=0] (91.center) to (105.center);
		\draw [in=-180, out=0] (96.center) to (104.center);
	\end{pgfonlayer}
\end{tikzpicture}
}
\newcommand{\Notgate}{
\begin{tikzpicture}[circuit logic US]
	\begin{pgfonlayer}{nodelayer}
		\node [style=not] (93) at (0, 0) {};
		\node [style=none] (96) at (-1.25, 0) {};
		\node [style=none] (103) at (1.25, 0) {};
	\end{pgfonlayer}
	\begin{pgfonlayer}{edgelayer}
		\draw (96.center) to (93);
		\draw (93) to (103.center);
	\end{pgfonlayer}
\end{tikzpicture}
}
\newcommand{\Flip}[1]{
\begin{tikzpicture}
	\begin{pgfonlayer}{nodelayer}
		\node [style=flip] (93) at (0, 0) {$#1$};
		\node [style=none] (103) at (1.25, 0) {};
	\end{pgfonlayer}
	\begin{pgfonlayer}{edgelayer}
		\draw (93) to (103.center);
	\end{pgfonlayer}
\end{tikzpicture}
}
\newcommand{\Ifgate}{
\begin{tikzpicture}[circuit logic US]
	\begin{pgfonlayer}{nodelayer}
		\node [style=none] (91) at (-1.5, 0.5) {};
		\node [style=if] (93) at (0, 0) {};
		\node [style=none] (96) at (-1.5, 0) {};
		\node [style=none] (104) at (-0.25, 0) {};
		\node [style=none] (105) at (-0.25, 0.5) {};
		\node [style=none] (110) at (1.5, 0) {};
		\node [style=none] (111) at (-1.5, -0.5) {};
		\node [style=none] (112) at (-0.25, -0.5) {};
	\end{pgfonlayer}
	\begin{pgfonlayer}{edgelayer}
		\draw [in=180, out=0] (91.center) to (105.center);
		\draw [in=-180, out=0] (96.center) to (104.center);
		\draw (93) to (110.center);
		\draw [in=180, out=0] (111.center) to (112.center);
	\end{pgfonlayer}
\end{tikzpicture}
}
\newcommand{\idx}[1]{
  \tikz \draw (0, 0) -- (1.5, 0) node [midway, above] {\scriptsize $#1$};
}
\newcommand{\sym}{
  \tikz {
    \draw (0,  0.4) .. controls (0.5,  0.4) and (0.5, -0.4) .. (1, -0.4);
    \draw (0, -0.4) .. controls (0.5, -0.4) and (0.5,  0.4) .. (1,  0.4);
  }
}
\definecolor{light-gray}{gray}{.5}
\tikzset{
BWmatrix/.pic={
    \coordinate (center) at (0,0);
    \filldraw[fill=white, draw=black, line width=1pt] (.5,0) 
        [rounded corners=14pt] -- (1,0) 
        [rounded corners=14pt] -- (1,1)
        [rounded corners=0pt] -- (.5,1) 
        [rounded corners=0pt] -- cycle;
    \filldraw[fill=black, draw=black, line width=1pt] (0,0) 
        -- (.5,0) 
        -- (.5,1)
        -- (0,1) 
        -- cycle;
   },
pics/BWmatrix/.default=0.2
}
\tikzstyle{arrow}=[->]
\newcommand{\probcircuit}[3]{
\begin{tikzpicture}
	\begin{pgfonlayer}{nodelayer}
		\node [style=prob circuit] (0) at (0, 0) {$#1$};
		\node [style=none] (1) at (1.75, 0) {};
		\node [style=none] (2) at (-1.75, 0) {};
		\node [style=none] (3) at (1.5, 0.5) {\scriptsize $#3$};
		\node [style=none] (4) at (-1.5, 0.5) {\scriptsize $#2$};
	\end{pgfonlayer}
	\begin{pgfonlayer}{edgelayer}
		\draw (2.center) to (0);
		\draw (0) to (1.center);
	\end{pgfonlayer}
\end{tikzpicture}
}
\newcommand{\smallprobcircuit}[3]{
\begin{tikzpicture}
	\begin{pgfonlayer}{nodelayer}
		\node [style=small prob circuit] (0) at (0, 0) {$#1$};
		\node [style=none] (1) at (1.75, 0) {};
		\node [style=none] (2) at (-1.75, 0) {};
		\node [style=none] (3) at (1.25, 0.5) {\scriptsize $#3$};
		\node [style=none] (4) at (-1.25, 0.5) {\scriptsize $#2$};
	\end{pgfonlayer}
	\begin{pgfonlayer}{edgelayer}
		\draw (2.center) to (0);
		\draw (0) to (1.center);
	\end{pgfonlayer}
\end{tikzpicture}
}
\newcommand{\boolcircuit}[3]{
\begin{tikzpicture}
	\begin{pgfonlayer}{nodelayer}
		\node [style=bool circuit] (0) at (0, 0) {$#1$};
		\node [style=none] (1) at (1.75, 0) {};
		\node [style=none] (2) at (-1.75, 0) {};
		\node [style=none] (3) at (1.5, 0.5) {\scriptsize $#3$};
		\node [style=none] (4) at (-1.5, 0.5) {\scriptsize $#2$};
	\end{pgfonlayer}
	\begin{pgfonlayer}{edgelayer}
		\draw (2.center) to (0);
		\draw (0) to (1.center);
	\end{pgfonlayer}
\end{tikzpicture}
}
\newcommand{\diagbox}[3]{
\begin{tikzpicture}
	\begin{pgfonlayer}{nodelayer}
		\node [style=basic box] (0) at (0, 0) {$#1$};
		\node [style=none] (1) at (1.5, 0) {};
		\node [style=none] (2) at (-1.5, 0) {};
		\node [style=none] (3) at (1.5, 0.5) {\scriptsize $#3$};
		\node [style=none] (4) at (-1.5, 0.5) {\scriptsize $#2$};
	\end{pgfonlayer}
	\begin{pgfonlayer}{edgelayer}
		\draw (2.center) to (0);
		\draw (0) to (1.center);
	\end{pgfonlayer}
\end{tikzpicture}
}
\newcommand{\myeq}[1]{\mathrel{\overset{\makebox[0pt]{\mbox{\normalfont\tiny\sffamily #1}}}{=}}}
\def\distto{\mathrel{\mkern3mu  \vcenter{\hbox{$\scriptscriptstyle+$}}%
                    \mkern-12mu{\to}}}
\newcommand{\distcomp}{\,;}
\newcommand{\Prob}{\mathbb{P}}
\newcommand{\sem}[1]{\left\llbracket{#1}\right\rrbracket}
\newcommand{\condsem}[1]{\left\llbracket{#1}\right\rrbracket_\propto}
\newcommand{\diagsem}[1]{|#1|}
\newcommand{\Bend}[1]{#1^\flat}
\newcommand{\Bernoulli}[1]{\mathtt{flip}\;{#1}}
\newcommand{\Let}[3]{\texttt{let }#1 = #2 \texttt{ in } #3}
\newcommand{\Bayes}[2]{{#1}^\dagger_{#2}}
\newcommand{\All}[1]{\mathsf{all}_{#1}}
\newcommand{\Dist}{\mathcal{D}}
\newcommand{\SubDist}{\mathcal{D}_{\leq 1}}
\newcommand{\N}{\mathbb{N}}
\newcommand{\Bool}{\mathbb{B}}
\newcommand{\ProbCirc}{\mathsf{ProbCirc}}
\newcommand{\BoolCirc}{\mathsf{BoolCirc}}
\newcommand{\CausCirc}{\mathsf{CausCirc}}
\newcommand{\id}{\mathrm{id}}
\newcommand{\from}{\mathrel{:}\,}
\newcommand{\poi}{\,;\,}
\newcommand{\tns}{\otimes}
\newcommand{\adjto}{\,\lower1pt\hbox{$\dashv$}\,}
\newcommand{\Set}{\mathsf{Set}}
\newcommand{\Kl}[1]{{{Kl}}(#1)}
\newcommand{\fStoch}{\mathsf{FinStoch}}
\newcommand{\fProjStoch}{\mathsf{FinProjStoch}}
\newcommand{\fSubStoch}{\mathsf{FinSubStoch}}
\def\moverlay{\mathpalette\mov@rlay}
\def\mov@rlay#1#2{\leavevmode\vtop{%
\baselineskip\z@skip \lineskiplimit-\maxdimen
\ialign{\hfil$#1##$\hfil\cr#2\crcr}}}
\newcommand{\typ}{\mathrel{:}}
\newcommand{\typeJudgment}[3]{{ {#2} \,\typ\, {#3}}}
\newcommand{\sort}[2]{\ensuremath{#1 \to #2}}
\newcommand{\reductionRule}[2]{{\prooftree{\scriptstyle #1}\justifies{\scriptstyle #2}\endprooftree}}
\newcommand\twarr[2]{%
\mathrel{\mathop{\moverlay{\scriptstyle\xrightarrow{\,#1\,}\cr{\lower.2em\hbox{$\scriptstyle{}_{#2}$}}}}}}
\newcommand\twarrw[2]{%
\mathrel{\mathop{\moverlay{\scriptstyle\Longrightarrow\cr{\lower-.6em\hbox{$\scriptstyle{}_{#1}$}}
\cr{\lower.3em\hbox{$\scriptstyle{}_{#2}$}}}}}}
\newcommand{\dtransw}[2]{\raise1pt\hbox{$\;\twarrw{#1}{#2}\;$}}
\newcommand{\diagregexp}[1]{
\begin{tikzpicture}
	\begin{pgfonlayer}{nodelayer}
		\node [style=none] (0) at (1.5, 0) {};
		\node [style=rcoreg] (1) at (0, 0) {{\color{red} $e$}};
	\end{pgfonlayer}
	\begin{pgfonlayer}{edgelayer}
		\draw [red] (1) to (0.center);
	\end{pgfonlayer}
\end{tikzpicture}}
\begin{document}

\title{A Complete Axiomatisation of Equivalence for Discrete Probabilistic Programming}

\author{Robin Piedeleu}
\authornote{Both authors contributed equally to this research.}
\email{r.piedeleu@ucl.ac.uk}
\affiliation{%
  \institution{University College London}
  \country{United Kingdom}
}
\author{Mateo Torres-Ruiz}
\authornotemark[1]
\email{m.torresruiz@cs.ucl.ac.uk}
\affiliation{%
  \institution{University College London}
  \country{United Kingdom}
}

\author{Alexandra Silva}
\affiliation{%
 \institution{Cornell University}
 \city{Ithaca}
 \state{NY}
 \country{USA}}
\email{alexandra.silva@cornell.edu}

\author{Fabio Zanasi}
\affiliation{%
  \institution{University College London}
  \city{London}
  \country{United Kingdom}}
\affiliation{%
  \institution{University of Bologna}
  \country{Italy}}
\email{f.zanasi@ucl.ac.uk}

\renewcommand{\shortauthors}{Piedeleu, Torres-Ruiz, Silva, Zanasi}

\begin{abstract}
We introduce a sound and complete equational theory capturing equivalence of discrete probabilistic programs, that is, programs extended with primitives for Bernoulli distributions and conditioning, to model distributions over finite sets of events. To do so, we translate these programs into a graphical syntax of probabilistic circuits, formalised as string diagrams, the two-dimensional syntax of symmetric monoidal categories. We then prove a first completeness result for the equational theory of the conditioning-free fragment of our syntax. Finally, we extend this result to a complete equational theory for the entire language. Note these developments are also of interest for the development of probability theory in Markov categories: our first result gives a presentation by generators and equations of the category of Markov kernels, restricted to objects that are powers of the two-elements set.
\end{abstract}

\begin{CCSXML}
<ccs2012>
<concept>
<concept_id>10003752.10003790</concept_id>
<concept_desc>Theory of computation~Logic</concept_desc>
<concept_significance>500</concept_significance>
</concept>
<concept>
<concept_id>10003752.10010124.10010131.10010137</concept_id>
<concept_desc>Theory of computation~Categorical semantics</concept_desc>
<concept_significance>500</concept_significance>
</concept>
</ccs2012>
\end{CCSXML}

\ccsdesc[500]{Theory of computation~Logic}
\ccsdesc[500]{Theory of computation~Categorical semantics}
\keywords{probabilistic circuits, string diagrams, complete axiomatisation}

\maketitle

\section{Introduction}
\label{sec:introduction}

Probabilistic programming languages (PPLs) extend standard languages with two added capabilities: drawing random values from a given distribution, and conditioning on particular variable values through observations. 
These constructs allow the programmer to define complex statistical models and obtain the probability of an event according to the distribution specified by the program, a task known as \emph{inference}. This process can also be understood in Bayesian terms: first-class distributions define a prior over the program's declared variables, while observations record the likelihood of specific values. Inference then involves computing the posterior distribution over some chosen variables.

In this paper, we focus on \emph{discrete} probabilistic programs, that is, programs whose associated probability distributions range over a finite set of outcomes. While many existing PPLs manipulate continuous random variables, this generally forces them to approximate the posterior distribution specified by a given program, typically by sampling from it \cite{anglicanpaper,hakarupaper,lazypplpaper}. Discrete probabilistic programs on the other hand lend themselves to \emph{exact} inference, where instead of approximating or sampling from the posterior distribution, it is possible to obtain an exact representation of it. 

Several existing languages support this style of inference for discrete probabilistic programs~\cite{dicepaper,bernoulliprobpaper,psipaper}. In this work, we start from a prototypical discrete PPL that incorporates a primitive for Bernoulli distributions (\texttt{flip}) and observations (\texttt{observe}). Beyond its probabilistic features, its syntax is that of a simple first-order, non-recursive functional programming language with \texttt{let}-expressions for variable declaration and branching with \texttt{if-then-else}. In this respect, it is similar to \textit{Dice}~\cite{dicepaper} or the \emph{CD-calculus}~\cite{stein2023probabilistic}. While it may appear limited, it is in fact sufficiently expressive to specify arbitrary distributions over tuples of Booleans and thus, to encode arbitrary discrete distributions.
The following examples illustrate how to express simple models as probabilistic programs. 
\begin{example}
 Assume that we have an opaque urn that contains one ball which we cannot see and is equally likely to be either red or blue ($\texttt{firstball } = \texttt{flip }0.5$). We place a red ball into the urn ($\texttt{redball} = \texttt{true}$), and draw at random one of the two balls now contained within it ($\texttt{draw} = \texttt{if flip }0.5 \texttt{ then redball else firstball}$). We are interested in the following inference problem: given that the drawn ball is red ($\texttt{observe draw}$), what is the probability that the first ball in the urn was also red ($\dots\texttt{in firstball}$)? This puzzle can be formalised by the following program, where \texttt{true} encodes a ball being red and \texttt{false} encodes a ball being blue:
  \begin{lstlisting}[xleftmargin=.2\textwidth,basicstyle=\ttfamily\small, escapeinside={<@}{@>}]
let firstball = <@\color{violet}flip 0.5@> in 
let redball = <@\color{red}true@> in 
let draw = <@\color{blue}if flip 0.5 then redball else firstball@> in 
let _ = <@\color{orange}observe draw@> in firstball
  \end{lstlisting}
\end{example}
\begin{example}\label{ex:von-neumann}
Imagine that you are betting on the outcome of a coin flip and that you cannot trust that the coin is fair. You can always agree to use the following protocol, due to John Von Neumann, to simulate a fair coin: toss the coin twice; if the results are the same, discard them and start over; if they are different, keep the first outcome and discard the second. This process can be translated as the following program, where $p$ is an arbitrary parameter strictly between $0$ and $1$, representing the unknown probability that the coin lands on heads.
  \begin{lstlisting}[xleftmargin=.2\textwidth,basicstyle=\ttfamily\small, escapeinside={<@}{@>}]
let first = <@\color{violet}flip p@> in 
let second = <@\color{red}flip p@> in 
let compare = <@\color{blue}first xor second@> in 
let _ = <@\color{orange}observe compare@> in first
  \end{lstlisting}
\end{example}
As these examples illustrate, probabilistic programs are a compact way to write and communicate probabilistic models: they are modular, compositional, and understood by a large community of practitioners, who can then leverage functional abstraction to build and share increasingly complicated models.

A probabilistic program can often be replaced by an equivalent program whose corresponding inference task is simpler to solve. %
Here, we say that two probabilistic programs are \emph{equivalent} when they define the same distribution.
The Von Neumann trick to simulate a fair coin from a biased one gives a first example of a non-trivial program equivalence: the program of Example~\ref{ex:von-neumann} is equivalent to the much simpler program $\texttt{flip }0.5$. We give another example of equivalence below, for two \emph{partial} programs, \emph{i.e.}, programs containing free variables.
\begin{example}
\label{ex:equivalence-frob}
The partial program below, with free variable \texttt{x}, is equivalent to the identity function which simply returns \texttt{x}:
  \begin{lstlisting}[xleftmargin=.2\textwidth,basicstyle=\ttfamily\small, escapeinside={<@}{@>}]
let y = <@\color{black}flip 0.5@> in  
let compare = <@\color{black}x xor y@> in 
let _ = <@\color{black}observe (not compare)@> in y
  \end{lstlisting}
\end{example}
Program equivalence is essential to ensure that the behaviour of programs remain consistent across different implementations, optimisations, and refinements~\cite{katpaper,lawsofprogramming}. In probabilistic programming, as in standard programming, equivalence checking is crucial for building reliable, efficient, and accurate models, as well as inference algorithms. However, probabilistic features greatly complicate the task of verifying program equivalence.

\paragraph{Methodology} We tackle the problem of program equivalence by presenting a complete equational theory for probabilistic programs. To do so, we translate the simple PPL syntax above into an equally expressive diagrammatic calculus of \textit{probabilistic circuits}.  

We start from plain Boolean circuits with standard logical gates, and extend them with two additional gates that incorporate the probabilistic features of PPLs: one allowing us to draw from a Bernoulli distribution; %
 another to condition on two variables having the same value. 

While circuits are usually treated as graphs and used as informal visual aids, the circuits in this paper are given as a formal syntax of \emph{string diagrams}, the graphical language of (symmetric) monoidal categories~\cite{selinger2010}. In recent years, string diagrams have found their way into diverse scientific fields:
well-known examples include digital circuits in quantum computing~\cite{quantumprocessesbook}, electrical engineering \cite{electriccircuits}, control engineering \cite{categoriescontrol}, and more \cite{petrinets,bayesiannetworks}. %

Formalising circuits as string diagrams has several benefits. First, it allows us to reason about circuits as bona-fide algebraic objects with operations of sequential and parallel compositions. Moreover, we are able to equip them with a formal semantics defined inductively over the diagrammatic syntax, assigning to each circuit the probability distribution it is intended to represent. As we will see, this defines a \emph{compositional} interpretation automatically. 

Finally, we
define an equational theory which fully characterises semantic equality of circuits, the main result of our paper. In other words, we prove that whenever two circuits denote the same distribution, we can derive this fact using purely equational reasoning on the circuits themselves. For this, we proceed in two steps: first, we give a complete equational theory for the conditioning-free fragment of our syntax; second, we extend the previous equational theory to a complete equational theory for the whole language.

Note that these results can be seen as part of a broader program aiming to formulate probability theory in \emph{Markov categories}~\cite{fritzmarkovcats}. In particular, our first main result can also be seen as a presentation in terms of generators and equations of the full monoidal subcategory of the category of Markov kernels/stochastic maps ~\cite{giry1982categorical} on objects that are powers of the two-element set. This presentation is thus one further step on the way to developing an axiomatic perspective on probability theory in Markov categories.

\paragraph{Outline and main contributions} 
In Section~\ref{sec:syntax-semantics}, we give the formal syntax of probabilistic circuits with which we will work throughout the paper, both as a standard term syntax and as a two-dimensional syntax of string diagrams; we explain its relation with a conventional language for discrete probabilistic programming such as the one we have used in the previous examples; finally, we equip it with a compositional semantics which captures the probability distribution the circuits are intended to represent. In Section~\ref{sec:axiomatisation-caus-circ} we introduce and prove the completeness of an equational theory for the conditioning-free fragment of our circuits. In Section~\ref{sec:observe} we extend this equational theory to the full syntax, and prove it complete, thus obtaining a full axiomatisation of discrete probabilistic program equivalence. In conclusion, we compare our contributions with related work and explore some directions for future work.

\section{Syntax and Semantics}
\label{sec:syntax-semantics}

\subsection{Syntax}
\label{sec:syntax}

To define our syntax, we will proceed in two steps: we will start with a simple grammar whose constants are circuit primitives with rules for their sequential and parallel composition, before moving to a two-dimensional representation of the same circuits as string diagrams. Finally, we will compare our graphical language with a more conventional PL syntax, and give a translation of the latter into the former.

\subsubsection{Term language.}
\label{sec:term-syntax} 
The syntax $\ProbCirc$ is given by the following simple grammar:
\begin{align}
c \; ::= &\; \Bcounit \mid \Bcomult \mid \Andgate \mid \Notgate
 \mid \idzero \mid \idone \mid \sym   \mid c\poi c \mid c \tns c  \label{eq:bool-syntax}\\
& \mid \Flip{p} \quad\forall p\in [0,1] \label{eq:causal-syntax}\\
& \mid \Bmult \label{eq:full-syntax}
\end{align}
Though the constants of our language are depicted as diagrams, %
we treat them as symbols for the moment. Soon, we will consider all $\ProbCirc$-terms as \emph{string diagrams} and view the two binary operations of \emph{sequential} ($c \poi d$) and \emph{parallel} ($c \tns d$) composition as those of a monoidal category~\cite{selinger2010}.

We will refer to the syntax which comprises only the rules and constants of line~\eqref{eq:bool-syntax} as $\BoolCirc$, and the syntax comprising the rules and constants of line~\eqref{eq:bool-syntax} and~\eqref{eq:causal-syntax} as $\CausCirc$.

Note that our language does not use variables, nor do we need to define alpha-equivalence or substitution. On the other hand, we do need simple typing rules: a type is a pair $(m,n)\in \N\times \N$ which we write as $\sort{m}{n}$; we write the judgment that $c$ has type $\sort{m}{n}$ as $c\colon\sort{m}{n}$. From now on, we will only consider typeable terms, according to the rules of Fig.~\ref{fig:typing-rules}.
\begin{figure}
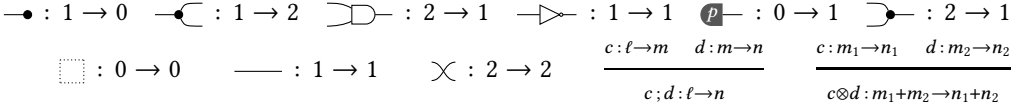

\begin{align*}
\typeJudgment{}{\Bcounit}{\sort{1}{0}} \quad
\typeJudgment{}{\Bcomult}{\sort{1}{2}} \quad
\typeJudgment{}{\Andgate}{\sort{2}{1}} \quad
\typeJudgment{}{\Notgate}{\sort{1}{1}} \quad
 \typeJudgment{}{\Flip{p}}{\sort{0}{1}} \quad
\typeJudgment{}{\Bmult}{\sort{2}{1}}
\\
 \typeJudgment{}{\idzero}{\sort{0}{0}} \qquad
 \typeJudgment{}{\idone}{\sort{1}{1}} \qquad
 \typeJudgment{}{\sym}{\sort{2}{2}} \qquad
\reductionRule{ \typeJudgment{}{c}{\sort{\ell}{m}} \quad \typeJudgment{}{d}{\sort{m}{n}} }
{ \typeJudgment{}{c\poi d}{\sort{\ell}{n}} }\qquad
\reductionRule{ \typeJudgment{}{c}{\sort{m_1}{n_1}} \quad \typeJudgment{}{d}{\sort{m_2}{n_2}} }
{ \typeJudgment{}{c \tns d}{\sort{m_1+m_2}{n_1+n_2}} }
\end{align*}
\caption{Typing rules for $\ProbCirc$ terms.}\label{fig:typing-rules}
\end{figure}
The type of a term $c$ corresponds to the number of open wires on each side of the diagram depicting $c$. A simple induction confirms the uniqueness of types: if $c \colon \sort{m}{n}$ and $c \colon \sort{m'}{n'}$, then $m = m'$ and $n = n'$. In particular, a circuit with no input (\emph{resp.} output) wires has type $0\to n$ (\emph{resp.} $m\to 0$). 

Just like traditional Boolean circuits, our terms are intended to represent networks of logical gates connected by wires carrying Boolean values. 
The purely Boolean fragment of the syntax ($\BoolCirc$) includes standard gates for logical conjunction, $\Andgate$, which outputs a 1 only when its two inputs are set to 1; logical negation, $\Notgate$, which negates its input; a broadcasting gate, $\Bcomult$, that copies its input to two output wires; and a terminating wire, $\Bcounit$, that discards its input. Beyond Boolean circuits, there are two additions. Firstly, $\CausCirc$ introduces $\Flip{p}$ which is the only probabilistic component of our syntax, emitting 1 with probability $p$ and 0 with probability $1-p$. Note that $\Flip{p}$ has \emph{no input} and that the parameter $p$ is simply a label which is not accessible to the rest of the circuit. Thus we have a generator for each $p\in[0,1]$.  Finally, $\ProbCirc$ adds a component for explicit conditioning, $\Bmult$, which constrains its two input wires to carry equal values and outputs this value. A similar primitive has appeared in previous work~\cite{stein2023probabilistic}. As we will see, this is the only component whose behaviour is not causal, in the sense that it constrains the values of its input. This is why we call \emph{causal} circuits the conditioning-free fragment of our syntax ($\CausCirc$).  %
Intuitively, these circuits specify a distribution over all possible values of their output wires for any value of their input wires. We will assign them a formal semantics in Section~\ref{sec:semantics}.

\begin{example}\label{ex:terms}
From our generators, we can define other standard Boolean gates as syntactic sugar, using sequential and parallel composition. We will need \texttt{OR}-gates, defined as
\[\Orgate := (\Notgate \tns \Notgate)\poi \Andgate \poi \Notgate\]
and multiplexers/\texttt{if-then-else} gates as:
\[\Ifgate :=(\Bcomult \tns \idone \tns \idone)\poi (\Notgate \tns \sym\tns \idone)\poi (\Andgate \tns \Andgate)\poi \Orgate\]
\end{example}

\subsubsection{From terms to string diagrams.}
\label{sec:term-to-diagrams}
As is clear from the previous examples, $\ProbCirc$-terms are not easy to parse for the human eye and come with lots of redundant information about how a given term has been put together. Moreover, our main goal is to obtain an equational axiomatisation of semantic equivalence of $\ProbCirc$-terms; when we define their semantics formally, we will see that their interpretation satisfies all the axioms of \emph{symmetric monoidal categories} (SMCs). This licences us to move from terms to \emph{string diagrams}, a common graphical representation of morphisms in SMCs~\cite{selinger2010,maclane1971}. We recall some of the basics below, though we refer the reader to a more suitable introduction for more information on the topic~\cite{introdiagrams}.
Formally, a string diagram on $\ProbCirc$ is defined as an equivalence class of $\ProbCirc$-terms, where the quotient is taken with respect to the reflexive, symmetric and transitive closure of the following axioms:
\[(c\poi d)\poi e = c\poi (d\poi e) \quad c_1\otimes (c_2\otimes c_3) = (c_1\otimes c_2)\otimes c_3\quad  (c_1\otimes c_2)\poi (d_1\otimes d_2)=(c_1\poi d_1)\otimes (c_2\poi d_2)\] 
\[c\poi \idone = c = \idone \poi c\qquad \idzero\otimes c =c= c\otimes \idzero\]
\[(\idone\otimes c)\poi\sym = \sym\poi (c\otimes \idone)\qquad\sym\poi \sym = \idone\otimes \idone\]
where $c, d, e$ and $c_i$, $d_i$ range over $\ProbCirc$-terms of the appropriate type. These axioms state that sequential and parallel composition are associative and unital, that they satisfy a form of interchange law, and that the wire crossings behave as our topological intuition would expect---see their diagrammatic translation~\eqref{eq:smc} below.

Let us establish some conventions. We will refer to string diagrams representing $\BoolCirc$-terms as \emph{Boolean circuits}, those representing $\CausCirc$-terms as \emph{causal circuits}, and arbitrary string diagrams for $\ProbCirc$-terms simply as \emph{circuits}.
We will depict $n$ parallel wires, for some natural number $n$, by a labelled-wire $\idx{n}$ (and omit the label when $n=1$). Moreover, we will depict a generic circuit $c\from m\to n$ by a labelled black box, and a Boolean circuit $b\from m\to n$ by a labelled white box:
\[\probcircuit{c}{m}{n} = 
\InputIfFileExists{generic-circuit-m-n.tikz}{}{\input{./tikz/generic-circuit-m-n.tikz}}
\qquad\qquad \boolcircuit{b}{m}{n} = 
\InputIfFileExists{boolean-circuit-m-n.tikz}{}{\input{./tikz/boolean-circuit-m-n.tikz}}
\]
Then, we represent $c\poi d$ as sequential composition, depicted horizontally from left to right (note that the types of the intermediate wires have to match), and $c\tns d$ as parallel composition, depicted vertically from top to bottom:
\[\probcircuit{\scriptstyle{c\poi d}}{\ell\;}{\; n} \;=\;
\InputIfFileExists{seq-compose.tikz}{}{\input{./tikz/seq-compose.tikz}}

\qquad 
\InputIfFileExists{c1xc2.tikz}{}{\input{./tikz/c1xc2.tikz}}
 \;=\;
\InputIfFileExists{par-compose.tikz}{}{\input{./tikz/par-compose.tikz}}
\]
We omit the labels on the wires for readability when they can be inferred unambiguously from the context.

Intuitively, our string diagrams can be formed much like conventional circuits, by wiring the generators of~\eqref{eq:bool-syntax}-\eqref{eq:full-syntax} in sequence or in parallel, crossing wires (with $\sym$) and making wires as long as we want (with $\idone$). As string diagrams, the axioms of SMCs now become near-tautologies:
\begin{equation}\label{eq:smc}
\begin{gathered}
\!\!\!\!\!{
\InputIfFileExists{smc/sequential-associativity.tikz}{}{\input{./tikz/smc/sequential-associativity.tikz}}
 \!=\! 
\InputIfFileExists{smc/sequential-associativity-1.tikz}{}{\input{./tikz/smc/sequential-associativity-1.tikz}}
} \quad \scalebox{1}{
\InputIfFileExists{smc/parallel-associativity.tikz}{}{\input{./tikz/smc/parallel-associativity.tikz}}
 \!\!\!=\!\!\! 
\InputIfFileExists{smc/parallel-associativity-1.tikz}{}{\input{./tikz/smc/parallel-associativity-1.tikz}}
}\quad  \scalebox{1}{
\InputIfFileExists{smc/interchange-law.tikz}{}{\input{./tikz/smc/interchange-law.tikz}}
 \!\!=\!\!
\InputIfFileExists{smc/interchange-law-1.tikz}{}{\input{./tikz/smc/interchange-law-1.tikz}}
 }
 \\
\scalebox{1}{
\InputIfFileExists{smc/unit-right.tikz}{}{\input{./tikz/smc/unit-right.tikz}}
 = \diagbox{c}{}{} = 
\InputIfFileExists{smc/unit-left.tikz}{}{\input{./tikz/smc/unit-left.tikz}}
}
\quad
  \scalebox{1}{ 
\InputIfFileExists{smc/parallel-unit-above.tikz}{}{\input{./tikz/smc/parallel-unit-above.tikz}}
 = \diagbox{c}{}{} =  
\InputIfFileExists{smc/parallel-unit-below.tikz}{}{\input{./tikz/smc/parallel-unit-below.tikz}}
}
\\
\scalebox{1}{
\InputIfFileExists{smc/sym-natural.tikz}{}{\input{./tikz/smc/sym-natural.tikz}}
= 
\InputIfFileExists{smc/sym-natural-1.tikz}{}{\input{./tikz/smc/sym-natural-1.tikz}}
}
\qquad
\scalebox{1}{
\InputIfFileExists{smc/sym-iso.tikz}{}{\input{./tikz/smc/sym-iso.tikz}}
 = 
\begin{tikzpicture}
	\begin{pgfonlayer}{nodelayer}
		\node [style=none] (0) at (2, -0.75) {};
		\node [style=none] (1) at (-2, -0.75) {};
		\node [style=none] (2) at (-2, 0.5) {};
		\node [style=none] (3) at (2, 0.5) {};
	\end{pgfonlayer}
	\begin{pgfonlayer}{edgelayer}
		\draw (0.center) to (1.center);
		\draw (3.center) to (2.center);
	\end{pgfonlayer}
\end{tikzpicture}}
}
\end{gathered}
\end{equation}
If we think of the dotted frames as two-dimensional brackets, these identities tell us that the specific bracketing of a term does not matter. This is precisely the advantage of working with string diagrams, rather than terms: they free us from some of the bureaucracy of terms, allowing us to focus on the more structural aspects of our syntax, \emph{i.e.} on how the different components making up a term are wired together. This means in particular that wire crossings obey laws that are topologically obvious, as shown in~\eqref{eq:smc}. Finally, string diagrams give us the best of both worlds, as they remain an inductively-specified structure on which we can reason by induction, unlike the monolithic representation of traditional circuits as graphs. 

\begin{example}\label{eq:or-if-def}
Coming back to the \texttt{OR}-gate and multiplexers defined in Example~\ref{ex:terms}, we have
\begin{equation*}
\Orgate = 
\InputIfFileExists{or-def.tikz}{}{\input{./tikz/or-def.tikz}}
\quad\qquad  \Ifgate = 
\InputIfFileExists{if-def.tikz}{}{\input{./tikz/if-def.tikz}}

\end{equation*}
Even though we have not defined the formal semantics yet, their interpretation is already clearer from their depiction as circuits: the first will behave like a standard logical \texttt{OR}-gate, which outputs the disjunction of its two inputs, and the second like a multiplexer or \texttt{if-then-else}-gate, which outputs either its second input if the first is set to $1$ or the third input if the first is set to $0$. We will call the first input the \emph{guard}, and the next two, the \texttt{then}- and \texttt{else}-\emph{branch}, respectively.

We also define generalised gates for $n$ wires, as syntactic sugar, by induction: for \texttt{AND}-gates and copying nodes, let
\begin{equation*}
\label{eq:general-gates}

\InputIfFileExists{and-1xn.tikz}{}{\input{./tikz/and-1xn.tikz}}
\; := \;
\InputIfFileExists{and-1xn-def.tikz}{}{\input{./tikz/and-1xn-def.tikz}}
\qquad 
\InputIfFileExists{copy-1xn.tikz}{}{\input{./tikz/copy-1xn.tikz}}
 \;:=\;
\InputIfFileExists{copy-1xn-def.tikz}{}{\input{./tikz/copy-1xn-def.tikz}}

\end{equation*}
with base cases the corresponding generators; we define generalised gates for $\Bcounit, \Notgate$, $\Bmult$, etc. in a similar way.
\end{example}

\begin{example}
  Consider a simple setting where we flip a coin $x$ with a $0.1$ chance of landing heads. If $x$ is heads, we flip a new coin $y_1$ with a $0.2$ chance of heads; otherwise, we flip a third coin $y_2$ with a $0.3$ chance of heads. Finally, based on the outcome of $y_i$ (for either $i=1$ or $i=2$), we choose between two more coins $z_1$ or $z_2$: if $y_i$ is heads, $z_1$ has a $0.4$ chance of heads; otherwise, we flip $z_2$, which is a fair coin. Now, imagine that we are interested in the probability of the last coin flip landing on heads. This is what the following causal circuit models:
  \begin{center}
    
\InputIfFileExists{example-01.tikz}{}{\input{./tikz/example-01.tikz}}

  \end{center}
  \label{fig:example-01}
\end{example}

\subsubsection{Comparing with a conventional PL syntax.}
\label{sec:compare-with-PPL}
While our syntax is purposefully rendered graphically, 
there are existing PLs specialised in discrete probabilistic programming that use a more conventional syntax.
Fig.~\ref{fig:typing-rules} gives the syntax and typing rules of  
a simple (first-order) functional probabilistic language,
which is as expressive as our circuits: the judgment $\Gamma \vdash e\colon \tau$ denotes a well-typed partial program $e$ in context $\Gamma$, which is a list of pairs $x:\tau$ of free variables with their types, and types are simply tuples of Booleans. The language's syntax mimics closely that of a probabilistic PL called \textit{Dice}~\cite{dicepaper}. A similar language also appears in~\cite[Section 6.3]{stein2023probabilistic} under the name of the \emph{CD-calculus}---indeed, it is a natural domain-specific language for discrete probabilistic programming. Let us compare it with our diagrammatic calculus.

\begin{figure}
  \begin{equation*}
\begin{minipage}{.2\linewidth}
      \centering
      \infrule{}{\Gamma, x:\tau,\Delta\vdash x:\tau}
    \end{minipage} 
    \begin{minipage}{.4\linewidth}
      \centering
      \infrule{\Gamma\vdash e_1: \tau_1 \andalso \Gamma\vdash e_2:\tau_2}
              {\Gamma\vdash (e_1,e_2):\tau_1\times \tau_2}
    \end{minipage} 
        \begin{minipage}{.2\linewidth}
      \centering
      \infrule{\Gamma\vdash e:\tau_1\times\tau_2}{\Gamma\vdash\mathtt{fst}\;e:\tau_1}
    \end{minipage}
        \begin{minipage}{.2\linewidth}
      \centering
      \infrule{\Gamma\vdash e:\tau_1\times\tau_2}{\Gamma\vdash\mathtt{snd}\;e:\tau_2}
    \end{minipage}       
\end{equation*} 
\begin{equation*}
    \begin{minipage}{0.4\linewidth}
      \centering
      \infrule{ \Gamma\vdash g:\Bool\quad\Gamma\vdash e_0:\tau\quad \Gamma\vdash e_1:\tau}{\Gamma\vdash\mathtt{if}\;g\;\mathtt{then}\;e_1\;\mathtt{else}\;e_0:\tau}
    \end{minipage}
        \begin{minipage}{.2\linewidth}
      \centering
      \infrule{}{\Gamma\vdash\Bernoulli{p}:\Bool}
    \end{minipage}
      \begin{minipage}{.2\linewidth}
      \centering
      \infrule{}{\Gamma\vdash\mathtt{true}:\Bool}
    \end{minipage}
      \begin{minipage}{.2\linewidth}
      \centering
      \infrule{}{\Gamma\vdash\mathtt{false}:\Bool}
    \end{minipage}
\end{equation*}
\begin{equation*}
    \begin{minipage}{.5\linewidth}
      \centering
      \infrule{\Gamma\vdash e_0: \tau_0 \andalso \Gamma,x:\tau_0\vdash e_1:\tau}
              {\Gamma\vdash\mathtt{let}\;x=e_0\;\mathtt{in}\;e_1:\tau}
    \end{minipage} 
    \begin{minipage}{.5\linewidth}
      \centering
      \infrule{}
              {\Gamma,x:\tau,\Delta\vdash\mathtt{observe}\; x\;:\tau}
    \end{minipage} 
\end{equation*}
\begin{equation*}
    \begin{minipage}{0.5\linewidth}
      \centering
      \infrule{ \Gamma,x:\tau,\Delta \vdash e':\tau'\quad (f \text{ fresh})}{\Gamma,x:\tau,\Delta\vdash\mathtt{fun}\, f(x:\tau)\, \{e'\}:\tau'}
    \end{minipage}
        \begin{minipage}{0.5\linewidth}
      \centering
      \infrule{ \Gamma\vdash\mathtt{fun}\; f(x:\tau) \;\{e'\}:\tau' \quad \Gamma\vdash e:\tau}{\Gamma\vdash f(e):\tau'}
    \end{minipage}
\end{equation*}
  \caption{Typing rules for \textit{Dice}~\cite{dicepaper}, a simple language for discrete probabilistic programming. Metavariable $f$ ranges over function names, $x$ over variable names, and $p$ over reals in the range $[0,1]$.}\label{fig:typing-rules}
\end{figure}

At the level of the purely Boolean fragments, Dice and our circuits are clearly equally expressive, though Dice uses \texttt{if-then-else} as primitive, rather than conjunction and negation. This is only a superficial difference, since both are well known to constitute universal sets of gates for Boolean functions. Instead of wires, Dice's syntax uses \emph{variables} to encode circuits inputs and \texttt{let}-expressions to encode circuit composition. Note that the latter are not needed to represent plain Boolean functions---the purely algebraic syntax of Boolean algebra suffices for this purpose~\cite{boole?}. However, once we add probabilistic effects, they are necessary, as we now explain.

Dice's first probabilistic primitive is $\Bernoulli{p}$, which represents a coin flip with probability $p\in[0,1]$ of landing on \texttt{true}, like our own $\Flip{p}$. Note that simply adding $\Bernoulli{p}$ to the syntax of Boolean algebra would give a syntax that is insufficient to reuse (or ignore) the outcome of a $\Bernoulli{p}$. 
Indeed there would be no way of binding its outcome to some variable for later use, contrary to our diagrammatic syntax, which can broadcast the output of $\Flip{p}$ to multiple sub-circuits using $\Bcomult$ (or discard it with $\Bcounit$). This is why \textit{Dice} introduces \texttt{let}-expressions into the syntax: to mimic the non-linear use of probabilistic outcomes. Concretely, $\Let{x}{e_0}{e_1}$ binds some term $e_0$ to the variable $x$ for its use in $e_1$. 

Dice's second probabilistic primitive is $\texttt{observe } a$, interpreted as evidence that $a$ is true; it has the effect of conditioning the posterior distribution the program represents on the $a$ being true (or, of setting to zero the probability of executions on which $a$ is not true). Our circuits have a primitive which plays a similar role: $\Bmult$ constrains its two inputs to have equal value. Thus, $\texttt{observe } a$ can be encoded by setting one of the inputs of $\Bmult$ to $\Flip{1}$ (true) and composing the remaining wire with the encoding of $a$ as a circuit.

Based on this discussion, we can devise a more formal translation of the symbolic PL into circuits. 
A (partial) program $\Gamma \vdash e\colon \tau$ will now be translated into a circuit $\diagsem{e}\from |\Gamma|\to |\tau|$ where $|\Gamma|$ is simply the length of the context $\Gamma$ and, similarly, $|\tau|$ is the length of the corresponding tuple of Booleans. In other words, a partial program with $m$ free variables and whose main expression has type $\Bool^n$ represents a circuit with $m$ input wires and $n$ output wires. The translation is then given inductively on the typing rules in Fig.~\ref{fig:pl-translation}. Note that it makes use of the multiplexer and $n$-ary gates defined earlier as syntactic sugar.

\begin{example}
  We provide an identical scenario to that presented in Example 1, this time using Dice's syntax.
  
  \begin{minipage}{0.7\linewidth}
    \begin{lstlisting}[mathescape=true, basicstyle=\ttfamily\small]
      let x = flip 0.1  in
      let y = if x then flip 0.2 else flip 0.3 in
      let z = if y then flip 0.4 else flip 0.5 in 
      z
    \end{lstlisting}
  \end{minipage}
\begin{minipage}{0.3\linewidth}
\[
\InputIfFileExists{example-causal.tikz}{}{\input{./tikz/example-causal.tikz}}
\]
\end{minipage}
\end{example}

\begin{figure}
  \begin{equation*}
    
\InputIfFileExists{var-intro-diag.tikz}{}{\input{./tikz/var-intro-diag.tikz}}
\;=\;  
\InputIfFileExists{variable.tikz}{}{\input{./tikz/variable.tikz}}
\qquad \quad      
\InputIfFileExists{let-intro-diag.tikz}{}{\input{./tikz/let-intro-diag.tikz}}
\; =\; 
\InputIfFileExists{let.tikz}{}{\input{./tikz/let.tikz}}

  \end{equation*}
  \begin{equation*}
      
\InputIfFileExists{if-then-else-diag.tikz}{}{\input{./tikz/if-then-else-diag.tikz}}
 \; =\;
\InputIfFileExists{if-then-else-intro.tikz}{}{\input{./tikz/if-then-else-intro.tikz}}

  \end{equation*}
      \begin{equation*}
  
\InputIfFileExists{tuple-intro-diag.tikz}{}{\input{./tikz/tuple-intro-diag.tikz}}
\;=\;
\InputIfFileExists{tuple-intro.tikz}{}{\input{./tikz/tuple-intro.tikz}}

  \end{equation*}
  \begin{equation*}
      
\InputIfFileExists{fst-projection-diag.tikz}{}{\input{./tikz/fst-projection-diag.tikz}}
\;=\;
\InputIfFileExists{fst-projection.tikz}{}{\input{./tikz/fst-projection.tikz}}

\qquad\qquad
        
\InputIfFileExists{snd-projection-diag.tikz}{}{\input{./tikz/snd-projection-diag.tikz}}
\;=\;
\InputIfFileExists{snd-projection.tikz}{}{\input{./tikz/snd-projection.tikz}}

  \end{equation*} 
  \begin{equation*}
      
\InputIfFileExists{flip-intro-diag.tikz}{}{\input{./tikz/flip-intro-diag.tikz}}
\; =\;
\InputIfFileExists{del-flip-p.tikz}{}{\input{./tikz/del-flip-p.tikz}}
\qquad \quad 
\InputIfFileExists{observe-diag.tikz}{}{\input{./tikz/observe-diag.tikz}}
\;=\;
\InputIfFileExists{observe-x.tikz}{}{\input{./tikz/observe-x.tikz}}

  \end{equation*}
    \begin{equation*}
  
\InputIfFileExists{function-intro-diag.tikz}{}{\input{./tikz/function-intro-diag.tikz}}
\;=\;
\InputIfFileExists{function-intro.tikz}{}{\input{./tikz/function-intro.tikz}}

  \end{equation*}
\begin{equation*}
  
\InputIfFileExists{function-application-diag.tikz}{}{\input{./tikz/function-application-diag.tikz}}
\;=\;
\InputIfFileExists{function-application.tikz}{}{\input{./tikz/function-application.tikz}}

  \end{equation*}
  \caption{Diagrammatic translation of  \textit{Dice}. The mapping $\diagsem{\cdot}$ is defined inductively: on types by $\diagsem{\tau_1\times \tau_2} = \diagsem{\tau_1}+\diagsem{\tau_2}$ with base case $\diagsem{\Bool} = 1$, on contexts by $\diagsem{\Gamma,x:\tau} = \diagsem{\Gamma}+\diagsem{\tau}$ with base case $\diagsem{\varnothing} = 0$, and on terms by the rules above.}\label{fig:pl-translation}
\end{figure}

\begin{example}
We come back to the urn puzzle of the introduction. The corresponding circuit is
\[
\InputIfFileExists{urn-puzzle.tikz}{}{\input{./tikz/urn-puzzle.tikz}}
\]
Notice that the axioms of SMCs are already capturing non-trivial program transformations: for example, the first two \texttt{let}-expressions can be swapped without changing the resulting diagram:
\[
\InputIfFileExists{urn-puzzle-swap.tikz}{}{\input{./tikz/urn-puzzle-swap.tikz}}
\]
This is one of the reasons we decide to work with a diagrammatic syntax---the topological transformations of the two-dimensional representation absorbs some of the burden of reasoning about program equivalence.
\end{example}
\begin{example}
Von Neumann's trick from Example~\ref{ex:von-neumann} can be translated into the following circuit:
\[
\InputIfFileExists{von-neumann-xor.tikz}{}{\input{./tikz/von-neumann-xor.tikz}}
\]
where the blue gate is the usual exclusive \texttt{OR} (which can be encoded in the usual way using the other Boolean operations).
In fact, this turns out to be equivalent to the even simpler circuit below:
\[
\InputIfFileExists{von-neumann.tikz}{}{\input{./tikz/von-neumann.tikz}}
\]
Intuitively, this circuit flips two coins with the same parameter $p$ as in the previous circuit, negates the outcome of one of them, and conditions on the two resulting values to be equal and asks for the probability that the first flip gave true. 
\end{example}

\subsection{Semantics}
\label{sec:semantics}

In this paper,  `subdistribution' will always refer to a \emph{finitely-supported discrete probability  subdistribution}, that is, a map $\varphi\from X \to [0,1]$ where $X$ is finite and such that  $\sum_{x\in X} \varphi(x)\leq 1$. A \emph{distribution} is then a subdistribution $\varphi$ for which $\sum_{x\in X} \varphi(x)=1$. We will write (sub)distributions as  formal sums $\sum_x \varphi(x) \ket{x}$, omitting the elements of $X$ for which $\varphi(x)=0$, and call $\Dist{X}$ (resp. $\SubDist{X}$) the set of all distributions (resp. subdistributions) over some finite set $X$. 

\subsubsection{The symmetric monoidal category of (sub)stochastic maps.} \label{sec:smc-fstoch}
Our circuits do not map inputs to outputs in a deterministic way. Instead, for each input, they specify a subdistribution over their outputs. Thus, we will interpret them as \emph{substochastic maps}:
maps $f:X\to\SubDist Y$, which we write as $f\from X\distto Y$. 
Currying the $Y$ component, we can think of $f$ as a map $f(-|-)\from Y\times X\to [0,1]$ or, equivalently, as a matrix whose columns sum to at most $1$. We will switch freely between the two perspectives depending on which is more convenient to manipulate. As with distributions, a substochastic map is \emph{stochastic} when $\sum_y f(y|x) = 1$ for all $x\in X$ (or when its columns sum to $1$). Notice that any map $f\from X\to Y$ can be promoted 
to a stochastic map $X\distto Y$  
given by $x\mapsto \ket{f(x)}$ for all $x\in X$.

We can also interpret $f\from X\distto Y$ as specifying a \emph{conditional} subdistribution: $f(-|x)$ gives a subdistribution over $Y$ for each $x\in X$.  The composition $f\distcomp g$ of two substochastic maps $f:X\distto Y,g\from Y\distto Z$ 
is given by summing over the intermediate variable 
as follows:
\[(f\distcomp g)(z|x) = \sum_{y\in Y}g(z|y)f(y|x)\]
Notice that, if we think of $f$ and $g$ as substochastic matrices, this is the formula for their product in the usual sense. The identity $\id_X\from X\distto X$ is simply 
the Dirac delta $\delta_x\from x\mapsto \ket{x}$ or, in terms of matrices, the identity matrix. Substochastic maps with these operations define a category, which we call $\fSubStoch$~\cite{mossprobmonads}; since stochastic maps are stable under composition, they form a subcategory of $\fSubStoch$, which we call $\fStoch$~\cite{fritzmarkovcats}.

In addition, for two subdistributions $\varphi\in \SubDist X$ and $\rho\in\SubDist Y$, we can form the product subdistribution $\varphi
\otimes \rho \in\SubDist(X\times Y)$, given by $(\varphi\otimes \rho)(x,y) = \varphi(x)\cdot \rho(y)$. We will write $\ket{xy} = \ket{x}\otimes \ket{y}$ so that $\bigl(\sum_x \varphi(x)\ket{x}\bigr)\otimes \bigl(\sum_y \rho(y)\ket{y}\bigr) = \sum_{x,y} \varphi(x) \rho(y)\ket{xy}$. The same operation can be extended to conditional distributions, that is, to substochastic maps $f_1\from X_1\distto Y_1$ and $f_2\from X_2\distto Y_2$, giving $(f_1\otimes f_2)(x_1,x_2) = f_1(x_1)\otimes f_2(x_2)$. 
This makes $\fSubStoch$ into a \emph{symmetric monoidal} category, with the Cartesian product of sets as monoidal product, the singleton set  $1 =\{\bullet\}$ as unit, and the symmetry $\sigma_Y^X:X\times Y\distto Y\times X$ given by $\sigma_Y^X(x,y) = \ket{yx}$. Note that stochastic maps out of the unit $1$, \emph{e.g.} $1\distto X$, correspond precisely to distributions over $X$. Since the product of two distributions is still a distribution, $\fStoch$ is a symmetric monoidal subcategory of $\fSubStoch$.

\subsubsection{Copying and deleting.} \label{sec:Markov-category}
Notice that $\Dist(1) = 1$ and therefore, there is only one stochastic map $\epsilon_X \from X \distto 1$ for any set $X$, given by $\epsilon_X(x) = \ket{\bullet}$. Moreover, there is a canonical diagonal substochastic map inherited from the Cartesian product of sets: $\Delta_X\from X \distto X\times X$ given by $\Delta_X(x) = \ket{xx} :=\ket{x}\otimes\ket{x}$.
It is important to note that 
$f\distcomp \Delta_Y \neq \Delta_X \distcomp (f\times f)$ in general. We say that arbitrary substochastic maps cannot be copied. Intuitively, this makes sense: if $f$ represents some probabilistic process (such as flipping a coin), running $f$ once and reusing its outcome in two different places is different from running $f$ twice (flipping two coins). Those stochastic maps that do satisfy $f\distcomp \Delta_Y = \Delta_X \distcomp (f\times f)$ are precisely the \emph{deterministic} maps, that is, %
maps for which $f(x) = \ket{y}$ for a single $y\in Y$. 

Similarly, not every substochastic map $f\from X\distto Y$ satisfies  $f\distcomp\epsilon_Y = \epsilon_X$. Those that do are precisely the \emph{stochastic} maps, since $(f\distcomp\epsilon_Y)(x) = \sum_{y\in X}f(y|x) = 1 = \epsilon_X(x)$!
Such maps are often called \emph{discardable} or \emph{causal} in the literature~\cite{fritzmarkovcats,cho2019disintegration}, which is why we refer to conditioning-free circuits as `causal circuits'. After we define their interpretation, we will see that they are precisely those circuits whose semantics give a bona-fide stochastic map. 

Furthermore, using $\epsilon$, we can recover the usual notion of \emph{marginal} from probability theory. Indeed, post-composing a stochastic map $f\from 1 \distto X\times Y$ with $\epsilon_Y$ involves summing over all $y\in Y$ as follows: $f\distcomp (\id_{X}\times\epsilon_Y)(x) = \sum_{y\in Y} f(x,y)$. If we think of $f$ as a joint distribution over variables taking values in $X$ and $Y$, the stochastic map $f\distcomp (\id_{X}\times\epsilon_Y)$ thus corresponds precisely to marginalising over $Y$ to obtain a distribution over $X$ only. 

\begin{remark}[The categorical corner]\label{rmk:Markov-categories}
The mapping $\Dist$ can be extended to a functor $\Set \to\Set$, which maps $f:X\to Y$ to $\Dist{f}:\Dist X\to\Dist Y$ defined by $\Dist f(\varphi)(y)=\sum_{x\in f^{-1}(y)}\varphi(x)$. Moreover, $\Dist$ can be equipped with the structure of a monad~\cite{giry1982categorical}, with unit $\eta^{\Dist}_X:X\to\Dist X$  given by $\eta^{\Dist}(x)=\ket{x}$,
and multiplication $\mu^{\Dist}_X:\Dist\Dist X\to\Dist{X}$ defined as $\mu^{\Dist}(\Phi)(x)=\sum_{\Phi(\varphi) > 0}\Phi(\varphi)\cdot\varphi(x)$. This is more commonly known as the distribution monad, and the morphisms of its Kleisli category $\Kl{\Dist}$ are sometimes referred to as Markov kernels. The same structure maps also make the mapping $\SubDist$ into a monad. As we will see, our semantics for causal (or conditioning-free) circuits will land in $\fStoch$, the full subcategory of $\Kl{\Dist}$ spanned by finite sets. We will need a quotient of  $\Kl{\SubDist}$ to interpret conditioning.

The product of (sub)distributions defined above endows $\Dist$ and $\SubDist$ with the structure of \emph{commutative} monads~\cite{kock1970monads}, turning  $\Kl{\Dist}$ and $\Kl{\SubDist}$ into symmetric monoidal categories. 

Moreover, a monad that satisfies $\Dist(1) = 1$ is called \emph{affine}~\cite{kock1971bilinearity} and, as a result, $\Kl{\Dist}$ and $\fStoch$ are both \emph{Markov categories}~\cite{fritzmarkovcats} with copy maps $\Delta_X$ and discarding maps $\epsilon_X$ as defined above, for each set $X$. In particular, this means that $\epsilon_X$ is a natural transformation or, equivalently, that the unit of the monoidal structure is terminal. However, as we saw, $\Delta$ does not define a natural transformation, since
$f\distcomp \Delta_Y \neq \Delta_X \distcomp (f\times f)$ in general. 

Finally, note that $\SubDist$ is not an affine monad, and that $\Kl{\SubDist}$ and $\fSubStoch$ are not Markov categories, since $\epsilon_X$ does not define a natural transformation in this case. These still retain the structure of \emph{gs-monoidal categories} (also known as a CD-categories), that is, symmetric monoidal categories with a supply of cocommutative comonoid objects~\cite{corradini1999algebraic}.
\end{remark}

\subsubsection{Conditionals, disintegration, and Bayes' law.}\label{sec:disintegration}
Similarly, we can translate the notion of \emph{conditioning} in the language of $\fSubStoch$. We have already mentioned that a stochastic map $f \from X\distto Y$ can be thought of as a conditional distribution $f(-|x)$ for each $x\in X$. 
Thus, fundamental facts about conditionals have their counterpart in $\fSubStoch$. In particular, given $f\from 1 \distto X\times Y$, there exists $f_{|X}\from X\distto Y$ such that $f\distcomp (\id_{X}\times\epsilon_Y)\distcomp \Delta_X \distcomp (\id_{X} \times f_{|X})$. This is a direct translation of the \emph{disintegration} of a joint distribution $\Prob(x,y) = \Prob(x)\Prob(y|x)$ into the product of a marginal and a conditional distribution. As we have seen, in the language of $\fStoch$, $f\distcomp (\id_{X}\times\epsilon_Y)$ corresponds to the marginal $\Prob(x)$, and $f_{|X}$ to the conditional $\Prob(y|x)$.

It is well-known that such disintegrations are \emph{not} unique in general. Indeed, when the  marginal distribution $\Prob(x)$ is not fully-supported, that is, when $\Prob(x) = 0$ for some $x\in X$, then $\Prob(y|x)$ can be arbitrary. It is clear that any two disintegrations of the same distribution (for the same order of the variables) can have conditionals that differ only on a set of measure zero for the corresponding marginal~\cite[Section 6]{cho2019disintegration}. %
\begin{proposition}[Almost-sure uniqueness of disintegrations]
\label{prop:disintegrations-as-unique}
If $(g,f_{|X})$ and $(g',f'_{|X})$ are two disintegrations of the same stochastic map $f\from A\distto X\times Y$, then $g=g'=f\distcomp (\id_{X}\times\epsilon_Y)$ and $f_{|X}(a,x) = f'_{|X}(a,x)$ for all $a\in A$ and $x\in X$ such that $g(x|a)=g'(x|a)\neq 0$.
\end{proposition}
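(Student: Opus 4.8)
The plan is to work entirely with the pointwise (conditional-probability / matrix) representation of stochastic maps, since the statement concerns equality of the values $f_{|X}(a,x)$ and the algebra is most transparent there. The first task is to translate the categorical definition of disintegration into this language. Unfolding the defining equation $f = g\distcomp \Delta_X \distcomp (\id_X \times f_{|X})$ (with the input wire $A$ threaded through by copying it along with $X$) yields, for any disintegration $(g,f_{|X})$ of $f\from A\distto X\times Y$, the factorisation
\[
f(x,y \mid a) = g(x\mid a)\, f_{|X}(y\mid a,x)
\]
for all $a\in A$, $x\in X$, $y\in Y$. This is exactly the Markov-categorical counterpart of $\Prob(x,y) = \Prob(x)\Prob(y\mid x)$. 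I would establish this factorisation first and use it as the single working hypothesis for both conclusions.

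Next I would show that the marginal component is forced, establishing $g = g' = f\distcomp(\id_X\times\epsilon_Y)$. Summing the factorisation over all $y$ and using that $f_{|X}$ is a \emph{stochastic} map---so that $\sum_y f_{|X}(y\mid a,x) = 1$ for every $a,x$---gives $\sum_y f(x,y\mid a) = g(x\mid a)$. The left-hand side is by definition the value $(f\distcomp(\id_X\times\epsilon_Y))(x\mid a)$, i.e. the marginal of $f$ over $X$. The identical argument applied to $(g',f'_{|X})$ shows that $g'$ is the same marginal, so $g = g' = f\distcomp(\id_X\times\epsilon_Y)$. This is precisely the step where the hypothesis that the conditional is genuinely stochastic (and not merely substochastic) is indispensable.

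Finally, for the almost-sure agreement of the conditionals, I would fix any $a\in A$ and $x\in X$ with $g(x\mid a) = g'(x\mid a)\neq 0$ and equate the two factorisations of the common value $f(x,y\mid a)$:
\[
g(x\mid a)\, f_{|X}(y\mid a,x) = f(x,y\mid a) = g'(x\mid a)\, f'_{|X}(y\mid a,x).
\]
Cancelling the common nonzero scalar $g(x\mid a) = g'(x\mid a)$ yields $f_{|X}(y\mid a,x) = f'_{|X}(y\mid a,x)$ for every $y$, that is, $f_{|X}(a,x) = f'_{|X}(a,x)$ as subdistributions over $Y$, which is the claim.

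I do not expect a serious obstacle: once the pointwise factorisation is in hand, both parts are elementary. The only genuinely delicate point is the translation itself---correctly threading the input wire $A$ through the copy map so that the string-diagram definition reduces to the scalar identity $f(x,y\mid a) = g(x\mid a)\,f_{|X}(y\mid a,x)$---together with flagging the stochasticity of the conditional, which is exactly what legitimises the marginalisation in the second step and, implicitly, the restriction to the support $g(x\mid a)\neq 0$ in the third.
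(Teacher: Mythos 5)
Your proof is correct, and it is worth noting that the paper itself does not actually prove this proposition: it states it and defers to Cho and Jacobs (the citation to Section~6 of \emph{cho2019disintegration} immediately before the statement), where the result is established in the much more general measure-theoretic setting of Markov categories. Your argument is the standard elementary proof specialised to the finite discrete case, and it is exactly what the paper's setting calls for. Two small points in its favour. First, you correctly identify the one genuinely delicate step, namely unfolding the string-diagrammatic definition with the input $A$ copied and threaded through, to obtain the scalar identity
\[
f(x,y\mid a) = g(x\mid a)\, f_{|X}(y\mid a,x),
\]
which is indeed the right reading --- note that the paper's own inline definition of disintegration (stated for $f\from 1\distto X\times Y$ in \S 2.2.3) even omits the equals sign, so supplying this factorisation explicitly is a genuine service. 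Second, you correctly flag that stochasticity of the conditional ($\sum_y f_{|X}(y\mid a,x)=1$ for \emph{all} $a,x$, including those outside the support of $g$) is what forces $g$ to equal the marginal $f\distcomp(\id_X\times\epsilon_Y)$; in a substochastic setting this step would fail, and the proposition would be false as stated. One cosmetic remark: after the first part you have $g=g'$ everywhere, so the hypothesis ``$g(x\mid a)=g'(x\mid a)\neq 0$'' in the statement collapses to the single condition $g(x\mid a)\neq 0$, and your cancellation step uses exactly this. The proof is complete as proposed.
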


Finally, the ability to disintegrate any joint distribution is intimately linked to the notion of \emph{Bayesian inference}: assume we have a stochastic map $f\from X\distto Y$, with prior belief about the distribution of its inputs, in the form of a distribution $p\from 1\distto X$, and some observation $y\in Y$ of the output of $f$---what can we infer about the (unobserved) input of $f$? Bayes' law gives us a recipe to answer this question and compute a new distribution over inputs (called the posterior) which accounts for the evidence $y$. In fact, for a fixed prior $p$, Bayes' law defines a stochastic map $\Bayes{f}{p}\from Y\distto X$, telling us how to obtain the posterior distribution for \emph{any} observation $y\in Y$. We call this map the \emph{Bayes inverse} of $f$ relative to $p$. When $f$ is invertible, its Bayesian inverse (relative to any prior) coincides with its actual inverse. The process of Bayesian inversion can also be thought of in terms of disintegration. Note that the joint distribution $p\distcomp \Delta_X\distcomp (\id_X\times f)\from 1 \distto X\times Y$ is already disintegrated into the product of a marginal $p\from 1 \distto X$ and a conditional $f\from X\distto Y$. But we can also disintegrate it as the product of a marginal over $Y$ and a conditional $Y\distto X$, given respectively by $p\distcomp f$ and $\Bayes{f}{p}$, the Bayesian inverse of $f$. Thus, $p\distcomp \Delta_X\distcomp (\id_X\times f) = p\distcomp f\distcomp\Delta_Y\distcomp (\Bayes{f}{p}\times \id_Y)$, which is just a restatement of the well-known equality $\Prob(x|y)\Prob(y) = \Prob(y|x)\Prob(x)$.

\subsubsection{Interpreting causal circuits.}\label{sec:semantic-functor}
To define the semantics of causal circuits below we will use the standard Boolean operations of conjunction (written `$\land$') and negation (written `$\lnot$'). 
\begin{definition}[Semantics of causal circuits]
\label{def:semantics}
Let $\sem{\cdot}$ be the mapping defined inductively on $\CausCirc$ by 
\begin{equation*}
\sem{\Bcomult\,}(x) = \ket{xx} \quad \sem{\Bcounit\,}(x) = \ket{\bullet} \quad  \sem{\Andgate}(x_1,x_2) = \ket{x_1\land x_2}\quad \sem{\Notgate}(x) = \ket{\lnot x}
\end{equation*}
\begin{equation*}
\sem{\,\Flip{p}}(\bullet) = p\ket{1} + (1-p)\ket{0}
\end{equation*}
\begin{equation}
\sem{\idone}(x) = \ket{x} \qquad \sem{\sym}(x,y) = \ket{yx} 
\end{equation}
\begin{align*}
\sem{
\InputIfFileExists{seq-compose.tikz}{}{\input{./tikz/seq-compose.tikz}}
}(z|x) &=  \sum_{y\in Y}\sem{\smallprobcircuit{d}{m}{n}}(z|y)\cdot \sem{\smallprobcircuit{c}{\ell}{m}}(y|x)
\\
\sem{\,
\InputIfFileExists{par-compose.tikz}{}{\input{./tikz/par-compose.tikz}}
\,}(y_1,y_2|x_1,x_2) &= \sem{\smallprobcircuit{c_1}{m_1}{n_1}}(y_1|x_1)\cdot \sem{\smallprobcircuit{c_2}{m_2}{n_2}}(y_2|x_2)
\end{align*}
\end{definition}
Note that the first line of Definition~\ref{def:semantics} is the usual semantics of these circuit gates, given in terms of Boolean operations, lifted to $\fStoch$.
The second line defines the semantics of $\Flip{p}$ to be a Bernoulli distribution with parameter $p$. The next line forces $\sem{\cdot}$ to map plain wires and wire crossings to identities and symmetries in $\fStoch$. Finally, the last two lines guarantee the \emph{compositionality} of our interpretation: the semantics of two circuits composed in sequence is their composition in $\fStoch$ and the semantics of two circuits in parallel is their monoidal product in $\fStoch$. Thus, a circuit $c\from m\to n$, with $m$ input wires and $n$ output wires, is interpreted as a stochastic map $\sem{c}\from \Bool^m\distto \Bool^n$, \emph{i.e.}, as a map $\Bool^m\distto \Dist(\Bool^n)$. As a result, this interpretation mapping does define a \emph{symmetric monoidal functor}. 
\begin{proposition}\label{prop:semantics-functorial}
The mapping $\sem{\cdot}$ defines a symmetric monoidal functor $\CausCirc\to \fStoch$.
\end{proposition}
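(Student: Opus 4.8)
The plan is to exploit the fact that $\CausCirc$, presented as $\CausCirc$-terms modulo the equations defining string-diagram equality, is the free symmetric monoidal category on the monoidal signature consisting of the generators of lines~\eqref{eq:bool-syntax}--\eqref{eq:causal-syntax} together with the types assigned to them in Fig.~\ref{fig:typing-rules}. By the universal property of free symmetric monoidal categories, to produce a symmetric monoidal functor $\CausCirc\to\fStoch$ it suffices to fix the action on objects and to assign to each generator a morphism of the matching type in $\fStoch$; such an assignment then extends uniquely to a strict symmetric monoidal functor. On objects I would set $n\mapsto\Bool^n$, which is strictly monoidal since $\Bool^{m+n}=\Bool^m\times\Bool^n$ and $\Bool^0$ is the monoidal unit $1$. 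On generators I take exactly the clauses of Definition~\ref{def:semantics}.

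First I would check that the assignment really lands in $\fStoch$ and not merely in $\fSubStoch$, i.e. that every generator is sent to a genuine \emph{stochastic} map. The gates $\Bcomult$, $\Bcounit$, $\Andgate$, $\Notgate$, together with $\idone$ and $\sym$, are deterministic functions promoted to stochastic maps as in Section~\ref{sec:smc-fstoch}, hence stochastic; and $\sem{\Flip{p}}$ is stochastic because $p+(1-p)=1$. Since stochastic maps are closed under the composition $\distcomp$ and the monoidal product $\tns$ of $\fSubStoch$ (Section~\ref{sec:smc-fstoch}), the two structural clauses of Definition~\ref{def:semantics} keep us inside $\fStoch$, so $\sem{\cdot}$ is a well-typed assignment into $\fStoch$.

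The one point that genuinely requires verification is that $\sem{\cdot}$, which is defined by structural recursion on \emph{terms}, descends to the quotient by the string-diagram axioms; equivalently, that it validates each equation generating the congruence. Each such axiom---associativity and unitality of $\poi$ and $\tns$, the interchange law, naturality of $\sym$, and $\sym\poi\sym=\idone\tns\idone$---is mapped under $\sem{\cdot}$ to the corresponding identity between composites and tensors of stochastic maps. But these identities are precisely the axioms of a symmetric monoidal category, which $\fStoch$ satisfies by Section~\ref{sec:smc-fstoch}. Hence $\sem{\cdot}$ is constant on each equivalence class, and the induced map on string diagrams is well defined. I expect this to be the main (though still routine) obstacle: it is bookkeeping rather than mathematics, since no new computation is needed beyond matching each syntactic axiom with a known law of $\fStoch$.

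Finally, preservation of identities and composition, of the monoidal product and unit, and of the symmetries is then immediate. The structural clauses of Definition~\ref{def:semantics} were chosen so that $\sem{c\poi d}=\sem{c}\distcomp\sem{d}$ and $\sem{c\tns d}=\sem{c}\tns\sem{d}$, while $\sem{\idone}=\id$, $\sem{\idzero}=\id_{1}$, and $\sem{\sym}$ is the symmetry $\sigma$ of $\fStoch$. These are exactly the conditions making $\sem{\cdot}$ a strict symmetric monoidal functor, which completes the argument.
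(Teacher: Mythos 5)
Your proposal is correct and matches the paper's (largely implicit) justification: the paper states this proposition without a formal proof, relying on exactly the observations you spell out --- that each generator of Definition~\ref{def:semantics} is sent to a genuine stochastic map, that the structural clauses force preservation of $\poi$ and $\tns$, and that the string-diagram axioms hold in $\fStoch$ because it is a symmetric monoidal category (equivalently, via the universal property of $\CausCirc$ as the free prop on these generators, as the paper itself notes in its discussion of props and symmetric monoidal theories). Your writeup is simply the careful version of that routine verification, so there is nothing to add.
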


\subsubsection{Semantics of observations.}\label{sec:observation-semantics}
Famously, explicit conditioning complicates the semantics of probabilistic programs. In Dice syntax, $\mathtt{observe}\;x$ is intended to condition the distribution encoded by the overall program in which it occurs on the value of $x$ being true. 
Similarly, we want the generator $\Bmult$ of our syntax to constrain both its inputs to have the same value. Hence, its interpretation should return the Dirac distribution $\ket{x}$ whenever its two inputs are both equal to $x$, but what should its output subdistribution be when its two inputs disagree? Following standard practice~\cite{bernoulliprobpaper,dicepaper,SemanticsOfProbabilisticPrograms}, we will assign probability $0$ to such failed constraints or observations. 
This suggests extending the interpretation $\sem{\cdot}$ to all of $\ProbCirc$ with
\begin{equation}
\sem{\Bmult}(y|x_1,x_2) = \begin{cases} 
1 &  \text{ if  $y=x_1=x_2$,}
\\ 
0 & \text{ otherwise.}
\end{cases} 
\end{equation}
Note that this does not define a distribution, but an unnormalised \emph{sub}distribution.
As a result, conditioning forces us to leave the realm of (normalised) distributions.

However, it is always possible to rescale a non-zero subdistribution to obtain a distribution, as is standard practice for the semantics of many PPLs~\cite{bernoulliprobpaper,dicepaper,fierens2014,olmedo2018conditioning}. And indeed, for the purpose of inference, any two closed probabilistic programs that represent the same distribution, should be considered equal. Equivalently, we would like to identify any two circuits $0\to n$ whose semantics give proportional subdistributions. While there are other interpretations of observation/conditioning in probabilistic programming~\citep{olmedo2018conditioning}, this is the perspective we adopt here. Let us make this idea more precise, following the work of Staton and Stein~\cite{stein2023probabilistic}.

For any two $\phi,\rho\in\SubDist X$, we write $\phi \propto\rho$ if there exists some real number $\lambda>0$ such that $\varphi(x) = \lambda\cdot \rho(x)$. It is easy to see that this defines an equivalence relation on $\SubDist X$. Moreover, since any non-zero subdistribution can always be normalised, the equivalence classes of $\propto$ are in one-to-one correspondence with (normalised) distributions over $X$, plus the uniformly zero subdistribution $\bot_X$, \emph{i.e.}, $\SubDist X/\propto\; \cong \Dist X + \{\bot_X\}$. In this sense, conditioning just adds the possibility of failure, denoted here by $\bot_X$, to the standard distributional semantics given by $\sem{\cdot}$. %

In summary, our intended semantics should identify circuits of type $0\to n$ whose subdistributional semantics differ only by a constant (non-zero) multiplicative factor. To interpret programs with free variables/circuits with input wires, we need to generalise the same idea to substochastic maps.   Moreover, we need to do so in a \emph{compositional} way: this means that the semantics for circuits $0\to n$ should extend to a semantics for arbitrary circuits $m\to n$ in a way which is compatible with sequential and parallel composition.

It turns out that our choice of identifying subdistributions up to a scalar factor forces us  to identify all subdistributions over the singleton set $1=\{\bullet\}$, and therefore all circuits $0\to 0$. But subdistributions over $1$ are precisely the substochastic maps $1\distto 1$, and these are in one-to-one correspondence with the non-negative reals. This tells us exactly how to extend our equivalence relation $\propto$ over subdistributions to substochastic maps: consider the substochastic maps $c \from 1\distto 1$ and $f\from X\distto Y$; then $c$ is equivalent to $id_1$, the identity map $1\distto 1$, which implies that $c \times f$ should be equivalent to $\id_{1}\times f = f$. Since $c$ was arbitrary, we have that any two substochastic maps that differ by a global nonzero multiplicative factor should be equivalent. 
\begin{definition}
\label{def:finprojstoch}
Given two substochastic maps $f,g\from X\distto Y$, we write $f\propto g$ if there exists a real number $\lambda > 0$ such that $f(y|x)=\lambda\cdot g(y|x)$ for all $x\in X$ and $y\in Y$. We write $[f]$ for the equivalence class of $f\from X\distto Y$.
\end{definition}
Note that the number $\lambda$ has to be same scalar factor \emph{for all} inputs in $X$, as the paragraph preceding Definition~\ref{def:finprojstoch} makes clear. See also Remark~\ref{rmk:global-normalise} below for a longer discussion on why allowing different scalar factors for each input does not give a compositional interpretation. And indeed, composition and product of stochastic maps are congruences for $\propto$, so that stochastic maps up to $\propto$ form a symmetric monoidal category, which we call $\fProjStoch$, with much of the same structure as $\fStoch$~\cite[Theorem 6.4]{stein2023probabilistic}. $\fProjStoch$ will be our target semantics for $\ProbCirc$, the syntax of all circuits, including those that contain explicit conditioning via $\Bmult$ generators. 
\begin{definition}[Semantics]
\label{def:cond-semantics}
For each generator $g$ of $\CausCirc$, let $\condsem{g}$ be the equivalence class of $\sem{g}$.
In addition, let $\condsem{\Bmult}$ be the equivalence class of the following subdistribution:
\[
\sem{\Bmult}(y|x_1,x_2) = \begin{cases} 
1 &  \text{ if  $y=x_1=x_2$,}
\\ 
0 & \text{ otherwise.}
\end{cases} 
\]
\end{definition}
As before, the interpretation of the whole syntax is functorial.
\begin{proposition}\label{prop:semantics-functorial}
The mapping $\condsem{\cdot}$ defines a symmetric monoidal functor $\ProbCirc\to \fProjStoch$.
\end{proposition}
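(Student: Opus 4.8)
The plan is to exploit the fact that $\ProbCirc$ is, by construction, the \emph{free} symmetric monoidal category on its generators: its morphisms are $\ProbCirc$-terms quotiented by exactly the laws of symmetric monoidal categories displayed in~\eqref{eq:smc} and nothing else. By the universal property of free symmetric monoidal categories (equivalently, of free PROPs, since every object is a natural number under $+$), a symmetric monoidal functor out of $\ProbCirc$ into any symmetric monoidal category $\catD$ is uniquely determined by, and may be freely specified through, (i) the image of the generating object $1$, and (ii) the image of each generator on a morphism of $\catD$ of the matching type. This reduces the statement to checking that Definition~\ref{def:cond-semantics} supplies exactly such data, with values in $\fProjStoch$; functoriality and preservation of the symmetry are then automatic.

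Concretely, I would first extend the raw subdistributional interpretation to a symmetric monoidal functor $\sem{\cdot}\colon \ProbCirc \to \fSubStoch$. On objects set $\sem{n} = \Bool^n$ (forced by $\sem{1}=\Bool$ and strict preservation of $\tns$), and on generators use the assignments of Definition~\ref{def:semantics} for the generators of $\CausCirc$, which land in $\fStoch \subseteq \fSubStoch$, together with the substochastic map $\sem{\Bmult}\colon \Bool^2 \distto \Bool$ of Definition~\ref{def:cond-semantics} for the new conditioning generator. Each image has the correct type, so freeness yields a unique such functor; on the $\CausCirc$ generators it agrees with the interpretation already shown to be functorial into $\fStoch$, so this is a genuine extension of the earlier result, the only addition being the image of $\Bmult$.

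Next, I would invoke the quotient map $Q\colon \fSubStoch \to \fProjStoch$ sending $f \mapsto [f]$. The only step carrying genuine mathematical content is that $\propto$ (Definition~\ref{def:finprojstoch}) is a congruence for both symmetric monoidal operations, so that $Q$ is a well-defined symmetric monoidal functor. For the monoidal product this is immediate: if $f \propto f'$ with factor $\lambda$ and $g \propto g'$ with factor $\mu$, then $f \tns g \propto f' \tns g'$ with factor $\lambda\mu$. For composition one computes, using that the factors are \emph{global} (the same for every input), $(f \distcomp g)(z \mid x) = \sum_y g(z\mid y)\, f(y\mid x) = \lambda\mu \sum_y g'(z\mid y)\, f'(y\mid x) = \lambda\mu\,(f' \distcomp g')(z\mid x)$, whence $f \distcomp g \propto f' \distcomp g'$. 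This is precisely the fact already recorded in the text, and in~\cite[Theorem 6.4]{stein2023probabilistic}, that endows $\fProjStoch$ with its symmetric monoidal structure.

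Finally, I would set $\condsem{\cdot} = Q \circ \sem{\cdot}$. As a composite of symmetric monoidal functors it is itself one, of type $\ProbCirc \to \fProjStoch$, and on generators it returns $[\sem{g}]$ for each generator $g$ of $\CausCirc$ and $[\sem{\Bmult}]$ for the conditioning generator, matching Definition~\ref{def:cond-semantics} exactly. The main obstacle is thus not the functoriality itself, which is forced by freeness, but the congruence check of the third paragraph; its essential subtlety, and the reason allowing per-input scalar factors would break compositionality, is precisely that the proportionality constant must be global.
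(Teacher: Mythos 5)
Your proof is correct and follows essentially the same route the paper takes: the paper states this proposition without an explicit proof, relying on the freeness of $\ProbCirc$ as a prop together with the fact---credited to~\cite[Theorem 6.4]{stein2023probabilistic}---that $\propto$ is a congruence for composition and monoidal product, so that $\condsem{\cdot}$ factors as the free extension of Definition~\ref{def:cond-semantics} followed by the quotient to $\fProjStoch$. Your explicit verification of the congruence (correctly emphasising that the scalar factor must be global) simply fills in the step the paper delegates to that citation.
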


\begin{remark}[On auto-normalisation]
\label{rmk:global-normalise}
In the paper that introduced \texttt{Dice}, the authors recognise, as we do, that a semantics given purely in terms of distributions is insufficient to interpret programs containing explicit conditioning~\cite[3.2.2]{dicepaper}. Instead, the semantics of a \texttt{Dice} program $t$ is -- in essence -- a pair containing a representation of the subdistribution encoded by $t$, and a representation of the normalising constant to turn it into a distribution. The authors argue that both are necessary and that re-normalising automatically is not sound.  However, the form of `auto-normalisation' they consider (to eventually reject it) coincides with ours on closed terms, but identifies programs with free variables with potentially different normalising factors for different values of their inputs. To see the difference, let us consider the two programs they give as examples:
\begin{equation}
\texttt{fun } f(x : \Bool) : \Bool\; \{ \texttt{let } z = x\lor \texttt{flip } 1/2 \texttt{ in let } z = \texttt{observe } y \texttt{ in } y \}
\end{equation}
\begin{equation}
\texttt{fun } g(x : \Bool) : \Bool\; \{ \texttt{true} \}
\end{equation}
Had we defined the semantics of \texttt{Dice} functions as the (normalised) distribution encoded by the function for each value of its input variables separately, then these two functions would be interpreted as the same stochastic map, which sends any input $x$ to the distribution $\ket{1}$. This is not sound, because there are contexts that can differentiate $f$ and $g$, such as 
\[\condsem{\diagsem{\texttt{let } x = \texttt{flip } 0.1 \texttt{ in let obs} = f(x) \texttt{ in } x}}(1) = 0.1/0.55\]
\[\condsem{\diagsem{\texttt{let } x = \texttt{flip } 0.1 \texttt{ in let obs} = g(x) \texttt{ in } x}}(1) = 0.1\]
whose distributional semantics is different. This is because $f$ has a `retroactive' effect on the distribution of $x$, while $g$ does not.

In our semantics however, $f$ and $g$ define circuits whose interpretation as substochastic maps are not proportional. Indeed, 
following the translation given in Section~\ref{sec:syntax}, they are mapped to the following two circuits:
\begin{align*}
\probcircuit{\diagsem{f}}{}{}=
\InputIfFileExists{ex-circuit-conditional.tikz}{}{\input{./tikz/ex-circuit-conditional.tikz}}
\qquad \qquad\probcircuit{\diagsem{g}}{}{}=
\begin{tikzpicture}[circuit logic US]
	\begin{pgfonlayer}{nodelayer}
		\node [style=flip] (36) at (0.25, 0) {$1$};
		\node [style=none] (37) at (2, 0) {};
		\node [style=black] (38) at (-1.5, 0) {};
		\node [style=none] (39) at (-3, 0) {};
	\end{pgfonlayer}
	\begin{pgfonlayer}{edgelayer}
		\draw (36) to (37.center);
		\draw (38) to (39.center);
	\end{pgfonlayer}
\end{tikzpicture}
}

\end{align*}
For our chosen semantics, the first program is interpreted as the (equivalence class of the) substochastic map $\condsem{\diagsem{f}}\from \Bool\distto \Bool$, given by $\condsem{\diagsem{f}}(1) = \ket{1}$ and $\condsem{\diagsem{f}}(0) = \frac{1}{2}\ket{1}$. This is \emph{not} proportional to the subdistribution $\condsem{\diagsem{g}}(x) = \ket{1}$ for all $x$, since there is no scalar $\lambda$ which normalises $\condsem{\diagsem{f}}(x)$ to be equal to $\condsem{\diagsem{g}}(x)$ uniformly for \emph{both} $x=0$ and $x=1$.
\end{remark}

\begin{remark}[One-hot encoding]
A language that can model any distribution over some powers of the Booleans is also able to encode arbitrary distributions over finite sets, using what is commonly known as a \emph{one-hot encoding}. Semantically, given a distribution $\varphi$ over some finite set $X$, we can define a distribution $\Phi$ over $\Bool^X$ such that $\Phi(e_x)=\phi(x)$ whenever $e_x$ is the function that maps $x\in X$ to $1$ and the other elements of $X$ to $0$, and $\Phi(y)=0$ otherwise. 
On the syntax side, we encode a distribution over $X$ by a circuit with $|X|$ wires, whose semantics is a distribution over $\Bool^{|X|}$. Thus, each element of $X$ corresponds to a single (output) wire of the associated circuit, but different wires are not allowed to be set to \texttt{true} at the same time. One-hot encodings are ubiquitous in computer science. For example, the designers of the language Dice use it to define probabilistic programs over arbitrary finite sets of events~\cite{dicepaper}. Similarly, we can use our circuits and the associated equational theory to reason about the whole of $\fStoch$ (over arbitrary finite sets rather than just powers of the two-element set). 

The one-hot encoding can be extended straightforwardly to stochastic maps. It is therefore reasonable to hope that it defines a functor from $\fStoch$ to itself, but this is not the case. While this encoding preserves composition, it does not preserve identities or the monoidal product. One way to fix the issue would be to think of it as a functor from $\fStoch$ into its Karoubi envelope, whose objects are pairs $(X,e)$ of a set $X$ and an idempotent $e\from X\to X$. The categorical aspects of one-hot encoding are interesting in their own right, but they are not the main topic of this paper, so we will not pursue this conjecture formally here.
\end{remark}

\subsubsection{Props and symmetric monoidal theories}

Following standard practice, we can formalise circuits as morphisms of a product and permutation category or \emph{prop}. Formally, a prop is a strict SMC with the natural numbers as the set of objects and addition as the monoidal product. In a prop, all objects are thus monoidal products of a single generating object, \emph{viz.} 1.

Given a set of generating morphisms $\Sigma$, we can form the free prop $P_\Sigma$ as explained in Section~\ref{sec:term-to-diagrams}: by quotienting $\Sigma$-terms by the laws of SMC. For us here, $\ProbCirc$ is the free prop over the set of generators in~\eqref{eq:bool-syntax}-\eqref{eq:full-syntax} and $\CausCirc$ the free prop over those in~\eqref{eq:bool-syntax}-\eqref{eq:causal-syntax}.

Now that we have a formal syntax and semantics, our aim is to reason about semantic equivalence of circuits purely equationally. As we saw, since the syntax of circuits is a prop and its interpretation $\sem{\cdot}$ is a symmetric monoidal functor (Proposition~\ref{prop:semantics-functorial}) into another SMC, the laws of SMCs already capture some semantic equivalences between circuits, but they are not sufficient to derive all semantically valid equivalences. To this end, we need to add more.

Given a set $E$ consisting of equations between $\Sigma$-terms, we write $=_E$ for the smallest congruence with respect to the two compositions $\poi$ and $\otimes$ containing $E$. We call the elements of $E$ axioms and the pair $(\Sigma,E)$ a \emph{symmetric monoidal theory} (or more simply, \emph{theory}). Details on the existence and construction of free props on a given theory can be found in~\cite[Appendix B]{baez2018props} or~\cite{zanasi2015interacting}. We say that a theory is \emph{sound} if $c=_E d$ implies $\sem{c}=\sem{d}$ and \emph{complete} when the reverse implication also holds. A sound and complete theory is also called an \emph{axiomatisation}. When moreover, for every morphism $f$ of the target semantics there exists a morphism $c$ in $P_\Sigma$, the syntax, such that $\sem{c}=f$, we say that a sound and complete theory is a \emph{presentation} of the image of $\sem{\cdot}$.

In what follows we give a sound and complete theory for equivalence of causal circuits (Section~\ref{sec:axiomatisation-caus-circ}). Equivalently, this provides a presentation of $\fStoch_\Bool$, the SMC of stochastic maps between sets that are powers $\Bool$. For our second main result, we extend this theory to axiomatise the full language of probabilistic circuits, including explicit conditioning (Section~\ref{sec:observe}).

\section{An Axiomatisation of Causal Circuits}
\label{sec:axiomatisation-caus-circ}

The main contribution of our work is to provide a sound and complete axiomatisation of $\ProbCirc$ for the semantics given above. We begin by focusing on the causal part, $\CausCirc$, the axiomatisation of which is of independent interest, since it gives a presentation of a monoidal subcategory of $\fStoch$. 
The axiomatisation of circuits with conditioning will be delayed to Section~\ref{sec:observe}.

\subsection{Equational Theory}
\label{sec:equational-theory}

We will formulate our results using string diagrams, which simplify the burden of simultaneously using sequential and parallel composition, while at the same allowing us to build on top of existing axiomatisation results \cite{lafont2003}. We consider $\CausCirc$ terms quotiented by the set of equations in Fig.~\ref{fig:eqs} and the laws of symmetric monoidal categories (~\cref{eq:smc}). Note that any axioms which uses parameters $p,q,\dots$ implicitly quantifies over $[0,1]$, unless mentionned otherwise (as in, \emph{e.g.}, axiom E2).

\begin{figure}[htb]
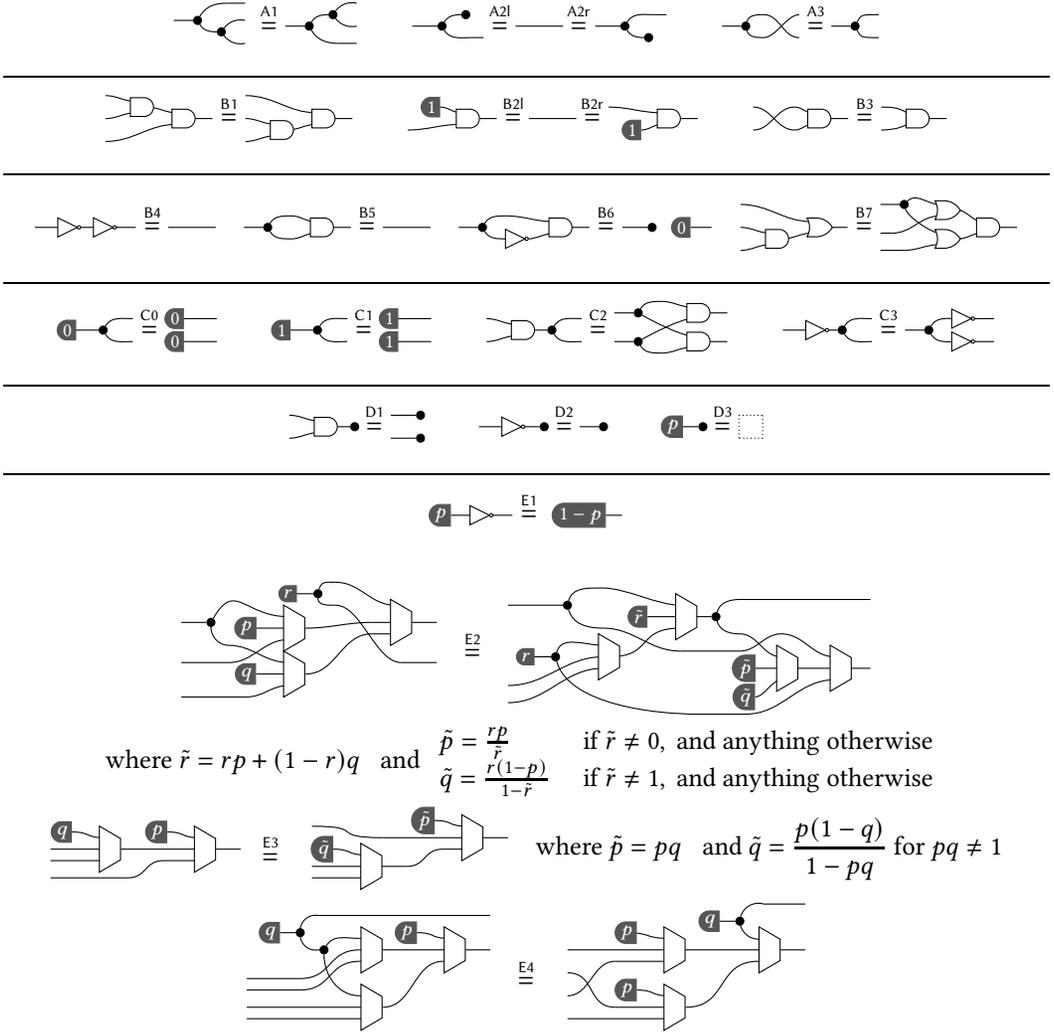

  \begin{subfigure}{\linewidth}
    \centering
    \begin{gather*}
      
\InputIfFileExists{copy-associative.tikz}{}{\input{./tikz/copy-associative.tikz}}
 \;\myeq{A1} 
\InputIfFileExists{copy-associative-1.tikz}{}{\input{./tikz/copy-associative-1.tikz}}
 \qquad 
\InputIfFileExists{copy-unital-left.tikz}{}{\input{./tikz/copy-unital-left.tikz}}
 \myeq{A2l} \idone\myeq{A2r} 
\InputIfFileExists{copy-unital-right.tikz}{}{\input{./tikz/copy-unital-right.tikz}}
\qquad  
\InputIfFileExists{copy-commutative.tikz}{}{\input{./tikz/copy-commutative.tikz}}
 \myeq{A3} \Bcomult
    \end{gather*}
  \end{subfigure}
  \noindent\rule{\linewidth}{0.4pt}
  \begin{subfigure}{1\linewidth}
    \centering
    \begin{gather*}
      
\InputIfFileExists{and-associative-left.tikz}{}{\input{./tikz/and-associative-left.tikz}}
\myeq{B1}
\InputIfFileExists{and-associative-right.tikz}{}{\input{./tikz/and-associative-right.tikz}}
\qquad 
\InputIfFileExists{and-unit-left.tikz}{}{\input{./tikz/and-unit-left.tikz}}
\myeq{B2l}\idone \myeq{B2r}
\InputIfFileExists{and-unit-right.tikz}{}{\input{./tikz/and-unit-right.tikz}}
\qquad 
\InputIfFileExists{swap-and.tikz}{}{\input{./tikz/swap-and.tikz}}
\myeq{B3}\Andgate
    \end{gather*}
  \end{subfigure}
  \noindent\rule{\linewidth}{0.4pt}
  \begin{subfigure}{\linewidth}
    \centering
    \begin{gather*}

\InputIfFileExists{not-not.tikz}{}{\input{./tikz/not-not.tikz}}
 \myeq{B4} \idone\quad 
\InputIfFileExists{copy-and.tikz}{}{\input{./tikz/copy-and.tikz}}
\myeq{B5} \idone \quad

\InputIfFileExists{copy-idxnot-and.tikz}{}{\input{./tikz/copy-idxnot-and.tikz}}
 \myeq{B6} 
\InputIfFileExists{del-false.tikz}{}{\input{./tikz/del-false.tikz}}
\quad 
\InputIfFileExists{and-or.tikz}{}{\input{./tikz/and-or.tikz}}
\myeq{B7} 
\InputIfFileExists{orx2-and.tikz}{}{\input{./tikz/orx2-and.tikz}}

    \end{gather*}
  \end{subfigure}
  \noindent\rule{\linewidth}{0.4pt}
  \begin{subfigure}{\linewidth}
    \centering
    \begin{gather*}

\InputIfFileExists{0-copy.tikz}{}{\input{./tikz/0-copy.tikz}}
\myeq{C0} 
\InputIfFileExists{0x0.tikz}{}{\input{./tikz/0x0.tikz}}
\qquad 
\InputIfFileExists{1-copy.tikz}{}{\input{./tikz/1-copy.tikz}}
\myeq{C1} 
\InputIfFileExists{1x1.tikz}{}{\input{./tikz/1x1.tikz}}
\qquad 
\InputIfFileExists{and-copy.tikz}{}{\input{./tikz/and-copy.tikz}}
\myeq{C2} 
\InputIfFileExists{copyxcopy-andxand.tikz}{}{\input{./tikz/copyxcopy-andxand.tikz}}
\qquad 
\InputIfFileExists{not-copy.tikz}{}{\input{./tikz/not-copy.tikz}}
\myeq{C3} 
\InputIfFileExists{copy-not.tikz}{}{\input{./tikz/copy-not.tikz}}

    \end{gather*}
  \end{subfigure}
    \noindent\rule{\linewidth}{0.4pt}
    \begin{subfigure}{\linewidth}
    \centering
    \begin{gather*}

\InputIfFileExists{and-del.tikz}{}{\input{./tikz/and-del.tikz}}
 \myeq{D1} 
\InputIfFileExists{delxdel.tikz}{}{\input{./tikz/delxdel.tikz}}
 \qquad 
\InputIfFileExists{not-del.tikz}{}{\input{./tikz/not-del.tikz}}
 \myeq{D2}\Bcounit \qquad 
\InputIfFileExists{flip-del.tikz}{}{\input{./tikz/flip-del.tikz}}
 \myeq{D3}
\InputIfFileExists{empty-diag.tikz}{}{\input{./tikz/empty-diag.tikz}}

    \end{gather*}
  \end{subfigure}
  \noindent\rule{\linewidth}{0.4pt}
  \begin{subfigure}{\linewidth}
    \centering
    \begin{gather*}
      
\InputIfFileExists{flip-not.tikz}{}{\input{./tikz/flip-not.tikz}}
\myeq{E1} \;
\InputIfFileExists{flip-1-p.tikz}{}{\input{./tikz/flip-1-p.tikz}}

      \\
      \\
      
\InputIfFileExists{ax-general-inverse.tikz}{}{\input{./tikz/ax-general-inverse.tikz}}
\\       \text{where }  \tilde{r}=rp+(1-r)q \;\;\text{ and }\begin{array}{ll}
          \tilde{p} = \frac{rp}{\tilde{r}}&\text{ if } \tilde{r}\neq 0, \text{ and anything otherwise} \\
         \tilde{q} = \frac{r(1-p)}{1-\tilde{r}} &\text{ if }  \tilde{r}\neq 1, \text{ and anything otherwise}
      \end{array} 
      \\
      
\InputIfFileExists{ax-bary.tikz}{}{\input{./tikz/ax-bary.tikz}}
\quad \text{where } \tilde{p}=pq \;\;\text{ and } \tilde{q}=\frac{p(1-q)}{1-pq}\text{ for }pq\neq 1
      \\
      
\InputIfFileExists{ax-swap.tikz}{}{\input{./tikz/ax-swap.tikz}}

    \end{gather*}
  \end{subfigure}
  \caption{Axioms of the causal circuits.}
  \label{fig:eqs}
\end{figure}

A few comments are in order, to clarify the meaning of the axioms in each block
\begin{itemize}
\item In the first block, the A-axioms define a \textit{cocommutative comonoid} consisting of a \textit{comultiplication} $\Bcomult:1\to 2$ and a \textit{counit} $\Bcounit:1\to 0$, together with equations guaranteeing that all different ways of copying a value are equal, that copying followed by discarding is the same as doing nothing, and that two copies are identical, so can be exchanged.
\item In the second block, the axioms (B1)-(B3) define a \textit{commutative monoid} consisting of two generators in our signature, a \textit{multiplication}, given by the Boolean \texttt{AND}-gate $\Andgate:2\to 1$, and a \textit{unit}, the Boolean constant \texttt{true}, represented by $\Flip{1}:0\to 1$, with their associated axioms: $\Andgate$ is associative, commutative and its unit is $\Flip{1}$.
\item In the third block, familiar axioms from Boolean algebra are given, stating the \emph{involution} of negation, the \emph{idempotence} of conjunction, as well as the usual \emph{complementation} axiom and the \emph{distributivity} of conjunction over disjunction.
\item In the third block, the axioms allow us to copy (C) arbitrary Boolean operations using $\Bcomult$. In other words, conjunction, negation, and the two Boolean constants \texttt{true} ($1$) and \texttt{false} ($1$) distribute over $\Bcomult$. Semantically, this charaterises them as \emph{deterministic} maps (or constants) without any probabilistic behaviour. 
\item In the fifth block, the D-axioms allow us to discard (D) the result of Boolean operations, using $\Bcounit$.  Note that axiom D3 applies more generally to $\Flip{p}$ for any $p\in[0,1]$ (not just $0$ and $1$) and encodes the normalisation of the distribution $\sem{\Flip{p}} = p\ket{1} + (1-p)\ket{0}$, \emph{i.e.} the fact that $p+(1-p) = 1$. Semantically, this characterises these as \emph{causal} operations. 
\item Note that the A-block and C-D blocks, combined with the algebraic theory of Boolean algebra in the B-block allow us to obtain a diagrammatic form of substitution for the purely Boolean fragment of the theory: it guarantees that any \emph{purely Boolean} expression can be copied (C) and discarded (D). In addition, they give us familiar algebraic structures that occur in similar diagrammatic calculi: $\Andgate,\Flip{1}$ with $\Bcomult,\Bcounit$ together form a (co)commutative \emph{bimonoid}.  
\item Finally, the last block contains all equations concerning the probabilistic behaviour of $\CausCirc$ terms,
  \par \textit{Axiom E1} is transparent: applying a negation after a coin flip with bias $p$ is the same as exchanging the probabilities of the corresponding distribution: $p\ket{0}+(1-p)\ket{1} = (1-p)\ket{\lnot 0}+p\ket{\lnot 1}$. This implies in particular that $\ket{0}$ is the negation of $\ket{1}$, which is a simple Boolean algebra identity.
  \par \textit{Axiom E2}  generalises a simpler equality, which is the diagrammatic counterpart of two ways of disintegrating a joint distribution of two variables, as $\Prob(y_0|y_1)\Prob(y_1)=\Prob(y_0)\Prob(y_1|y_0)$:
  \begin{equation}\label{eq:0-2-distribution}
  
\InputIfFileExists{nf-0-2-inv.tikz}{}{\input{./tikz/nf-0-2-inv.tikz}}
 \;= \;
\InputIfFileExists{nf-0-2.tikz}{}{\input{./tikz/nf-0-2.tikz}}

  \end{equation}
 E2 further conditions this identity on the value of some input variable (the top left wire). This more general form will be needed to iteratively disintegrate joint distributions over more than two variables, as we will see in Section~\ref{sec:m-n-completeness}. 
    By varying the values of $p_0,p_1,p_2$ in this last diagram, the semantics of the corresponding circuit ranges over all distributions on $\Bool^2$. 
    Since the \texttt{NOT}-gate is invertible (in fact involutive, as axiom B4 encodes), its Bayesian inverse is itself. As we will see in the proof of completeness, these axioms will allow us to compute the Bayesian inverse of any circuit, relative to any distribution.
  \par \textit{Axiom E3} can be understood as form of associativity for convex sums. Indeed, circuits of the form
    \[\ConvexSum{p}\]
    for some $p\in[0,1]$, are interpreted as the distribution $p\ket{x_0} + (1-p)\ket{x_1}$,  conditional on inputs $x_0$ and $x_1$. That is, circuits of this form simply take the convex sum of their two inputs! This is a binary operation, which is associative \emph{up to reweighing} as axiom E3 states. A similar axiom appears in previous work that contain convex sum (aka probabilistic choice) as a primitive~\cite{fritzpresentation,probgkat}. A difference with our work is that convex sum is a derived operation for us. This will allow us to derive laws commonly taken as axioms in related work, most notably the \emph{distributivity} of \texttt{if-then-else} over probabilistic choice~\cite[Section 5]{probgkat}. %
  \par \textit{Axiom E4} allows us to break redundant correlation between the guard of two \texttt{if-then-else} gates: notice that there is one fewer $\Bcomult$ node on the right than on the left. This ability will prove crucial in our proof of completeness, helping us to rewrite complicated circuits into simple trees of convex sums. 
\end{itemize}

\subsection{Soundness}
\label{sec:soundness-caus-circ}

Using our equational theory, everything syntactically provable remains true under our semantic interpretation. Proof of the below theorem is straightforward, as it is suffices to verify that each axiom in Fig.~\ref{fig:eqs} is a semantic equality.

\begin{theorem}[Soundness]
For all circuits $c,d\from m\to n$, if $c=d$ then $\sem{c}=\sem{d}$.
\end{theorem}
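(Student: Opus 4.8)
The plan is to exploit the fact that the relation $=$ on $\CausCirc$ is, by definition, the smallest congruence with respect to $\poi$ and $\tns$ containing the axioms of Fig.~\ref{fig:eqs} together with the laws of SMCs. Since the interpretation $\sem{\cdot}\from\CausCirc\to\fStoch$ is a symmetric monoidal functor (Proposition~\ref{prop:semantics-functorial}), it automatically validates the SMC laws and, being monoidal, preserves both sequential and parallel composition. Consequently, semantic equality is itself a congruence for $\poi$ and $\tns$, so it suffices to establish the base case: for every axiom $l\myeq{}r$ in Fig.~\ref{fig:eqs}, show that $\sem{l}=\sem{r}$ as stochastic maps $\Bool^m\distto\Bool^n$. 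The congruence closure then propagates this to every derivable equality by a routine induction on the structure of a derivation.

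To discharge the base cases, I would compute both sides of each axiom explicitly, evaluating the associated stochastic map on each Boolean input tuple and comparing the resulting subdistributions over $\Bool^n$. For the purely Boolean blocks (A, B, C, D), the two sides denote deterministic maps, and the required equalities reduce to elementary identities of Boolean algebra---associativity, commutativity, unitality, involution of negation, idempotence and complementation of conjunction, distributivity---lifted pointwise into $\fStoch$ via the Dirac embedding $x\mapsto\ket{x}$. The copy and discard axioms follow because deterministic maps commute with $\Delta_X$ and $\epsilon_X$, which is exactly the content of the characterisations in Section~\ref{sec:Markov-category}; in particular, axiom D3 encodes the normalisation $p+(1-p)=1$ of the Bernoulli distribution $\sem{\Flip{p}}$.

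The substantive work lies in the probabilistic E-block. For E1 one checks directly that post-composing $\sem{\Flip{p}}=p\ket{1}+(1-p)\ket{0}$ with negation swaps the two weights, yielding $\sem{\Flip{1-p}}$. The remaining axioms E2--E4 express reweighings of convex combinations and are verified by expanding both sides as explicit subdistributions over $\Bool^2$ (or, for E2, conditionally on the extra input wire) and checking that the prescribed parameters $\tilde r,\tilde p,\tilde q$ make the two sides coincide coefficient by coefficient. Concretely, these identities are the diagrammatic form of the two ways of disintegrating a joint distribution, as in~\eqref{eq:0-2-distribution}, so the computation amounts to verifying Bayes' rule for the relevant parameters.

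The main obstacle I anticipate is handling the side conditions in E2 (and similarly E3), where a parameter is left unconstrained---``anything otherwise''---precisely when the relevant normaliser $\tilde r$ equals $0$ or $1$. In those degenerate cases the branch whose parameter is unspecified carries probability zero in the overall distribution, so its value cannot affect the semantics; this is exactly the almost-sure uniqueness of disintegrations recorded in Proposition~\ref{prop:disintegrations-as-unique}. The verification therefore requires a small case split isolating these zero-measure branches and invoking that proposition (or a direct computation showing the offending terms vanish) to conclude that both sides denote the same stochastic map regardless of the free choice of parameter.
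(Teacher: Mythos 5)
Your proposal is correct and follows essentially the same route as the paper: soundness reduces, via the fact that $\sem{\cdot}$ is a symmetric monoidal functor and semantic equality is a congruence, to checking each axiom of Fig.~\ref{fig:eqs} pointwise, with the Boolean blocks handled by standard Boolean-algebra identities and the E-block verified by expanding both sides as explicit subdistributions and matching coefficients (e.g.\ $\tilde r\tilde p = rp$, $\tilde r(1-\tilde p) = (1-r)q$, etc., for E2). Your explicit treatment of the ``anything otherwise'' side conditions---observing that the unconstrained parameter is always multiplied by a vanishing coefficient $\tilde r$ or $1-\tilde r$, so a direct computation suffices without actually needing Proposition~\ref{prop:disintegrations-as-unique}---is a point the paper's proof leaves implicit, and is a welcome clarification rather than a deviation.
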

\begin{proof}
It suffices to verify that, for any axiom of the form $c=d$, we have $\sem{c} = \sem{d}$. We omit the verification of the Boolean algebra axioms which are standard, and of D3,E1 which are both immediate from the explanations above. We prove the soundness of axioms E2-E4 below.
\begin{description}
\item[(E2)]  We distinguish two cases: when the first wire is set to $1$, we have
\begin{align*}
\sem{
\InputIfFileExists{ax-general-inverse-left.tikz}{}{\input{./tikz/ax-general-inverse-left.tikz}}
}(1,x_1,x_2) &
\begin{array}{l}
 = rp\ket{11} + (1-r)q\ket{10} + r(1-p)\ket{01} + (1-r)(1-q)\ket{00} 
 \\
 = \tilde{r}\tilde{p}\ket{11} + \tilde{r}(1-\tilde{p})\ket{10} + (1-\tilde{r})\tilde{q}\ket{01} + (1-\tilde{r})(1-\tilde{q})\ket{00}
\end{array} 
\\
& = \sem{
\InputIfFileExists{ax-general-inverse-right.tikz}{}{\input{./tikz/ax-general-inverse-right.tikz}}
}(1, x_1,x_2)
\end{align*}
and when the first wire is set to $0$, we have
\begin{align*}
\sem{
\InputIfFileExists{ax-general-inverse-left.tikz}{}{\input{./tikz/ax-general-inverse-left.tikz}}
}(0,x_1,x_2) & =r\ket{x_11} + (1-r)\ket{x_20}  
\\ &= \sem{
\InputIfFileExists{ax-general-inverse-right.tikz}{}{\input{./tikz/ax-general-inverse-right.tikz}}
}(0, x_1,x_2)
\end{align*}
\item[(E3)] This axiom occurs in previous work~\cite{fritzpresentation,probgkat}, but in a different setting (and without proof of soundness), so we prove it here for reference.
\begin{align*}
\sem{
\InputIfFileExists{convex-sum-associate-right.tikz}{}{\input{./tikz/convex-sum-associate-right.tikz}}
}(x_0,x_1,x_2) &= pq\ket{x_0}+(1-pq)\frac{p(1-q)}{1-pq}\ket{x_1}+(1-pq)\left(1-\frac{p(1-q)}{1-pq}\right)\ket{x_2}
\\
& = pq\ket{x_0}+p(1-q)\ket{x_1}+(1-p)\ket{x_2}
\\
& = \sem{
\InputIfFileExists{convex-sum-associate-left.tikz}{}{\input{./tikz/convex-sum-associate-left.tikz}}
}(x_0,x_1,x_2)
\end{align*}
\item[(E4)] We have 
\begin{align*}
\sem{
\InputIfFileExists{ax-swap-left.tikz}{}{\input{./tikz/ax-swap-left.tikz}}
}(x_0,x_1,x_2,x_3) &=
\begin{array}{l}
pq\ket{1x_0}  + q(1-p)\ket{1x_2} \\
+ (1-q)p\ket{0x_1} + (1-q)(1-p)\ket{0x_2} 
\end{array} 
\\
& = \sem{
\InputIfFileExists{ax-swap-right.tikz}{}{\input{./tikz/ax-swap-right.tikz}}
}(x_0,x_1,x_2,x_3)
\end{align*}
\end{description}
\end{proof}

\subsection{Completeness}
\label{sec:completeness-caus-circ}

To prove completeness of our equational theory, we use a normal form argument, a common strategy in completeness proofs for diagrammatic calculi \cite{piedeleu2020,gu2023}. The idea is that normal forms provide a unique syntactic representative for each semantic object in the image of the interpretation functor $\sem{\cdot}$, thereby guaranteeing that, if two circuits denote the same distribution, they will be equal to the same normal form. The proof then relies on a normalisation procedure: an explicit algorithm which rewrites any given circuit into normal form, using only the proposed axioms. 

Our normalisation argument proceeds in two high-level steps: 1) first, we give a procedure to normalise $n\to 1$ causal circuits (Section~\ref{sec:n-1-compleness}); 2) then, we reduce the case of general $m\to n$ causal circuits to multiple applications of the procedure for the $n\to 1$ case (Section~\ref{sec:m-n-completeness}).
Let us explain each of these steps in a bit more details.
\begin{enumerate}
\item The normalisation procedure of $n\to1$ circuits can itself be broken down into two steps. \begin{itemize}
\item First, we rewrite a given $n\to 1$ circuit into \emph{pre-normal form}: a convex sum of Boolean circuits (Definition~\ref{def:pre-nf}). Recall that binary convex sums are a derivable operation in our syntax, given by an \texttt{if-then-else} gate with a single $\Flip{p}$ as its guard. Thus, the core of rewriting a given circuit into pre-normal form is \emph{i)} to move all occurrences of the probabilistic generators $\Flip{p}$ to the guard of some \texttt{if-then-else} gate, and  \emph{ii)} to eliminate correlations between different \texttt{if-then-else} gates which share the same $\Flip{p}$ (via $\Bcomult$ nodes). We will use \emph{Shannon expansion} to achieve \emph{i)}, but this tends to introduce further correlations between \texttt{if-then-else} gates, which we then have to remove with axiom E4. 
\item Then, given a circuit $c\from n\to 1$ in pre-normal form, we rewrite it into a normal form, which is just a direct encoding of the probability table of $\sem{c}\from \Bool^n\to \Bool$: it is a disjunction of all possible inputs and their associated Bernoulli distribution in the form of a single $\Flip{p}$ taken in conjunction (see Definition~\ref{def:nf-n-1}). Thus for any input $x\in\Bool^n$, we can read the probability of $\sem{c}(x)$ immediately from the normal form of $c$ and conclude that any two $n\to 1$ circuits that are mapped to the same distribution have the same normal form. Given $c\from n\to 1$ in pre-normal, \emph{i.e.}, given as a convex sum of $n \to 1$ Boolean circuits, Lemma~\ref{lem:derived-E5} (special case of axiom E2) will be our main tool to obtain the normal form, allowing us to add the contributions of the different Boolean components of the convex sum to the distribution of $\sem{c}$ on each input. 
\end{itemize}
\item For $m\to n$ causal circuits, the main idea is to reduce it to the $n\to 1$ case by reproducing the disintegration of the corresponding distribution. For example, we rewrite a circuit $c\from m\to 2$, into the composition of a circuit $m\to 1$ whose semantics is that of the marginal distribution on the variable corresponding to the first right wire, and a circuit $m+1\to 1$ whose semantics is that of the conditional distribution of the variable corresponding to the second wire, conditional on the first. Semantically, this amounts to disintegrating a joint distribution $\Prob(x,y)$ into the product $\Prob(x)\Prob(y|x)$, by conditioning on the first variable. This process can be iterated: starting from a circuit $m\to n$, we obtain $n$ circuits of type $m+k\to 1$, with $0\leq k\leq n-1$. We achieve this by repeatedly computing disintegrations by induction on the structure of circuits. 
Here, the main engine of the disintegration procedure is axiom E2, whose two sides give the two possible ways of disintegrating a joint distribution on two variables, as already explained in Section~\ref{sec:equational-theory}.
\end{enumerate}

\subsubsection{Preliminaries}
\label{sec:completeness-preliminaries}

First, we need to define syntactic sugar for $n$-ary versions of \texttt{AND}-gates to take the conjunction of $n$ inputs, and $n$-output copying nodes to copy one input to $n$ different locations:
\begin{align}
\label{eq:n-ary-broadcast}

\InputIfFileExists{broadcast-1-1xn.tikz}{}{\input{./tikz/broadcast-1-1xn.tikz}}
 := 
\InputIfFileExists{broadcast-1-1xn-def.tikz}{}{\input{./tikz/broadcast-1-1xn-def.tikz}}
\qquad 
\InputIfFileExists{broadcast-1-0.tikz}{}{\input{./tikz/broadcast-1-0.tikz}}
 := \Bcounit
\\

\InputIfFileExists{and-1xn-ary.tikz}{}{\input{./tikz/and-1xn-ary.tikz}}
 := 
\InputIfFileExists{and-1xn-ary-def.tikz}{}{\input{./tikz/and-1xn-ary-def.tikz}}
 \qquad 
\InputIfFileExists{and-0-ary.tikz}{}{\input{./tikz/and-0-ary.tikz}}
 := \Flip{1}
\end{align}
where the dotted lines represent $0$ wires. Once again, $n$-ary \texttt{OR}-gates can be defined in exactly the same way.

As a consequence of a more general correspondence between algebraic theories and symmetric monoidal theories where all generators can be copied and discarded, our theory is complete for Boolean circuits.
\begin{theorem}[Completeness for Boolean circuits]
\label{thm:boolean-completeness}
For any two purely Boolean circuits $c,d\from m\to n$, $c=d$ if and only if $\sem{c}=\sem{d}$.
\end{theorem}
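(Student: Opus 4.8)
The forward direction is soundness, already established, so the content of the statement is completeness: whenever $\sem{c}=\sem{d}$, the equality $c=d$ is derivable from the axioms. The plan is to recognise the quotient of $\BoolCirc$ by the axioms as the Cartesian prop (Lawvere theory) of Boolean algebra, and then to invoke the classical completeness of Boolean algebra for Boolean functions. This is the content of the ``more general correspondence'' alluded to before the statement.

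First I would show that the axioms turn the copy/discard structure into a \emph{natural} commutative comonoid. The A-block makes $(\Bcomult,\Bcounit)$ into a cocommutative comonoid on the generating object; the C-block states that each Boolean generator commutes with $\Bcomult$, and the D-block that each commutes with $\Bcounit$. A routine induction on circuit structure—using that comonoid homomorphisms are closed under $\poi$ and $\tns$—lifts these properties to arbitrary Boolean circuits. Hence, in the quotient, every morphism is a comonoid homomorphism, and by Fox's theorem the monoidal product becomes a categorical product, so the quotient is a \emph{Cartesian} prop.

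Next I would exploit the universal property of products. In a Cartesian prop, every $c\from m\to n$ equals the tuple $\langle c\poi \pi_1,\dots,c\poi\pi_n\rangle$ of its $n$ components $m\to 1$, where the $\pi_i$ are the projections built from $\Bcounit$ and $\idone$. Two parallel morphisms therefore coincide iff all their components do, so it suffices to prove completeness for circuits of type $m\to 1$. Modulo the comonoid laws, which account for variable duplication and weakening, such morphisms are in bijection with terms over the signature $\{\Andgate,\Notgate,\Flip 1\}$ (with $\Flip 0$ as needed for the constant \texttt{false}) in $m$ variables, and under this bijection the B-block axioms B1--B7 are a presentation of the equational theory of Boolean algebra.

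Finally I would appeal to the classical completeness of Boolean algebra: two Boolean terms in $m$ variables denote the same function $\Bool^m\to\Bool$ iff they are provably equal from the Boolean-algebra axioms, e.g. by normalising both to disjunctive normal form. Transporting this along the correspondence established above closes the argument. I expect the main obstacle to be the bookkeeping of the first two steps—formalising the naturality induction and the precise term/diagram correspondence, so that the Cartesian structure is genuinely available—rather than the final appeal, which is standard. Some care is also needed to confirm that B1--B7 indeed derive a known complete axiomatisation of Boolean algebra: the (co)commutative bimonoid structure, involution B4, idempotence B5, complementation B6, and distributivity B7 together should generate all Boolean identities.
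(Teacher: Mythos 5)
Your proposal is correct and follows essentially the same route as the paper: the paper's proof simply cites the general correspondence between algebraic theories and symmetric monoidal theories with a natural comonoid structure (Bonchi et al., \emph{Deconstructing Lawvere}, Theorem 6.1) and then appeals to classical Boolean-algebra completeness, exactly the argument you reconstruct by hand via Fox's theorem and the product decomposition into $m\to 1$ components. Your closing caveat about checking that B1--B7 (with the bimonoid structure) form a complete presentation of Boolean algebra is the one point the paper asserts without detail, so flagging it is reasonable but does not constitute a divergence.
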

\begin{proof}
This is an immediate consequence of a general theorem~\cite[Theorem 6.1]{bonchi2018deconstructing} establishing a one-to-one correspondence between algebraic theories and symmetric monoidal theories with a distinguished natural comonoid structure. Here, this structure is given by $\Bcomult$ and $\Bcounit$, which the A-block axioms make into a comonoid, as expected. Moreover, this comonoid is natural when all other generators distribute over $\Bcomult$ and $\Bcounit$; the axioms of the (B4)-(B7)-block guarantee precisely that. Finally, the axioms of the B-blocks and axiom C1 (for $p=1$) are a direct diagrammatic translation of the usual Boolean algebra axioms. Since the algebraic theory of Boolean algebras is well-known to be complete (for the standard semantics which coincides with ours) the general theorem applies and our theory is complete for Boolean circuits.
\end{proof}

\begin{remark}
Given Theorem~\ref{thm:boolean-completeness}, we can--and will--move freely between Boolean circuits and their semantics. This allows us to focus on the specifically probabilistic aspects of our calculus in the proof of completeness below.
\end{remark}

In what follows, we will make extensive use of \emph{Shannon expansion}, a notion we now recall. We write $f|_{x\leftarrow b}$ for the function resulting from assigning the Boolean value $b$ to its variable $x$, and say that $f|_{x\leftarrow b}$ is a \emph{restriction} of $f$.
When $f:\mathbb{B}^n\to\mathbb{B}$ is a Boolean function, given the only two possible restrictions for an argument $x_i$, $\texttt{true}$ and $\texttt{false}$, $f$ can be rewritten as
\[
f=\lnot x_i\cdot f|_{x_i\leftarrow 0}+x_i\cdot f|_{x_i\leftarrow 1}
\]
this identity is commonly referred to as the \textit{Shannon expansion} (or \textit{Shannon decomposition}) of $f$ w.r.t. $x_i$. 
\begin{lemma}[Shannon expansion]
  \label{lemma:shannon-expansion}
  For every Boolean circuit $b\from n\to 1$ with $n\geq 1$, we have
  \begin{center}
    
\InputIfFileExists{completeness/shannon-expansion-lemma.tikz}{}{\input{./tikz/completeness/shannon-expansion-lemma.tikz}}

  \end{center}
where $b_{\lnot x}$ and $b_x$ are the Boolean circuits that encode the $\Flip{0}$ and $\Flip{1}$ restrictions of $b$, respectively. 
\end{lemma}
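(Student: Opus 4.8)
The plan is to avoid any direct equational manipulation and instead reduce the claim to a semantic identity, exploiting the completeness result for Boolean circuits (Theorem~\ref{thm:boolean-completeness}) that has already been established. The first observation is that both sides of the claimed equation are \emph{purely Boolean} circuits: the copy node $\Bcomult$ is Boolean, the multiplexer $\Ifgate$ is by definition (Example~\ref{ex:terms}) built entirely from Boolean generators, and the two branches $b_{\lnot x}$ and $b_x$ are obtained from $b$ by plugging the Boolean constants $\Flip{0}$ (false) and $\Flip{1}$ (true) into its first input wire, hence they are themselves Boolean circuits on $n-1$ inputs. Since the left-hand side $b$ is Boolean as well, Theorem~\ref{thm:boolean-completeness} tells us that the two circuits are provably equal in our theory if and only if they denote the same Boolean function.

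It therefore suffices to verify the semantic equality, which amounts to checking that both sides compute the same element of $\Bool$ on every input $(x_1,\dots,x_n)\in\Bool^n$. Here I would first record the \emph{restriction property}: by functoriality of $\sem{\cdot}$ (Proposition~\ref{prop:semantics-functorial}) together with $\sem{\Flip{1}}(\bullet) = \ket{1}$ and $\sem{\Flip{0}}(\bullet) = \ket{0}$, we obtain $\sem{b_x}(x_2,\dots,x_n) = \sem{b}(1,x_2,\dots,x_n)$ and $\sem{b_{\lnot x}}(x_2,\dots,x_n) = \sem{b}(0,x_2,\dots,x_n)$, so that the two branches compute exactly the restrictions $\sem{b}|_{x_1\leftarrow 1}$ and $\sem{b}|_{x_1\leftarrow 0}$.

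Next I would unfold the semantics of the multiplexer on the right-hand circuit: the first input is routed to the guard, while the remaining $n-1$ wires are broadcast (via $\Bcomult$) into both branches. Evaluating the multiplexer then returns $\sem{b_x}(x_2,\dots,x_n)$ when $x_1=1$ and $\sem{b_{\lnot x}}(x_2,\dots,x_n)$ when $x_1=0$. Combining this with the restriction property recovers precisely the classical Shannon decomposition $\sem{b} = \lnot x_1 \cdot \sem{b}|_{x_1\leftarrow 0} + x_1\cdot \sem{b}|_{x_1\leftarrow 1}$, so the two sides agree on every input, and the lemma follows.

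There is no serious obstacle here. The only points requiring care are \emph{(i)} confirming that the right-hand circuit is genuinely Boolean, so that Theorem~\ref{thm:boolean-completeness} applies and the equation can be discharged semantically, and \emph{(ii)} reading off the semantics of the multiplexer correctly, in particular that the $n-1$ non-guard wires must be copied into both branches, which is exactly why the right-hand side carries broadcasting nodes on those wires. Once these are in place, the statement is immediate from the standard Shannon identity for Boolean functions.
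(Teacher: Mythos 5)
Your proposal is correct and is essentially the paper's own argument: the paper likewise dispatches the lemma in one line as a corollary of Boolean completeness (Theorem~\ref{thm:boolean-completeness}), since both sides are Boolean circuits and the identity is the classical Shannon decomposition in the semantics. You merely spell out the details the paper leaves implicit (that the right-hand side is genuinely Boolean, the restriction property of the branches, and the semantics of the multiplexer), all of which check out.
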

\begin{proof}
This is a simple corollary of Boolean completeness (Theorem~\ref{thm:boolean-completeness}): since the identity holds in the semantics, the equality is derivable within our theory, for any Boolean circuit $b\from m\to 1$.
\end{proof}

\begin{lemma}[Boolean circuits can be copied]
\label{lem:copy-boolean}
Any boolean circuit $b\from m\to n$ satisfies the following identity:
\[
\InputIfFileExists{b-copy.tikz}{}{\input{./tikz/b-copy.tikz}}
\;=\;
\InputIfFileExists{copy-bxb.tikz}{}{\input{./tikz/copy-bxb.tikz}}
\]
We say that Boolean circuits can be copied.
\end{lemma}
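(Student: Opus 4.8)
The plan is to reduce the claim to the semantics and then invoke Boolean completeness. Both diagrams appearing in the statement are purely Boolean circuits $m\to 2n$: the left-hand side is $b$ postcomposed with the $n$-ary copy node $\Bcomult$, and the right-hand side is the $m$-ary copy node precomposed with $b\otimes b$. By Theorem~\ref{thm:boolean-completeness}, to prove that these two Boolean circuits are \emph{provably} equal it suffices to check that they are \emph{semantically} equal, i.e.\ that $\sem{b\poi\Bcomult}=\sem{\Bcomult\poi(b\otimes b)}$. This is exactly the strategy licensed by the remark immediately following Theorem~\ref{thm:boolean-completeness}, which lets us move freely between Boolean circuits and their denotations.

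To establish the semantic equality, I would first observe that $\sem{b}$ is a \emph{deterministic} stochastic map for every Boolean circuit $b$. Indeed, by the inductive clauses of Definition~\ref{def:semantics}, each Boolean generator $\Bcomult,\Bcounit,\Andgate,\Notgate,\idone,\sym,\Flip{0},\Flip{1}$ denotes a function lifted to $\fStoch$, and functions are closed under the sequential composition and monoidal product used to assemble Boolean circuits; hence $\sem{b}$ is (the lift of) a function $\Bool^m\to\Bool^n$. As recalled in Section~\ref{sec:Markov-category}, the deterministic maps are \emph{precisely} those $f$ satisfying $f\distcomp\Delta = \Delta\distcomp(f\times f)$, and this equation is exactly the semantic content of the two sides of the statement. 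Therefore the two circuits denote the same stochastic map, and Boolean completeness upgrades this to a derivable equality.

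The only thing requiring care in this route is the claim that Boolean circuits denote functions, which is a routine induction on the structure of $b$ (base cases: each generator is a function; inductive steps: composites and products of functions are functions). A more elementary but heavier alternative avoids the semantics altogether and argues by structural induction using only the axioms: the base cases are axioms C0--C3 together with the comonoid laws of the A-block, which handle the copying of $\Andgate$, $\Notgate$, $\Flip{0}$, $\Flip{1}$ and of $\Bcomult$, $\Bcounit$, $\idone$, $\sym$ respectively; the inductive step pushes the copy node through a composite $c\poi d$ by applying the induction hypothesis first to $d$ and then to $c$ (using the interchange law), and through a product $c\otimes d$ after factoring the copy on the output wires. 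In this syntactic route the main obstacle is purely the bookkeeping: one must show that the $(n_1+n_2)$-ary copy decomposes, up to symmetries, as the parallel composite of an $n_1$-ary and an $n_2$-ary copy, and then re-sort the two resulting copies using naturality of $\sym$. None of this is conceptually difficult but all of it is notationally cumbersome, which is exactly why I would favour the short semantic argument via Theorem~\ref{thm:boolean-completeness}.
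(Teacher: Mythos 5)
Your proposal is correct and matches the paper's own proof, which likewise dispatches the lemma as an immediate corollary of Theorem~\ref{thm:boolean-completeness} on the grounds that the equality ``holds plainly in the semantics''. Your additional justification---that every Boolean circuit denotes a deterministic map by induction on its structure, and that deterministic maps are exactly those satisfying $f\distcomp\Delta = \Delta\distcomp(f\times f)$ as recalled in Section~\ref{sec:Markov-category}---merely spells out the semantic check the paper leaves implicit, so it is the same argument with the details filled in.
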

\begin{proof}
Once again, this is a corollary of Theorem~\ref{thm:boolean-completeness}: since the equality holds plainly in the semantics, the equality is derivable within our theory, for any Boolean $b\from m\to n$.
\end{proof}
The following lemma tells us that $\CausCirc$ quotiented by our equational theory defines a Markov category~\cite{fritzmarkovcats} whose discarding maps are the family of $\Bcounitn{n}$ for all $n\in\N$ (which is the counit of the comonoid whose multiplication is the corresponding copying maps $\Bcomultn{n}$).
\begin{lemma}[Causal circuits can be discarded]
\label{lem:delete}
Any causal circuit $c\from m\to n$ satisfies the following identity:
\[
\InputIfFileExists{c-delete.tikz}{}{\input{./tikz/c-delete.tikz}}
\;=\;\Bcounitn{m}\]
We say that causal circuits are discardable.
\end{lemma}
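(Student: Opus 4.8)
The plan is to argue by structural induction on the causal circuit $c$, using the fact that $\CausCirc$ is the free prop on the generators of lines~\eqref{eq:bool-syntax}--\eqref{eq:causal-syntax}. Every causal circuit is thus obtained from the generators $\Bcounit,\Bcomult,\Andgate,\Notgate,\Flip{p}$ and the structural morphisms $\idone,\sym,\idzero$ by the two compositions $\poi$ and $\tns$. Recalling that the $n$-ary discard is just $n$ parallel copies of $\Bcounit$, with $\Bcounitn{0}=\idzero$ and $\Bcounitn{n_1+n_2}=\Bcounitn{n_1}\tns\Bcounitn{n_2}$ by definition, it therefore suffices to (i) verify the discarding identity on each generator and (ii) show that it is stable under sequential and parallel composition.

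For the base cases, each generator is discharged by a short derivation from the axioms of Fig.~\ref{fig:eqs} (together with the laws of SMCs). The cases of $\idone$ and $\idzero$ are trivial, and $\Bcounit$ has no outputs to discard. For $\sym$ one uses only naturality of the symmetry, since $\sym\poi(\Bcounit\tns\Bcounit)=\Bcounit\tns\Bcounit$. For $\Bcomult$, two applications of the counit axiom A2 give $\Bcomult\poi(\Bcounit\tns\Bcounit)=\Bcounit$. The remaining three generators are precisely the content of the D-block: axiom D1 gives $\Andgate\poi\Bcounit=\Bcounit\tns\Bcounit$, axiom D2 gives $\Notgate\poi\Bcounit=\Bcounit$, and axiom D3 gives $\Flip{p}\poi\Bcounit=\idzero$ for every $p\in[0,1]$.

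The inductive cases follow formally from the SMC laws and the induction hypothesis. For a sequential composite $c\poi d$ with $c\from m\to k$ and $d\from k\to n$, associativity of $\poi$ together with the hypothesis applied first to $d$ and then to $c$ yields $(c\poi d)\poi\Bcounitn{n}=c\poi\Bcounitn{k}=\Bcounitn{m}$. For a parallel composite $c\tns d$ with $c\from m_1\to n_1$ and $d\from m_2\to n_2$, the interchange law and the hypothesis give $(c\tns d)\poi(\Bcounitn{n_1}\tns\Bcounitn{n_2})=(c\poi\Bcounitn{n_1})\tns(d\poi\Bcounitn{n_2})=\Bcounitn{m_1}\tns\Bcounitn{m_2}$, which equals $\Bcounitn{m_1+m_2}$ by definition. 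The induction is otherwise entirely routine, so the only conceptual content sits in the $\Flip{p}$ base case: axiom D3 is exactly the statement that a coin flip is normalised ($p+(1-p)=1$), and this is where the restriction to the conditioning-free fragment matters. The generator $\Bmult$ is \emph{not} discardable---discarding its output leaves a proper, non-stochastic subdistribution---so the lemma holds for $\CausCirc$ but would fail in $\ProbCirc$. In this sense the D-block axioms are designed precisely to make each causal generator discardable, after which discardability propagates through the whole circuit for free.
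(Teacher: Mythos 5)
Your proof is correct and takes essentially the same route as the paper's: structural induction on the circuit, with each generator discharged by a single axiom (A2 for $\Bcomult$, the D-block for $\Andgate$, $\Notgate$ and $\Flip{p}$) and the sequential/parallel cases handled by associativity and the interchange law together with the induction hypothesis. Your axiom citations are in fact the intended ones---the paper's proof text garbles the labels for the D-axioms---and your closing observation that $\Bmult$ is precisely the non-discardable generator matches the paper's discussion of causality.
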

\begin{proof}
We show this by induction on the structure of $c$. There is one base case for each generator of $\CausCirc$, each of which is handled by one axiom: A2l and A2r together for $\Bcomult$, B4 for $\Andgate$, B14 for $\Notgate$, B7 for $\Flip{p}$. Then, we have two inductive cases---one for sequential composition and one for parallel composition.
\begin{description}
\item[Sequential composition.] Assume circuits $c\from m\to n$ and $d\from n \to o$ can be discarded. Then
\[
\InputIfFileExists{c-d-delete.tikz}{}{\input{./tikz/c-d-delete.tikz}}
\;=\;
\InputIfFileExists{c-delete.tikz}{}{\input{./tikz/c-delete.tikz}}
\;=\;\Bcounitn{m}\]
\item[Parallel composition.] Assume circuits $c_1\from m_1\to n_1$ and $c_2\from m_2\to n_2$ can be discarded. Then
\[
\InputIfFileExists{c1xc2-delete.tikz}{}{\input{./tikz/c1xc2-delete.tikz}}
\;=\; 
\InputIfFileExists{c1-deletexdelete.tikz}{}{\input{./tikz/c1-deletexdelete.tikz}}
\;=\; 
\InputIfFileExists{deletexdelete.tikz}{}{\input{./tikz/deletexdelete.tikz}}
\]
\end{description}
\end{proof}
As a consequence of the last two lemmas (or of the completeness for Boolean circuits), Boolean circuits with multiple output wires can always be decomposed into multiple single-output circuits that share their inputs.
\begin{corollary}[Boolean circuits decompose]
\label{lem:boolean-decompose-multi-output}
For any Boolean circuit $b\from m\to n$, there exists $n$ Boolean circuits $b_1,\dots,b_n\from m \to 1$ such that
\[\boolcircuit{b}{m}{n} \;=\; 
\InputIfFileExists{b-decomposed.tikz}{}{\input{./tikz/b-decomposed.tikz}}
\]
\end{corollary}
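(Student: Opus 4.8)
The plan is to follow the two routes the statement itself suggests, leading with the short semantic argument (in keeping with how Lemma~\ref{lemma:shannon-expansion} and Lemma~\ref{lem:copy-boolean} were just dispatched) and keeping the purely syntactic derivation in reserve. First I would pin down the candidate factors: for each $i\in\{1,\dots,n\}$ define $b_i\from m\to 1$ to be $b$ with every output wire except the $i$-th discarded, i.e. $b_i\df b\poi(\Bcounitn{i-1}\tns\idone\tns\Bcounitn{n-i})$. Since $\Bcounit$, and hence each $\Bcounitn{k}$, is a $\BoolCirc$ generator, every $b_i$ is again a Boolean circuit of type $m\to 1$, so the right-hand side of the claimed equation is a well-formed Boolean circuit.

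For the quick proof I would then simply observe that both sides are Boolean circuits with equal semantics and appeal to Theorem~\ref{thm:boolean-completeness}. On the semantic side this is immediate: a deterministic map $\Bool^m\to\Bool^n$ is exactly the tuple of its $n$ coordinate functions $\Bool^m\to\Bool$, and $\sem{b_i}$ is precisely the $i$-th coordinate of $\sem{b}$; copying each input and feeding the $i$-th bundle into $\sem{b_i}$ therefore reconstructs $\sem{b}$ on the nose. Since the two circuits denote the same stochastic map, Theorem~\ref{thm:boolean-completeness} makes the equation derivable.

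The more instructive syntactic derivation, which uses the ``last two lemmas'' directly, goes as follows. Substituting the definitions of the $b_i$, the right-hand side is: broadcast each of the $m$ inputs $n$-fold (an $n$-ary $\Bcomultn{n}$ on every input wire), apply a copy of $b$ to each of the $n$ resulting bundles of $m$ wires, and then, in the $i$-th block, keep the $i$-th output and discard the rest. By the $n$-ary form of Lemma~\ref{lem:copy-boolean} (Boolean circuits can be copied), copying the inputs and applying $b$ to each copy equals applying $b$ once and copying its $n$ outputs $n$-fold. What remains is purely comonoidal: after each of the $n$ outputs is copied $n$ times, output $j$ is kept exactly once (in block $j$) and discarded $n-1$ times; the counit law (A2) collapses each discarded copy and associativity/commutativity of copying (A1, A3) reassemble the surviving wires in order, so the composite on the $n$ outputs is the identity. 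Hence the right-hand side reduces to $b\poi\idone = b$.

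I expect the only delicate part of the syntactic route to be the bookkeeping of wire routings in this final comonoidal step --- tracking, across the $n$ blocks, which copy of which output is retained versus discarded --- rather than any genuine conceptual difficulty. Because this routing is tedious to render diagrammatically and adds no insight, I would present the semantic argument via Theorem~\ref{thm:boolean-completeness} as the actual proof, noting the syntactic derivation above as its underlying justification.
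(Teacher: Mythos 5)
Your proposal is correct and takes essentially the same approach as the paper: the paper offers no explicit proof, stating the corollary ``as a consequence of the last two lemmas (or of the completeness for Boolean circuits)'', which are exactly the two routes you develop --- the syntactic one via Lemma~\ref{lem:copy-boolean} plus the comonoid laws, and the semantic one via Theorem~\ref{thm:boolean-completeness}. Your definition of the $b_i$ by discarding all but the $i$-th output, and your handling of the copy/discard bookkeeping, simply make explicit what the paper leaves implicit.
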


Finally, we will need the following derived law, which is simply the single-output version of axiom E2. It will allow us to take the actual convex sum  (with weight $r$) of two $\Flip{p},\Flip{q}$, and will prove key in Section~\ref{sec:n-1-compleness} to reduce a tree of convex sums into a normal form from which the probability distribution associated to a given circuit can be read unambiguously.
\begin{lemma}\label{lem:derived-E5}
The following equality is derivable for all $r,p,q\in[0,1]$:
\[
\InputIfFileExists{ax-dist.tikz}{}{\input{./tikz/ax-dist.tikz}}
\quad \text{where } \tilde{r}=rp+(1-r)q \]
\end{lemma}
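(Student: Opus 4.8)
The plan is to obtain this equality as the \emph{marginal} (single-output) instance of axiom E2, by discarding E2's second output wire. Recall from the soundness computation that E2 equates the two disintegrations of a joint distribution on a pair of variables $(\mathrm{out}, g)$: read from left to right, its left-hand side first samples a guard $g$ from \Flip{r} and then, conditionally on $g$, samples $\mathrm{out}$ from \Flip{p} when $g=1$ and from \Flip{q} when $g=0$; its right-hand side first samples $\mathrm{out}$ from \Flip{\tilde r}, with $\tilde r = rp+(1-r)q$, and then samples $g$ from the Bayesian-inverse conditional $g\mid\mathrm{out}$ (whose biases are the $\tilde p,\tilde q$ of E2). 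Diagrammatically, the second output $g$ appears on the left as a \emph{copy} of the wire carrying the value sampled by \Flip{r}, and on the right as the output of a conditional block that is itself a (conditioning-free) causal circuit.

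First I would post-compose both sides of E2, taken in its closed form~\eqref{eq:0-2-distribution}, with a discarding map on the second output, keeping only the first output $\mathrm{out}$. On the left-hand side, the wire being discarded is exactly a copy of the guard; applying the counit law (A2) collapses the copy, so that the value of \Flip{r} is fed directly into the if-then-else with branches \Flip{p} and \Flip{q}. This is precisely the convex-sum circuit \ConvexSum{r} fed with \Flip{p} and \Flip{q}, i.e. the left-hand side of the lemma.

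On the right-hand side, discarding $g$ removes the output of the conditional block $g\mid\mathrm{out}$. Since that block is a $\CausCirc$ circuit, Lemma~\ref{lem:delete} applies: discarding its output equals discarding its input, so the block collapses to a counit \Bcounit{} on the shared $\mathrm{out}$ wire. A second application of the counit law (A2) then erases the remaining copy, leaving only \Flip{\tilde r}, the right-hand side of the lemma. The scalar $\tilde r = rp+(1-r)q$ is inherited verbatim from E2, which is exactly the side condition stated in the lemma.

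The main obstacle is the bookkeeping around the shared guard/output wire: one must verify, in the actual diagram of E2, that the second output is genuinely a copy of the marginal variable (so that A2 applies cleanly on the left) and that the right-hand conditional block really is conditioning-free (so that Lemma~\ref{lem:delete} is available). If one preferred to start from the fully general E2 with its extra control input rather than from~\eqref{eq:0-2-distribution}, there would be the additional step of instantiating that control to \texttt{true}, namely \Flip{1}, and discarding the auxiliary data inputs, after checking this specialisation is legitimate; working directly from the closed form~\eqref{eq:0-2-distribution} sidesteps this, and is the route I would take.
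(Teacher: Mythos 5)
Your proposal is correct and takes essentially the same route as the paper, whose proof is exactly: an application of the counit law (A2r) to introduce the copy/discard pair, then axiom E2, then Lemma~\ref{lem:delete} to erase the now-discarded conditional block, then A2r again --- the same three ingredients you use, merely narrated as a left-to-right rewrite of the lemma's left-hand side rather than as discarding the second output of both sides of~\eqref{eq:0-2-distribution}. The instantiation subtlety you flag at the end is glossed over in the paper as well, which applies E2 directly and only remarks that $\tilde{p},\tilde{q}$ are chosen to satisfy E2's side conditions depending on the value of $\tilde{r}$.
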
 
\begin{proof}
\begin{align*}

\InputIfFileExists{ax-dist-left.tikz}{}{\input{./tikz/ax-dist-left.tikz}}
 \;& \myeq{A2r} \;
\InputIfFileExists{ax-dist-left-1.tikz}{}{\input{./tikz/ax-dist-left-1.tikz}}

\\
& \myeq{E2}\; 
\InputIfFileExists{ax-dist-left-2.tikz}{}{\input{./tikz/ax-dist-left-2.tikz}}

\\
& \myeq{Lemma~\ref{lem:delete}}\quad
\InputIfFileExists{ax-dist-left-3.tikz}{}{\input{./tikz/ax-dist-left-3.tikz}}
\; \myeq{A2r}\;
\InputIfFileExists{ax-dist-right.tikz}{}{\input{./tikz/ax-dist-right.tikz}}

\end{align*}
where we pick $\tilde{p},\tilde{q}$ that satisfy the side-conditions of axiom E2, depending on the value of $\tilde{r}$.
\end{proof}

\subsubsection{Pre-normal form}

Recall that the first step on the road to completeness is to show that every circuit can be rewritten into one in \textit{pre-normal form}. Intuitively, the pre-normal form that we adopt represents a diagram as a convex combination of Boolean expressions.\footnote{This can be seen as a generalisation of Birkhoff's theorem for doubly stochastic matrices~\cite{birkhoff} to the (singly-)stochastic case.} 
\begin{definition}[Pre-normal form]
\label{def:pre-nf}
A circuit $d\from n\to 1$ is in \emph{pre-normal form} if it is in the form defined inductively below:
    \begin{center}
      
\InputIfFileExists{pre-normal-form.tikz}{}{\input{./tikz/pre-normal-form.tikz}}

    \end{center}
where $b:n\to 1$ is a Boolean circuit and $d':n\to 1$ is itself in pre-normal form or is a Boolean circuit (base case).
\end{definition}
\begin{lemma}[Pre-normalisation]
  \label{lemma:pre-normalisation}
  Every $n\to 1$ causal circuit is equal to one in pre-normal form.
  \begin{proof}
    
We reason by structural induction on circuits of type $n\to 1$, that is, we show that for an arbitrary circuit  $d\from n\to 1$ in pre-normal form, composing with any of the generators in the monoidal signature of $\CausCirc$ results in a circuit that is equal to one in pre-normal form again.

We first show that the lemma holds for all generators. It is enough to show that $\Flip{p}$ can be put in this form, as every other generator is Boolean and thus already in pre-normal form. A simple Boolean algebraic identity gives us what we want:
\begin{center}

\InputIfFileExists{completeness/n-to-1-flip.tikz}{}{\input{./tikz/completeness/n-to-1-flip.tikz}}

\end{center}

For the inductive case, consider an arbitrary circuit  $d\from n\to 1$ in pre-normal form, \emph{i.e.}, such that there exists a Boolean circuit $b\from n \to 1$ and some circuit $d'\from n\to 1$ in pre-normal form such that
\begin{center}

\InputIfFileExists{completeness/n-to-1-pre.tikz}{}{\input{./tikz/completeness/n-to-1-pre.tikz}}

\end{center}

We now consider all possible ways to compose this circuit with a generator $g$ while preserving the circuit's arity. It is clear that any $n\to 1$ circuit can be obtained by doing any of the following two operations, for any generator $g$:
\begin{center}

\InputIfFileExists{completeness/n-to-1-g-comp-d.tikz}{}{\input{./tikz/completeness/n-to-1-g-comp-d.tikz}}

\end{center}
We want to show that for all of these cases, the resulting circuit is equal to one in pre-normal form whenever $d$ is.
We consider all possible cases below.

\begin{minipage}{.15\linewidth}
\begin{mdframed}
$g:=\Flip{q}$
\end{mdframed}
\end{minipage} In this case, $\Flip{q}$ can only be composed on the left. Without loss of generality, assume $d:n\to 1$ below has $k$ multiplexers. We have
\begin{center}

\InputIfFileExists{completeness/n-to-1-flip-left01.tikz}{}{\input{./tikz/completeness/n-to-1-flip-left01.tikz}}

\end{center}
where the right-hand side of $(1)$ comes from fact that our circuits are finite and that the pre-normal form is defined inductively: thus, we can repeatedly expand $d'$ until we reach the base case of some Boolean circuit $b_k$ (for $k$ multiplexers).

First, we turn our attention to the two bottom-most Boolean circuits, $b_{k-1}$ and $b_{k}$, which we each Shannon expand w.r.t the first variable, corresponding to the wire into which $\Flip{q}$ is plugged:
\[

\InputIfFileExists{completeness/n-to-1-flip-left02.tikz}{}{\input{./tikz/completeness/n-to-1-flip-left02.tikz}}

\]
We can now massage the last circuit slightly to apply axiom E4, and push $\Flip{q}$ forward:
\[

\InputIfFileExists{completeness/n-to-1-flip-left03.tikz}{}{\input{./tikz/completeness/n-to-1-flip-left03.tikz}}

\]
This procedure can be repeated for all remaining  $k-2$ components, until we reach the first multiplexer; then we use axiom E4 one last time, to obtain
\begin{center}

\InputIfFileExists{completeness/n-to-1-flip-left04.tikz}{}{\input{./tikz/completeness/n-to-1-flip-left04.tikz}}

\end{center}
At this point, intuitively, the resulting circuit is given as a tree of binary convex sums of Boolean circuits. However, this tree of convex sums is not associated in the correct way (our pre-normal form requires the convex sums to be fully associated to the right). Fortunately, we can correct this by repeatedly applying axiom E3, which gives us a form of associativity, up to reweighing the parameters of the various $\Flip{p}$ generators involved:
\begin{center}

\InputIfFileExists{completeness/n-to-1-flip-left05.tikz}{}{\input{./tikz/completeness/n-to-1-flip-left05.tikz}}

\end{center}
We repeat this process of re-associating the convex sums, until reaching the bottom-most multiplexer, at which point the circuit is in pre-normal form.

\begin{minipage}{.15\linewidth}
\begin{mdframed}
$g:=\Bcomult$
\end{mdframed}
\end{minipage}
\quad This case is straightforward, as pre-composing with a copier preserves the non-probabilistic structure of a circuit in pre-normal form. We can reason by induction on the number $k$ of multiplexer in the pre-normalised $d$. For $k=0$, the circuit $d$ is Boolean and therefore already in pre-normal form. Assume that we can pre-normalise 
\InputIfFileExists{copyxid-d.tikz}{}{\input{./tikz/copyxid-d.tikz}}
 for $d$ in pre-normal form with $k$  multiplexers. Now, let $d\from n\to 1$ be some circuit in pre-normal form with $k+1$ multiplexers. We have
\begin{center}

\InputIfFileExists{completeness/n-to-1-copy01.tikz}{}{\input{./tikz/completeness/n-to-1-copy01.tikz}}

\end{center}
The circuit in the dashed box above is Boolean, and we can apply the induction hypothesis to $d'$ (which has $k$ multiplexers) to obtain a circuit in pre-normal form.

\begin{minipage}{.165\linewidth}
\begin{mdframed}
$g:=\Andgate$
\end{mdframed}
\end{minipage} There are two possible cases for $\Andgate$: we can compose it on the left or on the right of $d$.
\begin{itemize}
\item First, on the left. We can reason once again by induction on the number $k$ of multiplexers in $d$. For the $k=0$, $d$ is Boolean and therefore already in pre-normal form. Assume that we can pre-normalise 
\InputIfFileExists{andxid-d.tikz}{}{\input{./tikz/andxid-d.tikz}}
 for $d$ in pre-normal form with $k$ multiplexers. Let $d\from n\to 1$ be some circuit in pre-normal form with $k+1$ multiplexers. We have
\begin{center}

\InputIfFileExists{completeness/n-to-1-and-left.tikz}{}{\input{./tikz/completeness/n-to-1-and-left.tikz}}

\end{center}
The circuit in the dashed box above is Boolean and we can apply the induction hypothesis to $d'$ to obtain a circuit in pre-normal form.

\item Now, on the right.  We reason again by induction on the number of multiplexers in $d$. For $k=0$, the circuit is Boolean so in pre-normal form. Assume that we can pre-normalise 
\InputIfFileExists{completeness/d-and.tikz}{}{\input{./tikz/completeness/d-and.tikz}}
 for all $d$ in pre-normal form with $k$ multiplexers. Let $d\from n\to 1$ be some circuit in pre-normal form with $k+1$ multiplexers. We will need the following easily verifiable Boolean algebra identity, which can be checked semantically (by Theorem~\ref{thm:boolean-completeness}) or algebraically, B-axioms:
\[
\InputIfFileExists{completeness/if-and-identity.tikz}{}{\input{./tikz/completeness/if-and-identity.tikz}}
\]
Then, we have
\[
\InputIfFileExists{completeness/and-d-right.tikz}{}{\input{./tikz/completeness/and-d-right.tikz}}
\]
The circuit in the dashed box above is a Boolean circuit, and we can apply the induction hypothesis to $d$ to obtain a circuit in pre-normal form.

\end{itemize}

\begin{minipage}{.16\linewidth}
   \begin{mdframed}
      $g:=\Notgate$
   \end{mdframed}
\end{minipage}
Similar to the case of $\Andgate$ there are two cases for $\Notgate$: we show that pre and post-composing with $\Notgate$ results in a circuit in pre-normal form. 
\begin{itemize}
\item First, on the left. Once again, we reason by induction on the number of multiplexers in $d$. For the base case of $k=0$, the circuit is Boolean and thus in pre-normal form. Assume that we can pre-normalise $
\InputIfFileExists{completeness/not-d.tikz}{}{\input{./tikz/completeness/not-d.tikz}}
$ for all $d$ in pre-normal form with $k$ multiplexers. Let $d\from n\to 1$ be some circuit in pre-normal form with $k+1$ multiplexers. Then, we have

\begin{center}
   
\InputIfFileExists{completeness/not-d-left.tikz}{}{\input{./tikz/completeness/not-d-left.tikz}}

\end{center}
The circuit in the dashed box above is Boolean, and we can conclude the pre-normalisation by appealing to the induction hypothesis for $d'$. 

\item Now, on the right. We reason again by induction on the number of multiplexers in $d$. For $k=0$, the circuit is Boolean so in pre-normal form. Assume that we can pre-normalise 
\InputIfFileExists{completeness/d-not.tikz}{}{\input{./tikz/completeness/d-not.tikz}}
 for all $d$ in pre-normal form with $k$ multiplexers. Let $d\from n\to 1$ be some circuit in pre-normal form with $k+1$ multiplexers. Then, we have
\begin{center}

\InputIfFileExists{completeness/not-d-right.tikz}{}{\input{./tikz/completeness/not-d-right.tikz}}

\end{center}
where the second equality is a simple Boolean algebra identity (it is clear that negating the output of \texttt{if-then-else} is the same as negating its two input branches). Finally, the circuit in the  dashed box above is Boolean, so we only have to apply the induction hypothesis to $d'$ to conclude the pre-normalisation proof.
\end{itemize}

  \end{proof}
\end{lemma}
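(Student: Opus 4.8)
The plan is to proceed by structural induction on $n\to 1$ causal circuits, exploiting the fact that any such circuit is obtained by successively composing generators of $\CausCirc$ in ways that preserve the single output wire. Concretely, I would take as induction hypothesis that a circuit $d\from n\to 1$ is already in pre-normal form, and show that pre- or post-composing $d$ with each generator (in every way that keeps the type $n\to 1$) again yields a circuit provably equal to a pre-normal form. Since the pre-normal form is itself defined inductively as a convex sum of a Boolean circuit with a smaller pre-normal form (Definition~\ref{def:pre-nf}), most inductive steps will recurse on the inner pre-normal circuit, using an auxiliary induction on the number $k$ of nested convex sums (equivalently, multiplexers) occurring in $d$.

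For the base cases, every generator except $\Flip{p}$ is Boolean, hence already a pre-normal form by the base case of Definition~\ref{def:pre-nf}. For $\Flip{p}$ itself, I would rewrite it as the convex sum (an \texttt{if-then-else} whose guard is $\Flip{p}$) of the Boolean constants \texttt{true} and \texttt{false}, using a one-line Boolean identity; this is a pre-normal form by construction.

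For the inductive cases involving the Boolean generators ($\Bcomult$, and $\Andgate$ or $\Notgate$ on either side), the strategy is uniform: the Boolean generator is absorbed into the base Boolean circuit of the pre-normal form, leaving the probabilistic skeleton untouched, so the induction hypothesis applies to the smaller pre-normal circuit. The main facts needed here are that Boolean circuits can be copied and discarded (Lemmas~\ref{lem:copy-boolean} and~\ref{lem:delete}) and that purely Boolean equalities are derivable (Theorem~\ref{thm:boolean-completeness}). The only mild subtlety is post-composing with $\Notgate$, where I would first invoke the Boolean identity that negating the output of a multiplexer equals negating both of its branches, after which the step again reduces to the inductive hypothesis.

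The hard part will be pre-composing with a fresh $\Flip{q}$, which introduces a new input variable that must be pushed all the way through the existing tree of convex sums. Here I would first fully expand $d$ into its tree of $k$ binary convex sums bottoming out in Boolean circuits $b_0,\dots,b_k$, then Shannon-expand (Lemma~\ref{lemma:shannon-expansion}) each bottom circuit with respect to the wire fed by $\Flip{q}$. This creates redundant correlations between the new flip and the shared guards, in the form of extra $\Bcomult$ nodes, which I would eliminate one multiplexer at a time using axiom E4, thereby pushing $\Flip{q}$ forward through the tree. The resulting circuit is a tree of binary convex sums of Boolean circuits, but associated in the wrong way for a pre-normal form; I would finally re-associate it to the right by repeated application of axiom E3, which supplies associativity of convex sums up to a reweighing of the $\Flip{p}$ parameters. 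Managing these induced correlations and carrying the reweighed parameters correctly through E4 and E3 is the principal obstacle; once that bookkeeping is in place, the circuit is in pre-normal form and the induction closes.
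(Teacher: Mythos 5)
Your proposal follows the paper's proof essentially step for step: the same structural induction with an auxiliary induction on the number of multiplexers, the same absorption of Boolean generators into the base circuit (including the negate-both-branches identity for post-composed $\Notgate$), and the same treatment of the hard $\Flip{q}$ case via Shannon expansion of the bottom Boolean circuits, elimination of the induced correlations with axiom E4, and right re-association of the convex-sum tree with axiom E3. It is correct and matches the paper's argument, so no further comparison is needed.
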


\subsubsection{Completeness for single-output causal circuits}
\label{sec:n-1-compleness}

We have shown that any $n\to 1$ causal circuit can be expressed in pre-normal form. However, it is important to note that different circuits in pre-normal form may be mapped to the same stochastic map by the interpretation $\sem{\cdot}$. In this section we establish a method for showing that any two circuits in pre-normal form which are semantically equivalent, are equal. To do so, we define a suitable normal form for single-output circuits. Given some circuit $c\from n\to 1$, its normal form will be a simple syntactic encoding of the probability table of $\sem{c}$, \emph{i.e.}, of $\sem{c}(x)$ for all $x\in \Bool^n$. 

We will then show that any causal circuit of type $n\to 1$ is equal to one in normal form, by giving a procedure to rewrite it into normal form using the axioms of our theory. The key idea of this normalisation procedure is to start from the pre-normal form of a circuit, and to progressively aggregate all probabilities coming from the Boolean components of the convex sum that the pre-normal form represents. 
More specifically, for all possible bit-vector input $x\in\mathbb{B}^n$ to a circuit $\sem{c}\from\mathbb{B}^n\to\mathbb{B}$, we need to take a convex sum of all the weights associated coming from each of the Boolean components and multiplexer in the pre-normal form. This approach allows us to aggregate all the output weights corresponding to a particular input $x$, thereby reconstructing a circuit that explicitly encodes the probability table of $\sem{c}$.

For this purpose, we will need special circuits $\All{n}\from n\to 2^n$ designed to encode all possible $2^n$ bit vectors of length $n$, that is all possible inputs on $n$ wires.
\begin{definition}[All input circuits]
\label{def:all-inputs}
We define a family of circuits $\All{n}\from n\to 2^n$ by induction on $n$. Let $\All{0}:=\Flip{1}$ and
\[\All{n+1} := 
\InputIfFileExists{all-def-nx1.tikz}{}{\input{./tikz/all-def-nx1.tikz}}
\]
\end{definition}
We can use 
\begin{definition}[Normal Form for $n\to 1$ circuits]
  \label{def:nf-n-1}
  A circuit $n\to 1$ is in \emph{normal form} when it is of the form
  \[
\InputIfFileExists{n-1-nf.tikz}{}{\input{./tikz/n-1-nf.tikz}}
\]
where $c_0\from 2^n\to 1$ is a circuit defined recursively as follows: $c_{2^n} :=\Flip{0}$ and $c_{i} := 
\InputIfFileExists{n-1-if-cascade-def.tikz}{}{\input{./tikz/n-1-if-cascade-def.tikz}}
$ for some $p_i\in[0,1]$.
\end{definition}
\noindent One way to visualise the normal form more easily is as follows:
  \[
  
\InputIfFileExists{completeness/n-to-1-nf.tikz}{}{\input{./tikz/completeness/n-to-1-nf.tikz}}

  \]
  where each $a_i\from n\to 1$ corresponds to the circuit-encoding of the $i$-th bit vector of $\Bool^n$ in the order given by $\All{n}\from n \to 2^n$. 
\begin{lemma}[Completeness for $n\to 1$ causal circuits]
\label{lem:completeness-n-1}
  Every causal circuit of type $n\to 1$ is equal to one in normal form.
  \begin{proof}
By Lemma~\ref{lemma:pre-normalisation}, we can assume that all circuits we consider are already in pre-normal form and we will reason by induction on the number $k$ of multiplexers they have. For $k=0$, the circuit is Boolean, and therefore equal to a circuit in normal form by Boolean completeness (and in this case the weights $p$ in the $\Flip{p}$ generators of Definition~\ref{def:nf-n-1} are either $0$ or $1$).

Assume that all circuits of type $n\to 1$ in pre-normal form with up to $k$ multiplexers are equal to some circuit in normal form. Now, let $d\from n\to 1$ be a circuit in pre-normal form with $k+1$ multiplexers. By definition, there exists $b$ and $d'$ in pre-normal form (with necessarily $k$ multiplexers) such that
\[\probcircuit{d}{n}{}\;=\;
\InputIfFileExists{pre-normal-form.tikz}{}{\input{./tikz/pre-normal-form.tikz}}
\]
By Boolean completeness (Theorem~\ref{thm:boolean-completeness}), we can assume that $b$ is already in normal form. Thus, there exists $b_0\from 2^n\to 1$ such that $b = \All{n} ; b_0$, with $b_0$ satisfying the conditions of Definition~\ref{def:nf-n-1}. Moreover, by the induction hypothesis, there exists $c_0$ satisfying the same condition such that $d' = \All{n} ; c_0$. Therefore, we have
\[
\InputIfFileExists{pre-normal-form.tikz}{}{\input{./tikz/pre-normal-form.tikz}}
 \;=\; 
\InputIfFileExists{copy-all-pre-nf.tikz}{}{\input{./tikz/copy-all-pre-nf.tikz}}
 \;=\; 
\InputIfFileExists{all-pre-nf.tikz}{}{\input{./tikz/all-pre-nf.tikz}}
\]
where the last equality is an instance of Lemma~\ref{lem:copy-boolean}. Moreover, since both $b_0$ and $c_0$ are defined as in Definition ~\ref{def:nf-n-1}, there are circuits $b_1$, $d_1$ and weights $q_0$ and $r_0$ such that
\[
\InputIfFileExists{all-pre-nf.tikz}{}{\input{./tikz/all-pre-nf.tikz}}
 \;=\; 
\InputIfFileExists{all-pre-nf-1.tikz}{}{\input{./tikz/all-pre-nf-1.tikz}}
\]
We can now apply Lemma~\ref{lem:derived-E5} (single-input version of axiom E2) to combine the weights:
\[ 
\InputIfFileExists{all-pre-nf-1.tikz}{}{\input{./tikz/all-pre-nf-1.tikz}}
 \qquad\myeq{Lemma~\ref{lem:derived-E5}}\qquad 
\InputIfFileExists{all-pre-nf-2.tikz}{}{\input{./tikz/all-pre-nf-2.tikz}}
\]
where $q:= pq_0 + (1-p)r_0$. 

It is clear that we can iterate this procedure $2^n$ times (for all $b_i$ and $d'_i$ in the normal form of $b$ and $d'$) in order to obtain the normal form of $d$.%

  \end{proof}
\end{lemma}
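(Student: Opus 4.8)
The plan is to reduce everything to pre-normal form and then induct on the number of multiplexers, using the single-slot mixing law as the workhorse. First, by Lemma~\ref{lemma:pre-normalisation}, any causal circuit $n\to 1$ is equal to one in pre-normal form, so I may assume from the outset that the circuit is presented as a right-associated convex sum of Boolean circuits. I would then induct on the number $k$ of multiplexers occurring in this pre-normal form. The base case $k=0$ is immediate: the circuit is purely Boolean, and Boolean completeness (Theorem~\ref{thm:boolean-completeness}) identifies it with the normal form of Definition~\ref{def:nf-n-1} in which every flip weight is forced to $0$ or $1$, directly transcribing the truth table of the underlying Boolean function along the ordering fixed by $\All{n}$.

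For the inductive step, suppose every pre-normal circuit with at most $k$ multiplexers equals one in normal form, and take $d\from n\to 1$ in pre-normal form with $k+1$ multiplexers. By Definition~\ref{def:pre-nf}, $d$ is an \texttt{if-then-else} whose guard is a single flip generator $\Flip{p}$, whose \texttt{then}-branch is a Boolean circuit $b$, and whose \texttt{else}-branch is a pre-normal circuit $d'$ with exactly $k$ multiplexers. Using Boolean completeness I replace $b$ by its normal form $\All{n}\poi b_0$, and using the induction hypothesis I replace $d'$ by its normal form $\All{n}\poi c_0$, where $b_0,c_0\from 2^n\to 1$ both satisfy the conditions of Definition~\ref{def:nf-n-1}. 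The pivotal move is to factor out the shared all-inputs gadget: since $\All{n}$ is Boolean, Lemma~\ref{lem:copy-boolean} lets me merge the two copies feeding the two branches into a single $\All{n}$ placed in front of the whole circuit, leaving behind the two weight-assigning cascades $b_0$ and $c_0$ running over the $2^n$ wires produced by $\All{n}$.

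With a single $\All{n}$ exposed, the circuit now presents, slot by slot over the $2^n$ inputs, a binary convex sum (with mixing weight $p$) of the flip generator with weight $q_i$ coming from $b_0$ and the flip generator with weight $r_i$ coming from $c_0$. Here Lemma~\ref{lem:derived-E5} (the single-output instance of axiom E2) is exactly the tool needed: it collapses such a convex sum into a single flip generator with weight $\tilde r_i = p\, q_i + (1-p)\, r_i$. Applying it at the bottom-most slot and then iterating across all $2^n$ positions aggregates the \texttt{then}- and \texttt{else}-contributions into one cascade of the shape demanded by Definition~\ref{def:nf-n-1}, which is the normal form of $d$; moreover $\tilde r_i$ is exactly the value $\sem{d}$ assigns to the $i$-th input, so the normal form transparently encodes the probability table and any two semantically equal circuits reduce to the same normal form.

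I expect the main obstacle to be the diagrammatic bookkeeping of this last aggregation step, rather than any conceptual difficulty. One has to check that after factoring out $\All{n}$ the tails $b_0$ and $c_0$ genuinely align position-by-position, so that the $2^n$ applications of Lemma~\ref{lem:derived-E5} act independently and do not interfere, and that the right-associated shape built into the cascade of Definition~\ref{def:nf-n-1} is preserved throughout (this is also where axiom E3 is implicitly relied upon to keep the convex sums correctly re-associated, exactly as in the pre-normalisation argument). Getting these shapes to match up cleanly is the delicate part; once they do, the computation of each combined weight is entirely routine.
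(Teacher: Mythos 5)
Your proposal matches the paper's proof essentially step for step: pre-normalisation, induction on the number of multiplexers with Boolean completeness as the base case, factoring out the shared $\All{n}$ via Lemma~\ref{lem:copy-boolean}, and then iterating Lemma~\ref{lem:derived-E5} across the $2^n$ slots to merge the weights as $pq_i+(1-p)r_i$. The only cosmetic difference is your parenthetical appeal to axiom E3 for re-association, which the paper confines to the pre-normalisation lemma and does not need here, since Lemma~\ref{lem:derived-E5} is applied slot by slot to cascades that are already right-associated.
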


\subsubsection{Completeness for arbitrary causal circuits}
\label{sec:m-n-completeness}

In this section we prove that our equational theory is complete for arbitrary $m\to n$ causal circuits, building on the results of the previous section on $n\to 1$ circuits. To do so, we will define a normal form for $m\to n$ causal circuits that uniquely characterises the distribution it represents, and prove that every such causal circuit is equal to one in normal form. As before, we say that a circuit equal to one in normal form is \emph{normalisable}.

\begin{definition}[Normal Form $m\to n$]
\label{def:nf-m-n}
A circuit $c\from m\to n$ is in \emph{normal form} when it is of the form
\[
\InputIfFileExists{nf-m-n-def.tikz}{}{\input{./tikz/nf-m-n-def.tikz}}
\]
where $c_0\from m\to 1$ in normal form in the sense of Definition~\ref{def:nf-n-1} and $c'\from m\to n-1$ is in normal form, with the following additional requirement: if $\sem{c_0}(b|x)$ has probability $0$ for $(b,x)\in\Bool\times\Bool^m$, then $\sem{c'}(b,x) =\ket{0^{n-1}}$.  
\end{definition}
The last condition of the above definition is here to deal with the non-uniqueness of disintegrations. As we saw in \S~\ref{sec:disintegration}, there are several possible disintegrations of a joint distribution $\Prob(x,y)$ into a marginal $\Prob(x)$ and a conditional $\Prob(y|x)$ if the marginal does not have full support---in this case, the value of $\Prob(y|x)$ for $x$ such that $\Prob(x)=0$ can be arbitrary. Because we want our normal forms to represent a given stochastic map uniquely, we need to fix a syntactic convention to handle such cases. Our chosen convention is to place all the probability mass on one arbitrary value, namely $0$, so that $\Prob(y|x) = \ket{0}$.
\begin{proposition}[Uniqueness of normal forms]
\label{prop:nf-unique}
Any two circuits $c,d\from m\to n$ in normal form and such that $\sem{c}=\sem{d}$ are equal.
\end{proposition}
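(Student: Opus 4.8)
The plan is to induct on the number $n$ of output wires, exploiting the fact that the $m\to n$ normal form of Definition~\ref{def:nf-m-n} realises a \emph{disintegration} of $\sem{c}$. Concretely, the recursive shape of the normal form means that, under the interpretation functor, $\sem{c}(b,y\mid x) = \sem{c_0}(b\mid x)\cdot\sem{c'}(y\mid b,x)$ for all $b\in\Bool$, $y\in\Bool^{n-1}$ and $x\in\Bool^m$: the component $c_0\from m\to 1$ computes the marginal of the first output wire, and $c'$ computes the conditional of the remaining $n-1$ wires given the first output together with the inputs. The induction hypothesis will be that, for all $m$, any two circuits of type $m\to(n-1)$ in normal form with equal semantics are equal.

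For the base case $n=1$, the claim is immediate from Definition~\ref{def:nf-n-1}. A circuit $c\from m\to 1$ in normal form is $\All{m}$ post-composed with a cascade of if-then-else gates whose weights $p_i$ satisfy $p_i = \sem{c}(1\mid a_i)$, where $a_i$ is the $i$-th bit vector encoded by $\All{m}$. Hence the weights---and therefore the diagram itself---are uniquely determined by $\sem{c}$, so $\sem{c}=\sem{d}$ forces $c=d$.

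For the inductive step I would first recover the marginals. Since $c'$ is a causal circuit, Lemma~\ref{lem:delete} gives that $\sem{c'}$ is stochastic, so marginalising $\sem{c}$ over the last $n-1$ wires returns exactly $\sem{c_0}$, and likewise for $d$. From $\sem{c}=\sem{d}$ we obtain $\sem{c_0}=\sem{d_0}$, and the base case yields $c_0=d_0$. It then remains to show $\sem{c'}=\sem{d'}$, after which the induction hypothesis (applied at output arity $n-1$, with input arity $m+1$) gives $c'=d'$ and hence $c=d$.

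The crux---and the only genuine subtlety---is establishing $\sem{c'}=\sem{d'}$ despite the non-uniqueness of disintegrations. Here I would invoke Proposition~\ref{prop:disintegrations-as-unique}: since $(\sem{c_0},\sem{c'})$ and $(\sem{d_0},\sem{d'})$ are two disintegrations of the single stochastic map $\sem{c}=\sem{d}$, they necessarily agree on every input $(b,x)$ with $\sem{c_0}(b\mid x)\neq 0$. On the remaining inputs, where $\sem{c_0}(b\mid x)=\sem{d_0}(b\mid x)=0$, almost-sure uniqueness says nothing, and this is precisely the gap closed by the zero-probability convention built into Definition~\ref{def:nf-m-n}: it forces both $\sem{c'}(b,x)$ and $\sem{d'}(b,x)$ to equal $\ket{0^{n-1}}$. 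Combining the two cases gives $\sem{c'}=\sem{d'}$ on all inputs. I expect this interplay between the measure-zero ambiguity of disintegration and the syntactic convention to be the delicate point of the argument, as it is exactly the reason the extra condition was placed in the definition of the normal form.
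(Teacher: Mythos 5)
Your proof is correct and follows essentially the same route as the paper's: induction on the number of output wires, viewing the two normal forms as disintegrations of the same stochastic map, invoking Proposition~\ref{prop:disintegrations-as-unique} for agreement where the marginal is nonzero, and using the $\ket{0^{n-1}}$ convention of Definition~\ref{def:nf-m-n} to close the measure-zero gap. The only differences are cosmetic: you base the induction at $n=1$ (the paper starts at $n=0$ via Lemma~\ref{lem:delete}) and you spell out, slightly more carefully than the paper, why $\sem{c_0}=\sem{d_0}$ forces $c_0=d_0$ by reading the weights off the probability table in Definition~\ref{def:nf-n-1}.
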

\begin{proof}
We reason by induction on $n$. For $n = 0$, by Lemma~\ref{lem:delete}, there is only one circuit $m\to 0$ up to equality, namely $\Bcounitn{m}$ so the statement of the proposition holds. Assume that it holds for all circuits $m\to p$ for all $m$ and $p\leq n$. We want to show that any two circuits $c,d\from m\to n+1$ in normal form and such that $\sem{c}=\sem{d}$ are equal. By assumption, we have
\[\probcircuit{c}{m}{\;\; n+1} = 
\InputIfFileExists{c-nf-m-nx1.tikz}{}{\input{./tikz/c-nf-m-nx1.tikz}}
\quad \text{ and }\quad \probcircuit{d}{m}{\; \; n+1} = 
\InputIfFileExists{d-nf-m-nx1.tikz}{}{\input{./tikz/d-nf-m-nx1.tikz}}
 \]
Let $f:=\sem{c}=\sem{d}$. Since $\sem{\cdot}$ is functorial, we have
\[f = \sem{c_0}\distcomp (\id_{\Bool}\times\epsilon_{\Bool^n})\distcomp \Delta_\Bool \distcomp (\id_{\Bool} \times \sem{c'}) = \sem{d_0}\distcomp (\id_{\Bool}\times\epsilon_{\Bool^n})\distcomp \Delta_\Bool \distcomp (\id_{\Bool} \times \sem{d'})\]
so that both normal forms provide a disintegration of the same stochastic map $f \from \Bool^{m} \distto \Bool^{n+1}$. Thus, by the almost-sure uniqueness of disintegrations (Proposition~\ref{prop:disintegrations-as-unique}), we have $\sem{c_0}=\sem{d_0}=: f_0$ and $\sem{c'}(b,x)=\sem{d'}(b,x)$ for all $(x,b)\in\Bool^{m}\times\Bool$ such that $f_0(b) \neq 0$. 

When $f_0(b) = 0$, our choice of normal form (see the final condition in Definition~\ref{def:nf-m-n}) ensure that $\sem{c'}(b,x)= \sem{d'}(b,x) = \ket{0^n}$. Thus, $\sem{c'}=\sem{d'}$ and, since $c',d'\from m\to n$ are both in normal form, the induction hypothesis allows us to conclude that $c'=d'$, as we wanted.
\end{proof}

The following theorem implies that every causal circuit is equal to one in normal form.
\begin{theorem}[Completeness]
\label{thm:completeness-m-to-n}
Every causal circuit of type $m\to n$ is equal to one in normal form.
\end{theorem}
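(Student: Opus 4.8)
The plan is to prove the statement by induction on the number $n$ of output wires, where at each stage I establish the claim for all input arities $m$ simultaneously. The two base cases are already available: for $n=0$, Lemma~\ref{lem:delete} shows that every causal circuit $m\to 0$ equals $\Bcounitn{m}$, the (empty) normal form; for $n=1$, the claim is exactly Lemma~\ref{lem:completeness-n-1}. So all the work sits in the inductive step, for $n\geq 2$ assuming the result for circuits with fewer than $n$ outputs.

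For the inductive step I would take a circuit $c\from m\to n$ and \emph{disintegrate} its first output wire. The marginal circuit $c_0\from m\to 1$ is obtained by discarding the remaining $n-1$ outputs, i.e.\ $c_0 := c\poi(\idone\otimes \Bcounitn{n-1})$; by Lemma~\ref{lem:completeness-n-1} it can be rewritten into single-output normal form. The conditional is a circuit $c'\from m+1\to n-1$ taking the original $m$ inputs together with the value of the first output; since it has $n-1<n$ outputs, the induction hypothesis puts it into normal form. The goal is then to derive equationally that $c$ equals the disintegrated composite of Definition~\ref{def:nf-m-n}: $c_0$ produces the first output, this output is copied by $\Bcomult$ (one copy kept as the first output, one fed into $c'$ alongside broadcast copies of the inputs), and $c'$ produces the remaining $n-1$ outputs.

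The heart of the argument, and the step I expect to be the main obstacle, is deriving this disintegration using only the axioms. The engine is axiom E2, which equates the two ways of disintegrating a joint distribution on two variables (its single-output specialisation being Lemma~\ref{lem:derived-E5}); iterating it is what lets us reorder which output variable is conditioned on first and thereby \emph{factor} the first output's marginal out of the joint. Concretely I would reduce to the single-output machinery of Section~\ref{sec:n-1-compleness}: after broadcasting the $m$ inputs, repeatedly apply pre-normalisation (Lemma~\ref{lemma:pre-normalisation}) and $n\to 1$ normalisation to expose the first output as a convex combination of Boolean circuits, then use E2 to pull out its marginal, leaving a residual circuit on the remaining wires that realises the conditional $c'$. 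The delicate part is the bookkeeping: tracking how each probabilistic generator $\Flip{p}$ shared between the first output and the rest is split by E2, exactly as the overview anticipates with ``repeatedly computing disintegrations by induction on the structure of circuits''.

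Finally I would discharge the side condition of Definition~\ref{def:nf-m-n} arising from the non-uniqueness of disintegration: on inputs $(b,x)$ where the marginal $\sem{c_0}(b|x)$ is $0$, the conditional is unconstrained, so to pin down a canonical representative I must derivably overwrite $c'$ to output $\ket{0^{n-1}}$ there. Since the locus $\{(b,x):\sem{c_0}(b|x)=0\}$ is cut out by a Boolean predicate, this correction is a purely Boolean multiplexing between $c'$ and the all-zero circuit, hence derivable by Boolean completeness (Theorem~\ref{thm:boolean-completeness}) together with the copy and discard laws (Lemmas~\ref{lem:copy-boolean} and~\ref{lem:delete}); importantly it leaves $\sem{c}$ unchanged, as those inputs carry zero marginal probability. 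With $c_0$ in single-output normal form, $c'$ in normal form by the induction hypothesis, and the side condition enforced, the composite is in $m\to n$ normal form, completing the induction; uniqueness of the resulting representative is then supplied separately by Proposition~\ref{prop:nf-unique}.
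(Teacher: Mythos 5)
Your outer skeleton matches the paper's (induction on $n$, with the $n\to 1$ case supplied by Lemma~\ref{lem:completeness-n-1} and $n=0$ by Lemma~\ref{lem:delete}), and you correctly identify the disintegrated shape of Definition~\ref{def:nf-m-n} as the target. But there is a genuine gap exactly where you yourself flag ``the main obstacle'': you never give a mechanism that syntactically rewrites an arbitrary circuit $c\from m\to n$ into the composite marginal-copy-conditional form. You treat the conditional $c'\from m+1\to n-1$ as an already-existing circuit to which the induction hypothesis applies, but producing $c'$ equationally \emph{is} the content of the inductive step---semantic disintegration in $\fStoch$ gives you a stochastic map, not a derivation. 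The concrete plan you sketch does not repair this: pre-normalisation (Lemma~\ref{lemma:pre-normalisation}) is stated and proved only for single-output circuits, so ``repeatedly apply pre-normalisation\dots to expose the first output'' has no meaning for an $m\to n$ circuit with $n\geq 2$; and E2 is a local axiom that can only fire once the circuit has been brought into the matching shape, which is precisely what is missing. The paper closes this hole with a \emph{second} induction, on the number of generators: peel off the leftmost generator $g$, apply the inner hypothesis to put the remainder $d$ in normal form $(d_0, d')$, and then show case by case ($g \in \{\Bcounit, \Bcomult, \Andgate, \Notgate, \sym, \Flip{p}\}$) that composing with $g$ preserves normalisability. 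The easy cases reduce to Lemma~\ref{lem:completeness-n-1} plus the inner hypothesis; the hard case $g=\Flip{p}$ is handled by Shannon-expanding $d_0$ on the wire fed by $\Flip{p}$ and then iterating E2 along each of the $2^m$ output wires of $\All{m}$, with Boolean-algebra cleanup of the intermediate copiers between applications. Some such structural handle on the circuit is indispensable, and your proposal supplies none.

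A secondary problem is your treatment of the side condition in Definition~\ref{def:nf-m-n}. The equality you need---replacing $c'$ by the corrected conditional $\tilde{c}'$ that outputs $\ket{0^{n-1}}$ on zero-mass inputs, \emph{after} composition with $c_0$ and the copier---is an equality between probabilistic circuits, not Boolean ones, so Theorem~\ref{thm:boolean-completeness} together with Lemmas~\ref{lem:copy-boolean} and~\ref{lem:delete} does not license it. It holds semantically only because the differing branches carry zero probability, and deriving it equationally would have to exploit the explicit probability-table structure of $c_0$'s normal form (zero entries appearing as $\Flip{0}$ guards, discharged via C0 and Boolean reasoning); invoking Boolean completeness alone is circular hand-waving at a point where probabilistic axioms must do the work.
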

\begin{proof}
  We show that, for any $m$ and $n$, every causal circuit $m\to n$ is equal to one in normal form. We first reason by induction on $n$, the circuit's number of output wires. The base case $n=0$ is immediate, as there is only one such causal circuit $m\to 0$ up to equivalence, by Lemma~\ref{lem:delete}, namely $\Bcounitn{m}$, which is in normal form.

We now assume we can normalise all causal circuits $m\to n$ for all $m$; we want to show that this holds for all causal circuits $m\to n+1$. For this, we reason by induction again, this time on the number of generators $\gamma$ a given causal circuits $m\to n+1$ has. For the base cases, note that every circuit consisting of $g\in\Sigma_\CausCirc\setminus\{\Bmult\}$, a generator in the monoidal signature of $\CausCirc$, is already in normal form. Now, assume all causal circuits $m\to n+1$ with $\gamma$ generators is normalisable. consider a causal circuit $c:m\to n+1$ made of $\gamma+1$ generators. Since it has $\gamma+1$ generators, we can pull the leftmost generator, $g$, from $c$, thus obtaining

  \begin{equation}\label{eq:g-d}
  \probcircuit{c}{m}{\;\;n+1}\; =\; 
\InputIfFileExists{completeness-m-to-n/g-d.tikz}{}{\input{./tikz/completeness-m-to-n/g-d.tikz}}

  \end{equation}

  where by hypothesis $\probcircuit{d}{m}{\;\;\;n+1}$ has $\gamma$ generators and is thus normalisable. As a result, there exists $d_0\from m\to 1$ and $d'\from m+1\to n$, such that
  \[\probcircuit{d}{m}{\; \; n+1} = 
\InputIfFileExists{d-nf-m-nx1.tikz}{}{\input{./tikz/d-nf-m-nx1.tikz}}
\]

  We now want to show that we can rewrite the circuit on the rhs of \eqref{eq:g-d} into a circuit in normal form, for all possible generators $g$---we consider all cases below.%

\noindent
  \begin{minipage}{.13\linewidth}
    \begin{mdframed}
      $g:=\Bcounit$
    \end{mdframed}
  \end{minipage}
  \[
\InputIfFileExists{completeness-m-to-n/bcounit-case.tikz}{}{\input{./tikz/completeness-m-to-n/bcounit-case.tikz}}
\]
where $d_0'\from m\to 1$ is some circuit in normal form, which we can get by completeness for $m\to 1$ circuits (Lemma~\ref{lem:completeness-n-1}). Moreover, by the induction hypothesis, the circuit in the dashed box on the right is normalisable since it is of type $m+1\to n$, giving us the normal form we wanted.
  
  \noindent
  \begin{minipage}{.16\linewidth}
    \begin{mdframed}
      $g:=\Bcomult$
    \end{mdframed}
  \end{minipage}
  \[
\InputIfFileExists{completeness-m-to-n/bcomult-case.tikz}{}{\input{./tikz/completeness-m-to-n/bcomult-case.tikz}}
\]
 where once again, $d_0'\from m\to 1$ is some circuit in normal form, which we obtain by completeness for $m\to 1$ circuits (Lemma~\ref{lem:completeness-n-1}). Moreover, by the induction hypothesis, the circuit in the dashed box on the bottom right is normalisable since it is of type $m+1\to n$, giving us the normal form we wanted.

  \noindent
  \begin{minipage}{.165\linewidth}
    \begin{mdframed}
      $g:=\Andgate$
    \end{mdframed}
  \end{minipage}
  \[
\InputIfFileExists{completeness-m-to-n/and-case.tikz}{}{\input{./tikz/completeness-m-to-n/and-case.tikz}}
\]
  where once again, $d_0'\from m\to 1$ is some circuit in normal form, which we obtain by completeness for $m\to 1$ circuits (Lemma~\ref{lem:completeness-n-1}). Moreover, by the induction hypothesis, the circuit in the dashed box in the last diagram is normalisable since it is of type $m+1\to n$, giving us the normal form we wanted.
  
  \noindent
  \begin{minipage}{.16\linewidth}
    \begin{mdframed}
      $g:=\Notgate$
    \end{mdframed}
  \end{minipage}
  \[
\InputIfFileExists{completeness-m-to-n/not-case.tikz}{}{\input{./tikz/completeness-m-to-n/not-case.tikz}}
\]
   where $d_0'\from m\to 1$ is some circuit in normal form, which we obtain by completeness for $m\to 1$ circuits (Lemma~\ref{lem:completeness-n-1}). Moreover, by the induction hypothesis, the circuit in the dashed box in the last diagram is normalisable since it is of type $m+1\to n$, giving us the normal form we wanted. %
  
  \noindent
  \begin{minipage}{.125\linewidth}
    \begin{mdframed}
      $g:=\sym$
    \end{mdframed}
  \end{minipage}
  \[
\InputIfFileExists{completeness-m-to-n/sym-case.tikz}{}{\input{./tikz/completeness-m-to-n/sym-case.tikz}}
\]
   where $d_0'\from m\to 1$ is some circuit in normal form, which we obtain by completeness for $m\to 1$ circuits (Lemma~\ref{lem:completeness-n-1}). Moreover, by the induction hypothesis, the circuit in the dashed box in the last diagram is normalisable since it is of type $m+1\to n$, giving us the normal form we wanted.%

  \noindent
  \begin{minipage}{.141\linewidth}
    \begin{mdframed}
      $g:=\Flip{p}$
    \end{mdframed}
  \end{minipage} \; This last case is the most complicated. Semantically, composing with $\Flip{p}$ amounts to taking the convex sum of the corresponding distributions, conditional on each possible values of the input wires:
  \[\sem{
\InputIfFileExists{completeness-m-to-n/flip-p-d.tikz}{}{\input{./tikz/completeness-m-to-n/flip-p-d.tikz}}
} = p\sem{
\InputIfFileExists{completeness-m-to-n/d-1xm-1xn.tikz}{}{\input{./tikz/completeness-m-to-n/d-1xm-1xn.tikz}}
} + (1-p)\sem{
\InputIfFileExists{completeness-m-to-n/d-1xm-1xn.tikz}{}{\input{./tikz/completeness-m-to-n/d-1xm-1xn.tikz}}
}\]
   Axiom E2 allows us to take this convex sum syntactically, and to repeat it for every possible value of the input wires (corresponding to a single multiplexer in the normal form of a given $m\to 1$ circuit), as we will see below.
  
  As before, since $d$ is normalisable, we have that
  \[
\InputIfFileExists{completeness-m-to-n/flip-case.tikz}{}{\input{./tikz/completeness-m-to-n/flip-case.tikz}}
\]

 Then, we can Shannon expand $d_0$ on the first input wire (the one connected to $\Flip{p}$), giving $d_0', d_1'$, such that
  \[
\InputIfFileExists{completeness-m-to-n/flip-case-01.tikz}{}{\input{./tikz/completeness-m-to-n/flip-case-01.tikz}}
\]

By completeness for single-output causal circuits (Lemma~\ref{lem:completeness-n-1}), we can assume that $d_0'$ and $d_1'$ are in normal form. Thus, there exists circuits $c_0,c_1$, $c'_0,c'_1$, and weights $q_0,q_1$, such that 
 \[
\InputIfFileExists{completeness-m-to-n/flip-case-02.tikz}{}{\input{./tikz/completeness-m-to-n/flip-case-02.tikz}}
\]
  We can now apply axiom E2:
  \[
\InputIfFileExists{completeness-m-to-n/flip-case-03.tikz}{}{\input{./tikz/completeness-m-to-n/flip-case-03.tikz}}
\]

where the weights $p',q'_0,q'_1$ are given as in the side condition of axiom E2 (see Section~\ref{sec:equational-theory}). We can repeat the same process for the second output wire of $\All{m}$; since $d'_0$ and $d'_1$ above were in normal form, we have circuits $c''_0,c''_1$ and weights $r_0,r_1$ such that
  \[
\InputIfFileExists{completeness-m-to-n/flip-case-04.tikz}{}{\input{./tikz/completeness-m-to-n/flip-case-04.tikz}}
\]
We can apply E2 again to the dashed box of rhs above, to obtain the following circuit:
  \begin{equation}
        
\InputIfFileExists{completeness-m-to-n/flip-case-05.tikz}{}{\input{./tikz/completeness-m-to-n/flip-case-05.tikz}}
\label{eq:flip-case-after-e2}
  \end{equation}
At this point, to obtain a circuit in normal form, we need to replace the additional $\Bcomult$ highlighted in red below with an identity wire:
\[ 
\InputIfFileExists{completeness-m-to-n/flip-case-05-red.tikz}{}{\input{./tikz/completeness-m-to-n/flip-case-05-red.tikz}}
\]
As it turns out, this is possible using Boolean algebra axioms only. Indeed the two circuits below are equal---this can be checked semantically, followed by an appeal to Boolean completeness again:
  \[
\InputIfFileExists{completeness-m-to-n/level-2-boolean-eq-1.tikz}{}{\input{./tikz/completeness-m-to-n/level-2-boolean-eq-1.tikz}}
\]
  \[
\InputIfFileExists{completeness-m-to-n/level-2-boolean-eq-2.tikz}{}{\input{./tikz/completeness-m-to-n/level-2-boolean-eq-2.tikz}}
\]
We thus have that~\eqref{eq:flip-case-after-e2} is equal to:
  \[
\InputIfFileExists{completeness-m-to-n/flip-case-06.tikz}{}{\input{./tikz/completeness-m-to-n/flip-case-06.tikz}}
\]
  This process can be repeated for each for each output wire of $\All{m}$: we can iteratively apply axiom E2 and remove the intermediate wires using Boolean algebra axioms, as explained above, until we reach the last such wire. Then,
\[
\InputIfFileExists{completeness-m-to-n/flip-case-07.tikz}{}{\input{./tikz/completeness-m-to-n/flip-case-07.tikz}}
\]
Then, as we did above, we can move the rightmost $\Bcomult$ up to the output wire of $d_0$ using only Boolean algebra.
 The result of this procedure are circuits $d_0''\from m\to 1$ and $e\from 1\to 1$ such that:
  \[
\InputIfFileExists{completeness-m-to-n/flip-case-08.tikz}{}{\input{./tikz/completeness-m-to-n/flip-case-08.tikz}}
\] (Here $e$ is the cascade of if-then-else with as many guards as $\mathsf{all}:m\to 2^m$ has output wires, produced by the procedure above.)
As before, since $d_0''$ has a single output, it is normalisable, by Lemma~\ref{lem:completeness-n-1}. Finally, the circuit in the dashed box above has $n$ output wires and is therefore normalisable, by the induction hypothesis, giving us the required normal form and concluding the proof.
\end{proof}

By characterising the image of our semantic interpretation, $\sem{\cdot}:\CausCirc\to\fStoch$, we can see that our theory \emph{presents} $\fStoch_\Bool$, the full monoidal subcategory of $\fStoch$ on objects that are powers of the two-element set, \emph{i.e.}, $\mathbb{B}^n$ for $n\in\mathbb{N}$.

\begin{proposition}\label{prop:image-fstoch2}
  For every morphism $f$ of $\fStoch_\Bool$, there exists a causal circuit $c$ such that $\sem{c}=f$.
  The image of $\CausCirc$ under $\sem{\cdot}$ corresponds with $\fStoch_\Bool$.
  \begin{proof}
    First note that, by Definition~\ref{def:semantics}, we have that every causal circuit $c:m\to n$ corresponds to a stochastic map $\mathbb{B}^m\distto\mathcal{D}(\mathbb{B}^n)$ in $\fStoch$, associating to each $j\in\mathbb{B}^m$ a probability distribution on $\mathbb{B}^n$, and thus mapping $c$ to a stochastic matrix with dimensions $2^n\cong\mathbb{B}^n\times\mathbb{B}^m\cong 2^m$, where $m,n\in\mathbb{N}$. Moreover, we can represent \emph{any} stochastic map $f\from\Bool^m\distto\Bool^n$ as a causal circuit $m\to n$ for all $m,n\in\N$. Indeed, we can disintegrate $f$ in the order of its $n$ output variables, to obtain $n$ stochastic maps $f_1,\dots,f_n$. As we saw, the normal form for $k\to 1$ circuits (Definition~\ref{def:nf-n-1}) is simply an encoding of the probability table of a stochastic map with a single Boolean output variable. Thus, we can use it to define causal circuits $c_1,\dots,c_n$ such that $\sem{c_k}=f_k$ for all $k\in\{1,\dots,n\}$. Then, we can compose all these circuits together as specified in the normal form of $m\to n$ causal circuits (Definition~\ref{def:nf-m-n}) which mimics the disintegration of $f$, thereby obtaining a circuit $c$ such that $\sem{c}=f$, as we wanted. 
  \end{proof}
\end{proposition}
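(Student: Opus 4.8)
The statement has two directions. The easy inclusion is that the image of $\sem{\cdot}$ lands inside $\fStoch_\Bool$: by Definition~\ref{def:semantics} every causal circuit $c\from m\to n$ is interpreted as a stochastic map $\sem{c}\from\Bool^m\distto\Bool^n$, whose source and target are powers of the two-element set, so $\sem{c}$ is a morphism of $\fStoch_\Bool$. The real content is the reverse inclusion: given an arbitrary stochastic map $f\from\Bool^m\distto\Bool^n$, I need to exhibit a causal circuit $c$ with $\sem{c}=f$.

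First I would dispatch the single-output case. A stochastic map $g\from\Bool^m\distto\Bool$ is completely determined by its probability table, namely the value $g(1\mid x)\in[0,1]$ for each of the $2^m$ inputs $x\in\Bool^m$. But the normal form of Definition~\ref{def:nf-n-1} is precisely a syntactic transcription of such a table: the circuit $\All{m}$ enumerates all $2^m$ input bit-vectors, and the if-then-else cascade $c_0$ selects, for each input $x_i$, a single coin flip whose bias $p_i$ is set to $g(1\mid x_i)$. Reading the semantics off this construction, using that $\sem{\cdot}$ sends the cascade and $\All{m}$ to the matching operations in $\fStoch$, shows directly that the resulting circuit denotes $g$. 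Hence every single-output stochastic map lies in the image.

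For the general case I would reduce to the single-output case by disintegration, exactly mirroring the normal form of Definition~\ref{def:nf-m-n}. Disintegrating $f$ in the order of its $n$ output variables produces stochastic maps $f_1,\dots,f_n$, where $f_k$ gives the conditional distribution of the $k$-th output given the $m$ inputs together with the first $k-1$ outputs; this is the iterated factorisation $\Prob(y_1,\dots,y_n\mid x)=\prod_k\Prob(y_k\mid x,y_1,\dots,y_{k-1})$ discussed in Section~\ref{sec:disintegration}. By the single-output case each $f_k$ is realised by a causal circuit $c_k$ with $\sem{c_k}=f_k$. Wiring these together as prescribed by Definition~\ref{def:nf-m-n}—copying each determined output bit forward to feed it as an input to the remaining conditionals, and discarding where appropriate—yields a circuit $c$; and since $\sem{\cdot}$ is a symmetric monoidal functor (Proposition~\ref{prop:semantics-functorial}), the semantics of $c$ is computed by applying the corresponding $\fStoch$ operations to the $f_k$, which reassemble into $f$ by the disintegration identity.

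The one point to check carefully is that the diagrammatic assembly of the $c_k$ genuinely recovers $f$ in $\fStoch$, that is, that the composite of copy maps, conditionals, and discarding maps computes $\sem{c_0}\distcomp(\id_{\Bool}\times\epsilon)\distcomp\Delta\distcomp(\id_{\Bool}\times\sem{c'})$ and not some permuted or mis-wired variant. This is bookkeeping verifying that the wiring in Definition~\ref{def:nf-m-n} matches the categorical form of disintegration; functoriality of $\sem{\cdot}$ turns it into a routine computation. Notably, the non-uniqueness of disintegration when some marginal lacks full support is no obstacle here: for surjectivity any single valid disintegration suffices, so the delicate syntactic convention needed for the uniqueness of normal forms in Proposition~\ref{prop:nf-unique} is irrelevant to this argument.
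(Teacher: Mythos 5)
Your proposal is correct and takes essentially the same route as the paper's own proof: realise single-output stochastic maps syntactically via the normal form of Definition~\ref{def:nf-n-1} as a transcription of the probability table, then handle general $f\from\Bool^m\distto\Bool^n$ by disintegrating in output order and assembling the resulting circuits as in Definition~\ref{def:nf-m-n}. Your closing remark---that non-uniqueness of disintegration is harmless for surjectivity, unlike in Proposition~\ref{prop:nf-unique}---is a correct and welcome clarification the paper leaves implicit.
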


As an immediate consequence of Theorem~\ref{thm:completeness-m-to-n} and Proposition~\ref{prop:image-fstoch2}, we can state the following.

\begin{corollary}
  The theory given by the generators in~\eqref{eq:bool-syntax}-\eqref{eq:causal-syntax} and the equations of Fig.~\ref{fig:eqs} is a presentation of $\fStoch_\Bool$.
\end{corollary}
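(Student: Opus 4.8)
The plan is to unpack the definition of a \emph{presentation} and verify its three constituent requirements in turn: soundness, completeness, and surjectivity of $\sem{\cdot}$ onto the morphisms of $\fStoch_\Bool$. Soundness is already in hand, being exactly the content of the Soundness Theorem: any derivable equation $c = d$ satisfies $\sem{c} = \sem{d}$. So the remaining work is to assemble the completeness direction and the surjectivity (fullness) claim from the results established in this section, rather than to prove anything substantially new.

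For completeness, I would argue as follows. Take two causal circuits $c, d \from m \to n$ with $\sem{c} = \sem{d}$. By Theorem~\ref{thm:completeness-m-to-n}, each is provably equal to a circuit in normal form, say $c = \hat{c}$ and $d = \hat{d}$, where these equalities are derivable in the theory. Since the derivations use only sound axioms, applying $\sem{\cdot}$ gives $\sem{\hat{c}} = \sem{c} = \sem{d} = \sem{\hat{d}}$. Now $\hat{c}$ and $\hat{d}$ are both in normal form and have equal semantics, so Proposition~\ref{prop:nf-unique} yields $\hat{c} = \hat{d}$ provably. Chaining the equalities gives $c = \hat{c} = \hat{d} = d$, which is precisely the completeness statement $\sem{c} = \sem{d} \Rightarrow c = d$. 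Together with soundness, this makes the theory an axiomatisation.

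For the surjectivity requirement---that every morphism of $\fStoch_\Bool$ lies in the image of $\sem{\cdot}$---I would simply invoke Proposition~\ref{prop:image-fstoch2}, which constructs, for each stochastic map $f \from \Bool^m \distto \Bool^n$, a causal circuit $c$ with $\sem{c} = f$. Having verified soundness, completeness, and surjectivity onto the morphisms of $\fStoch_\Bool$, the three clauses in the definition of presentation are met and the corollary follows.

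There is no genuine obstacle at this stage: the corollary is a bookkeeping assembly of results already proved, with all the real difficulty sitting in the normalisation procedure of Theorem~\ref{thm:completeness-m-to-n} and the disintegration argument behind Proposition~\ref{prop:image-fstoch2}. The only points requiring care are confirming that the word `equal' in the normal-form lemmas denotes \emph{provable} (derivable) equality rather than mere semantic equality, and that the normal forms range over exactly the objects $\Bool^n$, so that the image of $\sem{\cdot}$ coincides with $\fStoch_\Bool$ and not some larger or smaller subcategory.
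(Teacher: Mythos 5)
Your proposal is correct and matches the paper's argument: the paper states this corollary as an immediate consequence of Theorem~\ref{thm:completeness-m-to-n} and Proposition~\ref{prop:image-fstoch2}, and your assembly---soundness from the soundness theorem, completeness by normalising both circuits and invoking Proposition~\ref{prop:nf-unique}, and fullness from Proposition~\ref{prop:image-fstoch2}---is exactly the bookkeeping the paper leaves implicit. Your two points of care (that ``equal'' means derivable equality, and that normal forms witness fullness onto $\fStoch_\Bool$ exactly) are well taken and resolved affirmatively by the paper's definitions.
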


\section{Adding observations}
\label{sec:observe}

The power and usefulness of probabilistic programming languages comes from their ability to   specify distributions conditional on observations defined by the program. This is known as \emph{inference}. As we saw, for this purpose, \textsc{Dice} has $\texttt{observe}\ x$ statements, which constrain the distribution encoded by a program to only those assignments for which the variable $x$ is \texttt{true} (equal to $1$). As explained in Section~\ref{sec:syntax-semantics}, $\Bmult$ plays a similar role in our diagrammatic calculus: it allows us to condition on the value of its two input wires to be equal. 

\subsection{Equational theory}
\label{sec:conditioning-axioms}

We now extend the axiomatisation of Section~\ref{sec:axiomatisation-caus-circ} to circuits, including those containing explicit conditioning with $\Bmult$. The additional axioms are given in Fig.~\ref{fig:conditioning-axioms}.
\begin{itemize}
\item The first line (F1-F3) make $\Bmult$ and $\Flip{1/2}$ into a commutative monoid. Note that the unitality (F2) is only true when we quotient subdistributions by a global multiplicative factor; had we given our semantics in terms of plain subdistributions (without quotient), then these two equalities would not hold.
\item The second line (F4-F6) tell us that $\Bcomult,\Bcounit,\Bmult,\Flip{1/2}$ form a \emph{special Frobenius algebra}. 
\item Axiom F7 encodes the action of conditioning on a distribution over two variables. If we call $x_1,x_2$ these two variables we can see that the diagram on the right gives this distribution as the product of a Bernoulli with parameter $p_0$ for $x_1$ with the distribution of $x_2$ conditional on $x_1$, given by two Bernoulli with parameters $p_1$ and $p_2$. Clearly, these are both true with probability $p_0p_1$ and both false with probability $(1-p_0)(1-p_2)$. Since $\Bmult$ has the effect of constraining $x_1$ and $x_2$ to be equal, the result is a new Bernoulli distribution with parameter $r$, whose denominator is the normalisation factor $p_0p_1+ (1-p_0)(1-p_2)$ and whose numerator is the probability that $x_1$ and $x_2$ are both true. 
\item The final axiom (F8) deals with failure, that is, with the special circuit $\Flip{\bot}:=
\InputIfFileExists{fail.tikz}{}{\input{./tikz/fail.tikz}}
$. This constrains $0$ to be equal to $1$, which is unsatisfiable. Any two circuits that contains failure are equal semantically to the zero subdistribution. As it turns out, axiom F8 is enough to derive this more general fact (in the presence of the other axioms).
\end{itemize}
\begin{figure}
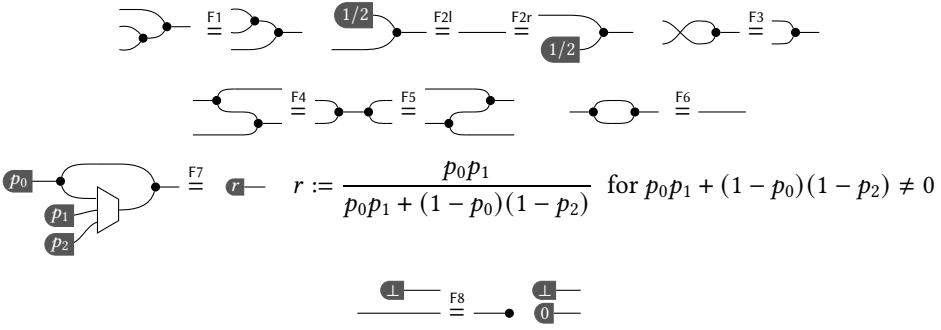

    \centering
    \begin{gather*}
      
\InputIfFileExists{cocopy-associative.tikz}{}{\input{./tikz/cocopy-associative.tikz}}
 \;\myeq{F1} 
\InputIfFileExists{cocopy-associative-1.tikz}{}{\input{./tikz/cocopy-associative-1.tikz}}
 \quad 
\InputIfFileExists{cocopy-unital-left.tikz}{}{\input{./tikz/cocopy-unital-left.tikz}}
 \myeq{F2l} \idone\myeq{F2r} 
\InputIfFileExists{cocopy-unital-right.tikz}{}{\input{./tikz/cocopy-unital-right.tikz}}
\quad  
\InputIfFileExists{cocopy-commutative.tikz}{}{\input{./tikz/cocopy-commutative.tikz}}
 \myeq{F3} \Bmult
    \end{gather*}
        \begin{gather*}
   
\InputIfFileExists{frobenius-s.tikz}{}{\input{./tikz/frobenius-s.tikz}}
 \myeq{F4} 
\InputIfFileExists{frobenius.tikz}{}{\input{./tikz/frobenius.tikz}}
\myeq{F5} 
\InputIfFileExists{frobenius-z.tikz}{}{\input{./tikz/frobenius-z.tikz}}
\qquad 
\InputIfFileExists{bcomult-bmult.tikz}{}{\input{./tikz/bcomult-bmult.tikz}}
\;\myeq{F6}\idone
    \end{gather*}
    \begin{gather*}
   
\InputIfFileExists{cond.tikz}{}{\input{./tikz/cond.tikz}}
 \myeq{F7} \;\;\Flip{r}\quad r:= \frac{p_0p_1}{p_0p_1+(1-p_0)(1-p_2)} \;\text{ for }  p_0p_1+(1-p_0)(1-p_2)\neq 0
    \end{gather*}
    \begin{gather*}
   
\InputIfFileExists{fail-id.tikz}{}{\input{./tikz/fail-id.tikz}}
 \myeq{F8} 
\InputIfFileExists{fail-0.tikz}{}{\input{./tikz/fail-0.tikz}}

    \end{gather*}
\caption{Axioms for conditioning.}\label{fig:conditioning-axioms}
\end{figure} 

\begin{theorem}[Soundness]
For any two circuits $c,d\from m\to n$, if $c=d$ then $\condsem{c}=\condsem{d}$.
\end{theorem}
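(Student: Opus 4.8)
The plan is to reduce soundness, exactly as in the causal case, to checking that each generating axiom denotes a genuine equality in the target category. Since $=$ is the smallest congruence (with respect to $\poi$ and $\tns$) containing the axioms of Fig.~\ref{fig:eqs}, the new axioms of Fig.~\ref{fig:conditioning-axioms}, and the laws of SMCs, and since $\condsem{\cdot}\from\ProbCirc\to\fProjStoch$ is a symmetric monoidal functor, it suffices to verify that for every axiom $c=d$ we have $\condsem{c}=\condsem{d}$, \emph{i.e.} that $\sem{c}\propto\sem{d}$ as substochastic maps. The SMC laws need no verification, as $\fProjStoch$ is itself symmetric monoidal.

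First I would dispatch all the causal axioms for free. By Definition~\ref{def:cond-semantics}, $\condsem{g}=[\sem{g}]$ for every causal generator $g$, and the canonical quotient $\fSubStoch\to\fProjStoch$ sending $f$ to $[f]$ is a symmetric monoidal functor---equivalently, $\propto$ is a congruence for $\distcomp$ and $\tns$. Hence every equation already shown sound under $\sem{\cdot}$ in the earlier soundness theorem for causal circuits remains sound after passing to $\propto$-classes, so the whole of Fig.~\ref{fig:eqs} carries over without further work and only the eight conditioning axioms remain.

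Of these, F1 and F3--F6 are purely structural: they assert that $\Bmult,\Flip{1/2},\Bcomult,\Bcounit$ form a special commutative Frobenius algebra, and each reduces to a short computation on $\sem{\cdot}$ (the comultiplication copies, $\sem{\Bmult}$ enforces equality of its inputs, and their composite along F6 is $\id_\Bool$). The subtle structural case is unitality F2: composing $\sem{\Flip{1/2}}$ into an input of $\sem{\Bmult}$ yields the map $x\mapsto \tfrac12\ket{x}$, which is $\tfrac12\,\id$ rather than $\id$ on the nose, and it is precisely the quotient by the global scalar $\propto$ that makes this equal to $\id_\Bool$ in $\fProjStoch$---exactly as flagged in the discussion preceding Fig.~\ref{fig:conditioning-axioms}.

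The genuinely computational steps are F7 and F8, and F7 is where I expect the main work. For F7 I would expand the left-hand subdistribution over the two variables $x_1,x_2$ as the product of a Bernoulli prior with parameter $p_0$ on $x_1$ with the conditional given by $\Flip{p_1},\Flip{p_2}$, then apply $\sem{\Bmult}$ to restrict to the diagonal $x_1=x_2$: only the outcomes $\ket{1}$ and $\ket{0}$ survive, with unnormalised weights $p_0p_1$ and $(1-p_0)(1-p_2)$. Since the right-hand side $\Flip{r}$ denotes $r\ket{1}+(1-r)\ket{0}$ with $r=p_0p_1/(p_0p_1+(1-p_0)(1-p_2))$, the two sides differ only by the positive factor $p_0p_1+(1-p_0)(1-p_2)$ and are therefore identified by $\propto$. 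For F8 I would check that $\Flip{\bot}$ denotes the uniformly zero subdistribution---it constrains $0=1$ and hence assigns weight $0$ to every output---so that both circuits in F8 denote the zero map; as the zero subdistribution is its own $\propto$-class $\bot$, the two sides coincide in $\fProjStoch$. The only delicate points are the normalisation bookkeeping in F7 and, throughout, keeping track of which equalities hold only after quotienting by the global scalar, most visibly in the unit law F2.
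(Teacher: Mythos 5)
Your proposal is correct and follows essentially the same route as the paper's proof: verify the axioms one by one, observe that the causal axioms and the structural conditioning axioms F1, F3--F6 already hold at the level of $\sem{\cdot}$ (since $\propto$ is a congruence for composition and $\tns$), and isolate F2, F7, F8 as the cases genuinely requiring the quotient, with the same computations ($\tfrac12\ket{x}\propto\ket{x}$ for F2, the surviving weights $p_0p_1$ and $(1-p_0)(1-p_2)$ renormalised for F7, and the zero subdistribution for F8). Your explicit remark that the quotient $\fSubStoch\to\fProjStoch$ is symmetric monoidal merely spells out what the paper leaves implicit when it says the remaining axioms ``hold already at the level of the subdistributional semantics.''
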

\begin{proof}
Many of the axioms are straightforward and hold already at the level of the subdistributional semantics given by $\sem{\cdot}$. We focus on those that require the semantics in $\fProjStoch$ given by $\condsem{\cdot}$. Recall that we use $[f]$ to denote the equivalence class of a substochastic map/subdistribution $f$ under the equivalence relation $\propto$.
\begin{description}
\item[(F2)] We have
\[\condsem{
\InputIfFileExists{cocopy-unital-left.tikz}{}{\input{./tikz/cocopy-unital-left.tikz}}
}(x) = \left[\frac{1}{2}\ket{x}\right]= [\ket{x}] = \condsem{\idone}(x) \]
since the first equality holds uniformly for all $x\in \Bool$.
\item[(F7)] We have
\begin{align*}
\condsem{
\InputIfFileExists{nf-0-2.tikz}{}{\input{./tikz/nf-0-2.tikz}}
} = [p_0p_1\ket{11} + p_0(1-p_1)\ket{10} + (1-p_0)p_2\ket{01} + (1-p_0)(1-p_2)\ket{00}]
\end{align*}
and, since only the first and the last summands satisfy the constraint imposed by $\Bmult$:
\begin{align*}
\condsem{
\InputIfFileExists{cond.tikz}{}{\input{./tikz/cond.tikz}}
} &= [p_0p_1\ket{1} + (1-p_0)(1-p_2)\ket{0}] 
\\
&= \left[\frac{p_0p_1}{p_0p_1+(1-p_0)(1-p_2)} \ket{11} + \frac{(1-p_0)(1-p_2)}{p_0p_1+(1-p_0)(1-p_2)} \ket{00}\right] = \condsem{\Flip{r}}
\end{align*}
where the last equality renormalises the previous subdistribution, which is only possible when  $p_0p_1+(1-p_0)(1-p_2)\neq 0$.
\item[(F8)] We have $\condsem{\Flip{\bot}}:=\condsem{
\InputIfFileExists{fail.tikz}{}{\input{./tikz/fail.tikz}}
} = 0$. Thus any two circuits composed in parallel with $\Flip{\bot}$ have the zero subdistribution as denotation, and in particular, axiom F8 is sound.
\end{description}
\end{proof}

\subsection{Completeness}
\label{sec:conditioning-completeness}

We now show the completeness of the proposed equational theory.
\begin{itemize}
\item First, we will reduce the problem to the case of circuits without input wires, that is, circuits of type $0\to n$. This is possible because the Frobenius algebra structure of $\Bcomult, \Bcounit$, $\Bmult,\Flip{1/2}$ with which we can bend wires using $
\InputIfFileExists{cup-frob.tikz}{}{\input{./tikz/cup-frob.tikz}}
$ and $
\InputIfFileExists{cap-frob.tikz}{}{\input{./tikz/cap-frob.tikz}}
$. This allows us to transform any $m\to n$ circuits into one of type $0\to m+n$ and vice-versa; moreover, any equality that we can apply to the former also applies to the latter (Proposition~\ref{prop:compact-closed}). In categorical terms, we say that our semantics, the SMC $\fProjStoch$, is \emph{compact-closed}. 
\item Then, the main idea of the completeness proof itself is to remove all explicit conditioning ($\Bmult$) from $0\to n$ circuits, in order to reduce the completeness of the whole syntax to the causal fragment we have studied in previous sections. In this process, we will make crucial use of the normal for causal circuits. The main difference with the causal case is that conditioning introduces the possibility of failure: circuits that contain the unsatisfiable constraint $\Flip{\bot}:=
\InputIfFileExists{fail.tikz}{}{\input{./tikz/fail.tikz}}
$ and whose semantics is the zero subdistribution. As we will see we have to deal with this special case separately, since any two circuits containing failure are equal (Lemma~\ref{lem:fail}).
\end{itemize}

\begin{proposition}[Compactness]\label{prop:compact-closed}
There is a one-to-one mapping $\Bend{(\cdot)}$ between diagrams of type $m\to n$ and those of type $0\to m+n$; in addition, for any $c;d\from m \to n$, $c =d$ iff $\Bend{c}=\Bend{d}$.
\end{proposition}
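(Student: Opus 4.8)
The plan is to exploit the special Frobenius algebra on $\Bcomult,\Bcounit,\Bmult,\Flip{1/2}$ established by axioms F1--F6, which equips the prop (modulo the theory) with a self-dual compact closed structure. Concretely, the \emph{cup} $\tikzfig{cup-frob}\from 0\to 2$ is $\Flip{1/2}\poi\Bcomult$ and the \emph{cap} $\tikzfig{cap-frob}\from 2\to 0$ is $\Bmult\poi\Bcounit$; their $m$-ary versions $\eta_m\from 0\to m+m$ and $\epsilon_m\from m+m\to 0$ are obtained by placing $m$ copies in parallel and rerouting wires with symmetries so that the $i$-th left wire is paired with the $i$-th right wire. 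The bending map is then defined by $\Bend{c}:=\eta_m\poi(\id_m\tns c)\from 0\to m+n$, which folds the $m$ inputs of $c$ down to become outputs; its candidate inverse unbends a diagram $d\from 0\to m+n$ via $(\id_m\tns d)\poi(\epsilon_m\tns\id_n)\from m\to n$.

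First I would establish the key lemma that the \emph{snake} (yanking) equations $(\eta_1\tns\id_1)\poi(\id_1\tns\epsilon_1)=\idone=(\id_1\tns\eta_1)\poi(\epsilon_1\tns\id_1)$ are derivable. Expanding the cup and cap, this reduces to the computation $(\Flip{1/2}\tns\id)\poi(\Bcomult\tns\id)\poi(\id\tns\Bmult)\poi(\id\tns\Bcounit)$; applying the Frobenius law (F4/F5) to rewrite $(\Bcomult\tns\id)\poi(\id\tns\Bmult)$ as $\Bmult\poi\Bcomult$, and then using the monoid unit law (F2) $(\Flip{1/2}\tns\id)\poi\Bmult=\idone$ together with the comonoid counit law (A2) $\Bcomult\poi(\id\tns\Bcounit)=\idone$, collapses the zig-zag to $\idone$. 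The $m$-ary snake equations then follow by the SMC laws (\cref{eq:smc}) used to manage the wire rerouting. Note that speciality (F6) is not needed for this.

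Given the snake equations, I would show that bending and unbending are mutually inverse up to the theory: unbending $\Bend{c}$ and sliding the caps past $c$ using the interchange law leaves exactly one zig-zag per input wire, each straightened by the snake equation, recovering $c$; symmetrically, bending an unbent $d$ recovers $d$. This makes $\Bend{(\cdot)}$ a bijection between equivalence classes of diagrams $m\to n$ and $0\to m+n$. Finally, the stated equivalence follows: if $c=d$ then $\Bend{c}=\Bend{d}$ because $\Bend{(\cdot)}$ is built by composing and tensoring with the fixed diagram $\eta_m$, and $=$ is a congruence for $\poi$ and $\tns$; conversely, if $\Bend{c}=\Bend{d}$, applying the (congruent) unbending operation to both sides and invoking the snake equations yields $c=d$.

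The main obstacle I anticipate is bookkeeping rather than conceptual: carefully handling the $m$-ary cups and caps, in particular the symmetries needed to pair left and right wires correctly, and verifying that the interchange and naturality laws of SMCs let the caps slide past an arbitrary $c$ so that the yanking argument applies wire-by-wire. The genuinely mathematical content is confined to the single-wire snake derivation, which is the standard consequence of the Frobenius and unit/counit laws.
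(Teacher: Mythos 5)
Your proposal is correct and takes essentially the same route as the paper: the paper defines $\Bend{(\cdot)}$ by bending with the same Frobenius cups ($\Flip{1/2}$ with $\Bcomult$) and caps ($\Bmult$ with $\Bcounit$), verifies that bending and unbending are mutually inverse by exactly the F5, F2, A2 sequence you isolate as the snake equation, and concludes the ``iff'' by the same congruence argument. Your side remark that speciality (F6) is not needed is consistent with the paper, whose proof invokes only F2 and F4--F5.
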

\begin{proof}
This is a standard fact that holds in any \emph{compact-closed category}~\cite{kelly1980}. The mapping $\Bend{(\cdot)}$ is defined on a diagram $c$ by
\[
\Bend{\left(\probcircuit{c}{m}{n}\right)} := 
\InputIfFileExists{c-bend.tikz}{}{\input{./tikz/c-bend.tikz}}

\]
The inverse mapping is then given by:
\[

\InputIfFileExists{c-unbend.tikz}{}{\input{./tikz/c-unbend.tikz}}

\]
That these two transformations are inverses of each other is a consequence of axioms F2 and F4-F5:
\[

\InputIfFileExists{compact-proof.tikz}{}{\input{./tikz/compact-proof.tikz}}
\;\myeq{F5}\;
\InputIfFileExists{compact-proof-1.tikz}{}{\input{./tikz/compact-proof-1.tikz}}
\;\myeq{F2}\;
\InputIfFileExists{compact-proof-2.tikz}{}{\input{./tikz/compact-proof-2.tikz}}
\;\myeq{A2}\;\probcircuit{c}{m}{n}
\]
Finally, the second part of the statement is immediate, since $\Bend{(\cdot)}$ is defined by composing with generators of our syntax, and therefore, any equality that we can apply to show that $c=d$ can be applied to show that $\Bend{c}=\Bend{d}$ and vice-versa.
\end{proof}

\begin{theorem}[Completeness]
\label{thm:conditioning-completeness}
For any two circuits $c,d\from m\to n$ with conditioning, $\sem{c}=\sem{d}$ iff $c=d$.
\end{theorem}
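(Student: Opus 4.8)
The plan is to reduce completeness for arbitrary $m\to n$ circuits to the closed case $0\to n$, and then to eliminate every occurrence of $\Bmult$ so as to fall back on the completeness of the causal fragment. First I would invoke Proposition~\ref{prop:compact-closed}: since $\condsem{\cdot}$ is a symmetric monoidal functor into the compact-closed category $\fProjStoch$, the bending isomorphism $\Bend{(\cdot)}$ satisfies $\condsem{c}=\condsem{d}$ iff $\condsem{\Bend{c}}=\condsem{\Bend{d}}$, while $c=d$ iff $\Bend{c}=\Bend{d}$. Hence it suffices to prove that, for circuits $c,d\from 0\to n$, $\condsem{c}=\condsem{d}$ implies $c=d$. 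For such closed circuits the codomain of $\condsem{\cdot}$ is $\SubDist(\Bool^n)/{\propto}\;\cong\;\Dist(\Bool^n)+\{\bot\}$, so there are exactly two cases: either $\condsem{c}$ is the zero class $\bot$, or it is the class of a genuine, normalisable distribution $\varphi$.

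I would dispatch the failure case using Lemma~\ref{lem:fail}: once I show that $\condsem{c}=\bot$ forces $c$ to be provably equal to a fixed canonical failure circuit (one built from $\Flip{\bot}:=\tikzfig{fail}$), any two circuits with semantics $\bot$ are equal. The substance of the argument is the non-failure case.

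The core is an induction on the number of $\Bmult$ generators, establishing that every closed circuit with non-$\bot$ semantics is provably equal to a causal (conditioning-free) circuit. In the inductive step I would select a $\Bmult$ that is minimal in the dependency order, so that the sub-circuit producing its two inputs is causal. Using the causal completeness machinery --- the $n\to 1$ normal form of Lemma~\ref{lem:completeness-n-1} and the full normal form of Theorem~\ref{thm:completeness-m-to-n}, together with axioms E2 and E4 to break up shared randomness --- I would rewrite this causal sub-circuit so that the joint distribution on the two wires entering the chosen $\Bmult$ is displayed in the two-variable disintegrated shape matching the left-hand side of axiom F7 (with parameters $p_0,p_1,p_2$). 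Applying F7 then collapses the $\Bmult$ together with that local distribution into a single $\Flip{r}$, removing one conditioning node; and if the normalisation denominator $p_0p_1+(1-p_0)(1-p_2)$ vanishes the pattern degenerates to $\Flip{\bot}$, throwing us into the failure case handled above. Iterating eliminates all conditioning, yielding a causal circuit $\hat c$ with $\condsem{\hat c}=\condsem{c}$. Because causal circuits denote honest (normalised) distributions, $\sem{\hat c}\propto\varphi$ with both sides normalised forces $\sem{\hat c}=\varphi$ exactly. Consequently, if $\condsem{c}=\condsem{d}$ is a genuine distribution, the resulting causal circuits $\hat c,\hat d$ satisfy $\sem{\hat c}=\sem{\hat d}$ and are equal by causal completeness (Theorem~\ref{thm:completeness-m-to-n} with the uniqueness of normal forms, Proposition~\ref{prop:nf-unique}), whence $c=\hat c=\hat d=d$.

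The main obstacle I anticipate is precisely the exposure of the F7 pattern in the inductive step. The two wires feeding a chosen $\Bmult$ are in general correlated --- through $\Flip{p}$ values broadcast by $\Bcomult$ --- both with the remaining output wires and with downstream $\Bmult$ gates, so before F7 can apply these correlations must be isolated so that the conditioned pair genuinely appears as a two-variable distribution relative to the shared context. Carrying this out uniformly (massaging the causal normal form with E2 and E4 to factor out the context, and correctly tracking, \emph{as a function of} that context, when the F7 denominator degenerates into failure) is the delicate, bookkeeping-heavy heart of the proof; by contrast the compactness reduction and the appeal to the causal case are routine.
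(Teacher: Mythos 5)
Your proposal is correct and follows essentially the same route as the paper: reduce to closed circuits $0\to n$ via compactness (Proposition~\ref{prop:compact-closed}), eliminate each $\Bmult$ by normalising the causal sub-circuit feeding it and applying axiom F7, with the degenerate denominator producing $\Flip{\bot}$ and handled by Lemma~\ref{lem:fail}, then conclude by causal completeness and uniqueness of normal forms. The ``bookkeeping-heavy'' step you anticipate is resolved in the paper not with E2/E4 directly but by exploiting the disintegrated shape of the $m\to n$ normal form (Definition~\ref{def:nf-m-n}) together with Frobenius (F4--F5) and comonoid (A1, A3) rewiring, which routes the correlations with the context through $\Bcomult$ copies of the conditioned wire so that the F7 pattern appears locally.
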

\begin{proof}
By compactness (Proposition~\ref{prop:compact-closed}), it is enough to prove the statement for $c,d\from 0 \to n$. We want to show that we can always eliminate all conditioning nodes $\Bmult$ to rewrite any diagram $c\from 0\to n$ into normal form. For this, it is enough to show that any diagram $c\from 0\to n$ composed (on the right, necessarily) with $\Bmult$ is equal to some conditioning-free circuit.

First, we can assume wlog that $\Bmult$ is plugged on the first two wires of $c$; if it was not, we could simply reorder them and normalise the resulting diagram first. Moreover, by Theorem~\ref{thm:completeness-m-to-n}, there exists circuits $c_0$ and $c_1$ in normal form such that:
\begin{align*}

\InputIfFileExists{c-bmultxid.tikz}{}{\input{./tikz/c-bmultxid.tikz}}
 & \;=\;
\InputIfFileExists{c-disint-bmultxid.tikz}{}{\input{./tikz/c-disint-bmultxid.tikz}}
\; =\;  
\InputIfFileExists{c-disint-2-bmultxid.tikz}{}{\input{./tikz/c-disint-2-bmultxid.tikz}}

\\
& =\; 
\InputIfFileExists{c-disint-2-bmultxid-2.tikz}{}{\input{./tikz/c-disint-2-bmultxid-2.tikz}}
\;\myeq{A1}\; 
\InputIfFileExists{c-disint-2-bmultxid-3.tikz}{}{\input{./tikz/c-disint-2-bmultxid-3.tikz}}

\\
& \myeq{Fr}\; 
\InputIfFileExists{c-disint-2-bmultxid-4.tikz}{}{\input{./tikz/c-disint-2-bmultxid-4.tikz}}
 \;\myeq{A3}\; 
\InputIfFileExists{c-disint-2-bmultxid-5.tikz}{}{\input{./tikz/c-disint-2-bmultxid-5.tikz}}

\\
& \myeq{Fr}\;
\InputIfFileExists{c-disint-2-bmultxid-6.tikz}{}{\input{./tikz/c-disint-2-bmultxid-6.tikz}}
\;\myeq{A3}\;
\InputIfFileExists{c-disint-2-bmultxid-7.tikz}{}{\input{./tikz/c-disint-2-bmultxid-7.tikz}}

\end{align*}
Since $c_0,c_1$ are themselves in normal form, there exists weights $p_0,q_0,q_1$ such that
\begin{align*}

\InputIfFileExists{c-disint-2-bmultxid-7.tikz}{}{\input{./tikz/c-disint-2-bmultxid-7.tikz}}
\;=\; 
\InputIfFileExists{c-disint-2-bmultxid-8.tikz}{}{\input{./tikz/c-disint-2-bmultxid-8.tikz}}

\end{align*}
Now there are two cases to consider. 
\begin{itemize}
\item First, if $p_0q_0+(1-p_0)(1-q_1)\neq 0$, we have
\begin{align*}
 
\InputIfFileExists{c-disint-2-bmultxid-8.tikz}{}{\input{./tikz/c-disint-2-bmultxid-8.tikz}}
\;\myeq{F7}\; 
\InputIfFileExists{c-disint-2-bmultxid-fin.tikz}{}{\input{./tikz/c-disint-2-bmultxid-fin.tikz}}

\end{align*}
where $r:= \frac{p_0q_0}{p_0q_0+(1-p_0)(1-q_1)}$. Notice that the sub-circuit in the dashed box above is in normal form, so the final circuit is also in normal form, as we wanted to show.
\item Second, if $p_0q_0+(1-p_0)(1-q_1)=0$, then we either have $p_0=1$ and $q_0=0$, or $p_0=0$ and $q_1=1$. Intuitively, in these two cases, the conditioning constraint imposed by $\Bmult$ is unsatisfiable, and we get:
\begin{align*}
 
\InputIfFileExists{c-disint-2-bmultxid-8.tikz}{}{\input{./tikz/c-disint-2-bmultxid-8.tikz}}
\;:=\; 
\InputIfFileExists{c-disint-2-bmultxid-fail.tikz}{}{\input{./tikz/c-disint-2-bmultxid-fail.tikz}}

\end{align*}
Note that this is true in both cases, by commutativity of $\Bmult$. Then,
\begin{align*}
 
\InputIfFileExists{c-disint-2-bmultxid-fail.tikz}{}{\input{./tikz/c-disint-2-bmultxid-fail.tikz}}
\;&\myeq{F6}\; 
\InputIfFileExists{c-disint-2-bmultxid-fail-1.tikz}{}{\input{./tikz/c-disint-2-bmultxid-fail-1.tikz}}
 \;\myeq{C0}\; 
\InputIfFileExists{c-disint-2-bmultxid-fail-2.tikz}{}{\input{./tikz/c-disint-2-bmultxid-fail-2.tikz}}
\;
 \\
 &:=\;
\InputIfFileExists{c-disint-2-bmultxid-fail-3.tikz}{}{\input{./tikz/c-disint-2-bmultxid-fail-3.tikz}}

\end{align*}
Since any two circuits containing $\Flip{\bot}$ are equal by Lemma~\ref{lem:fail}, the last circuit is equal to one in normal form.
\end{itemize}
Thus, any $c\from 0\to n$ is equal to a circuit in normal form and the same reasoning as in the proof of Theorem~\ref{thm:completeness-m-to-n} shows that any two semantically equivalent circuits $0\to n$ are equal to the same circuit in normal form, which concludes the proof.
\end{proof}
The following lemma states that any two circuits which fail are equal. 
\begin{lemma}[Failure]
\label{lem:fail}
For any two circuits $c,d\from m \to n$, we have
\[
\InputIfFileExists{fail-c.tikz}{}{\input{./tikz/fail-c.tikz}}
\;=\;
\InputIfFileExists{fail-d.tikz}{}{\input{./tikz/fail-d.tikz}}
\]
\end{lemma}
\begin{proof}
To prove the main statement, we need only show that, for any $c\from m\to n$, the following equality holds:
\begin{equation}
\label{eq:fail}

\InputIfFileExists{fail-c.tikz}{}{\input{./tikz/fail-c.tikz}}
\;=\; 
\InputIfFileExists{fail-0-m-n.tikz}{}{\input{./tikz/fail-0-m-n.tikz}}

\end{equation}
We prove this by structural induction on $c$. First, for the generators as base cases:
\begin{itemize}
\item For $\Bcomult$, we have
\[
\InputIfFileExists{fail-bcomult.tikz}{}{\input{./tikz/fail-bcomult.tikz}}
\;\myeq{F8}\;
\InputIfFileExists{fail-bcomult-1.tikz}{}{\input{./tikz/fail-bcomult-1.tikz}}
\;\myeq{C0}\;
\InputIfFileExists{fail-bcomult-2.tikz}{}{\input{./tikz/fail-bcomult-2.tikz}}
\]
\item The case of $\Bcounit$ is trivial.
\item For $\Andgate$, we have
\[
\InputIfFileExists{fail-and.tikz}{}{\input{./tikz/fail-and.tikz}}
\;=\;
\InputIfFileExists{fail-and-1.tikz}{}{\input{./tikz/fail-and-1.tikz}}
\;\myeq{F8}\;
\InputIfFileExists{fail-and-2.tikz}{}{\input{./tikz/fail-and-2.tikz}}
\;\myeq{D1}\;
\InputIfFileExists{fail-and-3.tikz}{}{\input{./tikz/fail-and-3.tikz}}
\]
\item For $\Notgate$, we have
\[
\InputIfFileExists{fail-not.tikz}{}{\input{./tikz/fail-not.tikz}}
\;=\;
\InputIfFileExists{fail-not-1.tikz}{}{\input{./tikz/fail-not-1.tikz}}
\;\myeq{F8}\;
\InputIfFileExists{fail-not-2.tikz}{}{\input{./tikz/fail-not-2.tikz}}
\;\myeq{D2}\;
\InputIfFileExists{fail-0.tikz}{}{\input{./tikz/fail-0.tikz}}
\]
\item For $\Flip{p}$, we have
\[
\begin{tikzpicture}
	\begin{pgfonlayer}{nodelayer}
		\node [style=flip] (1) at (0.75, 0.75) {$\bot$};
		\node [style=none] (3) at (2.25, 0.75) {};
		\node [style=flip] (5) at (0.75, -0.25) {$p$};
		\node [style=none] (6) at (2.25, -0.25) {};
	\end{pgfonlayer}
	\begin{pgfonlayer}{edgelayer}
		\draw (1) to (3.center);
		\draw (5) to (6.center);
	\end{pgfonlayer}
\end{tikzpicture}
}
\;=\;
\begin{tikzpicture}
	\begin{pgfonlayer}{nodelayer}
		\node [style=flip] (1) at (0.75, 0.75) {$\bot$};
		\node [style=none] (3) at (2.25, 0.75) {};
		\node [style=flip] (5) at (-1, -0.25) {$p$};
		\node [style=none] (6) at (2.25, -0.25) {};
	\end{pgfonlayer}
	\begin{pgfonlayer}{edgelayer}
		\draw (1) to (3.center);
		\draw (5) to (6.center);
	\end{pgfonlayer}
\end{tikzpicture}
}
\;\myeq{F8}\;
\InputIfFileExists{fail-flip-2.tikz}{}{\input{./tikz/fail-flip-2.tikz}}
\;\myeq{D3}\;
\InputIfFileExists{fail-0.tikz}{}{\input{./tikz/fail-0.tikz}}
\]
\item For $\Bmult$, we have
\begin{align*}

\InputIfFileExists{fail-bmult.tikz}{}{\input{./tikz/fail-bmult.tikz}}
&\;=\;
\InputIfFileExists{fail-bmult-1.tikz}{}{\input{./tikz/fail-bmult-1.tikz}}
\;\myeq{F8}\;
\InputIfFileExists{fail-bmult-2.tikz}{}{\input{./tikz/fail-bmult-2.tikz}}
\;=\;
\InputIfFileExists{fail-bmult-3.tikz}{}{\input{./tikz/fail-bmult-3.tikz}}
\\
& =\; 
\InputIfFileExists{fail-bmult-4.tikz}{}{\input{./tikz/fail-bmult-4.tikz}}
\;\myeq{C0}\; 
\InputIfFileExists{fail-bmult-5.tikz}{}{\input{./tikz/fail-bmult-5.tikz}}
\;\myeq{F6}\; 
\InputIfFileExists{fail-bmult-6.tikz}{}{\input{./tikz/fail-bmult-6.tikz}}

\end{align*}
\end{itemize}
The two inductive case of sequential and parallel composition are equally straightforward.
\begin{itemize}
\item For sequential composition, assume that circuits $c\from \ell\to m$ and $d\from m \to n$ satisfy~\eqref{eq:fail}; then
\begin{align*}

\InputIfFileExists{fail-c-d.tikz}{}{\input{./tikz/fail-c-d.tikz}}
&\;=\;
\InputIfFileExists{fail-c-d-1.tikz}{}{\input{./tikz/fail-c-d-1.tikz}}
\;=\;
\InputIfFileExists{fail-c-d-2.tikz}{}{\input{./tikz/fail-c-d-2.tikz}}

\\
&\;=\;
\InputIfFileExists{fail-c-d-3.tikz}{}{\input{./tikz/fail-c-d-3.tikz}}
\;\myeq{D3}\;
\InputIfFileExists{fail-c-d-4.tikz}{}{\input{./tikz/fail-c-d-4.tikz}}

\end{align*}
\item For parallel composition, assume that $c_1\from m_1\to n_1$ and $c_2\from m_2\to n_2$ satisfy~\eqref{eq:fail}; then
\begin{align*}

\InputIfFileExists{fail-c1xc2.tikz}{}{\input{./tikz/fail-c1xc2.tikz}}
&\;=\; 
\InputIfFileExists{fail-c1xc2-1.tikz}{}{\input{./tikz/fail-c1xc2-1.tikz}}
 \;=\;
\InputIfFileExists{fail-c1xc2-2.tikz}{}{\input{./tikz/fail-c1xc2-2.tikz}}

\\
& = \; 
\InputIfFileExists{fail-c1xc2-3.tikz}{}{\input{./tikz/fail-c1xc2-3.tikz}}
\;=\; 
\InputIfFileExists{fail-c1xc2-4.tikz}{}{\input{./tikz/fail-c1xc2-4.tikz}}

\end{align*}
\end{itemize}
\end{proof}
From these axioms, we can derive the action of $\Bmult$ on $\Flip{p}$ generators: it multiplies the parameters and re-normalises the resulting subdistribution, as stated in the following lemma.
\begin{lemma}
\label{lem:mult}
The following identity is derivable for $p,q\in(0,1)$:
\[
\InputIfFileExists{mult.tikz}{}{\input{./tikz/mult.tikz}}
\;=\;\Flip{r}\quad \text{where } r:= \frac{pq}{pq + (1-p)(1-q)}\]
\end{lemma}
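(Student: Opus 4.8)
The plan is to recognise the left-hand side as a special case of axiom F7. The circuit on the left conditions the product of two \emph{independent} flips $\Flip{p}\otimes\Flip{q}$ to be equal via $\Bmult$, whereas axiom F7 conditions a joint distribution over two variables presented in disintegrated form---namely a flip $\Flip{p_0}$ for the first variable together with an \texttt{if-then-else} choosing between $\Flip{p_1}$ and $\Flip{p_2}$ for the second, conditional on the first (the left-hand side of~\eqref{eq:0-2-distribution}). So the task reduces to rewriting $\Flip{p}\otimes\Flip{q}$ into this disintegrated shape and then invoking F7.

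First I would show, as a purely causal-circuit identity, that $\Flip{p}\otimes\Flip{q}$ equals the disintegrated form of~\eqref{eq:0-2-distribution} instantiated at $p_0:=p$ and $p_1=p_2:=q$. Intuitively, when both branches of the \texttt{if-then-else} carry the same flip $\Flip{q}$, the guard has no influence on the second output, so the two outputs are again independent flips with parameters $p$ and $q$. Since both sides denote the same stochastic map $1\distto\Bool^2$, this equality is derivable by completeness for causal circuits (Theorem~\ref{thm:completeness-m-to-n}); alternatively it follows directly from the A-, C- and D-axioms, which let us copy the guard and discard it once the two branches coincide.

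With the left-hand side in disintegrated form, I would then apply axiom F7 with the parameters $p_0=p$, $p_1=p_2=q$. Its side condition requires $p_0p_1+(1-p_0)(1-p_2) = pq+(1-p)(1-q)\neq 0$, which holds because $p,q\in(0,1)$ makes both summands strictly positive. The axiom then yields $\Flip{r}$ with $r = \dfrac{p_0p_1}{p_0p_1+(1-p_0)(1-p_2)} = \dfrac{pq}{pq+(1-p)(1-q)}$, exactly the claimed value.

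There is essentially no hard step here: once F7 is identified as the relevant axiom, the lemma is a direct specialisation. The only points requiring care are justifying the rewriting of the independent product into the disintegrated shape expected by F7 (for which causal completeness is the cleanest tool), and checking that the restriction $p,q\in(0,1)$ guarantees the side condition of F7, so that the renormalising denominator is non-zero.
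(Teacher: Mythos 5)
Your proposal is correct and follows the same route as the paper's proof: both reduce the left-hand side to an application of axiom F7 with $p_0=p$ and $p_1=p_2=q$, after rewriting $\Flip{p}\otimes\Flip{q}$ into the disintegrated if-then-else shape, and both use $p,q\in(0,1)$ to discharge F7's non-degeneracy side condition. The only divergence is in how the intermediate causal identity is justified: you invoke completeness for causal circuits (Theorem~\ref{thm:completeness-m-to-n}), which is legitimate and arguably cleaner, whereas the paper derives it explicitly, first massaging the circuit with Boolean-algebra identities and then applying the derived law of Lemma~\ref{lem:derived-E5} repeatedly to duplicate $\Flip{q}$ into the two branches. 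One caveat: your alternative claim that the identity ``follows directly from the A-, C- and D-axioms'' is not right. Those axioms only let you copy and discard \emph{deterministic} (Boolean) circuits; the intermediate identity equates one occurrence of $\Flip{q}$ with a circuit containing two independent occurrences of it, which is a genuinely probabilistic fact---indeed the paper's own derivation needs the E-family (via Lemma~\ref{lem:derived-E5}, the single-output instance of E2) precisely at this point. Since your primary justification via causal completeness stands on its own, this mis-attribution does not create a gap, but the aside should be dropped or corrected.
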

\begin{proof}
\begin{align*}

\InputIfFileExists{mult.tikz}{}{\input{./tikz/mult.tikz}}
\;\myeq{A2}\;
\InputIfFileExists{mult-1.tikz}{}{\input{./tikz/mult-1.tikz}}
\;\myeq{$(\star)$}\;
\InputIfFileExists{mult-fin.tikz}{}{\input{./tikz/mult-fin.tikz}}
\;\myeq{F7}\;\Flip{r}
\end{align*}
with $r$ defined as in the statement of the lemma, and the equality labelled ($\star$) is derivable as follows: first, some simple Boolean algebra manipulations give us
\begin{align*}

\InputIfFileExists{del-flip-q.tikz}{}{\input{./tikz/del-flip-q.tikz}}
\;\myeq{B}\;
\InputIfFileExists{del-flip-q-1.tikz}{}{\input{./tikz/del-flip-q-1.tikz}}
\;\myeq{B}\; 
\InputIfFileExists{del-flip-q-2.tikz}{}{\input{./tikz/del-flip-q-2.tikz}}

\end{align*}
Then, we apply E5 repeatedly:
\begin{align*}

\InputIfFileExists{del-flip-q-2.tikz}{}{\input{./tikz/del-flip-q-2.tikz}}
\;\myeq{E5}\;
\InputIfFileExists{del-flip-q-3.tikz}{}{\input{./tikz/del-flip-q-3.tikz}}

\\
\quad\myeq{E5}\;
\InputIfFileExists{del-flip-q-4.tikz}{}{\input{./tikz/del-flip-q-4.tikz}}
\;\myeq{E5}\;
\InputIfFileExists{del-flip-q-5.tikz}{}{\input{./tikz/del-flip-q-5.tikz}}
\;\myeq{B}\;
\InputIfFileExists{if-flip-qxflip-q.tikz}{}{\input{./tikz/if-flip-qxflip-q.tikz}}

\end{align*}
where the last equality is another simple Boolean algebra identity.
\end{proof}
We can now prove the correctness of Von Neumann's trick to simulate a fair coin from a biased one (\emph{cf.} Example~\ref{ex:von-neumann}).
\begin{example}[Von Neumann's trick, proof of correctness]
Assume $p\in (0,1)$. We have
\begin{align*}

\InputIfFileExists{von-neumann.tikz}{}{\input{./tikz/von-neumann.tikz}}
\;\myeq{E1}\;
\InputIfFileExists{von-neumann-1.tikz}{}{\input{./tikz/von-neumann-1.tikz}}
\;\myeq{F4}\;
\InputIfFileExists{von-neumann-2.tikz}{}{\input{./tikz/von-neumann-2.tikz}}

\;\myeq{A2}\;
\InputIfFileExists{von-neumann-3.tikz}{}{\input{./tikz/von-neumann-3.tikz}}
\;\;\myeq{Lemma~\ref{lem:mult}}\quad\Flip{1/2}
\end{align*}
since
\[\frac{p(1-p)}{p(1-p) + (1-p)(1-(1-p))}=\frac{p(1-p)}{p(1-p) + (1-p)p} = \frac{p(1-p)}{2p(1-p)}=\frac{1}{2}\]
\end{example}

\section{Conclusion and Future Work}
\label{sec:conclusion}

In summary, we presented a complete axiomatisation of discrete probabilistic program equivalence. We were able to achieve this result by translating a conventional PL into a calculus of string diagrams, representing Boolean circuits extended with primitives denoting Bernoulli distributions and explicit conditioning, which we equipped with a compositional semantics in terms of subdistributions (up to some re-normalisation factor).  Then, we gave a complete equational theory for circuits without conditioning, that is, a presentation of the category of stochastic maps between sets that are powers of the Booleans. Finally, we showed how to extend the preceding axiomatisation to all circuits---and therefore all programs---including those that contain explicit conditioning. 

This opens up several directions for future work. Firstly, we believe that axiomatisations of alternative semantics for conditioning can be achieved with a few simple modifications. In particular, the semantics of conditioning in terms of plain subdistributions (without re-normalisation) is within reach. Secondly, we would like to axiomatise KL-divergence between probabilistic circuits/programs using a \emph{quantitative} equational theory~\cite{bacci2020quantitative,perrone2023markov}.  This would require us to replace plain equalities with equalities of the form $c\equiv_\varepsilon d$, to represent the fact that the KL-divergence of $\sem{c}$ relative to $\sem{d}$ is less than $\varepsilon$. Interesting new identities can be expressed in this framework, such as
\begin{center}
  
\InputIfFileExists{kl-d-axiom.tikz}{}{\input{./tikz/kl-d-axiom.tikz}}
 \quad where $\varepsilon \geq p \ln\left(\dfrac{p}{p^2}\right)+(1-p)\ln\left(\dfrac{1-p}{(1-p)^2}\right)$.
\end{center}
Such a quantitative identity measures the information loss of assuming that some phenomenon can be modelled by a single coin flip when the true distribution consists of two independent coin tosses. The same approach could also be used to axiomatise other quantitative measures of program divergence, such as the total variation distance~\cite{perrone2023markov}. 

\begin{acks}
The authors would like to thank Wojciech R{\'o}{\.z}owski and Dario Stein for many helpful discussions. 
\end{acks}

\bibliographystyle{ACM-Reference-Format}
\bibliography{refs}


\begin{thebibliography}{39}


\ifx \showCODEN    \undefined \def \showCODEN     #1{\unskip}     \fi
\ifx \showDOI      \undefined \def \showDOI       #1{#1}\fi
\ifx \showISBNx    \undefined \def \showISBNx     #1{\unskip}     \fi
\ifx \showISBNxiii \undefined \def \showISBNxiii  #1{\unskip}     \fi
\ifx \showISSN     \undefined \def \showISSN      #1{\unskip}     \fi
\ifx \showLCCN     \undefined \def \showLCCN      #1{\unskip}     \fi
\ifx \shownote     \undefined \def \shownote      #1{#1}          \fi
\ifx \showarticletitle \undefined \def \showarticletitle #1{#1}   \fi
\ifx \showURL      \undefined \def \showURL       {\relax}        \fi
\providecommand\bibfield[2]{#2}
\providecommand\bibinfo[2]{#2}
\providecommand\natexlab[1]{#1}
\providecommand\showeprint[2][]{arXiv:#2}

\bibitem[Bacci et~al\mbox{.}(2020)]%
        {bacci2020quantitative}
\bibfield{author}{\bibinfo{person}{Giorgio Bacci}, \bibinfo{person}{Radu
  Mardare}, \bibinfo{person}{Prakash Panangaden}, {and} \bibinfo{person}{Gordon
  Plotkin}.} \bibinfo{year}{2020}\natexlab{}.
\newblock \showarticletitle{Quantitative equational reasoning}.
\newblock \bibinfo{journal}{\emph{Foundations of Probabilistic Programming}}
  (\bibinfo{year}{2020}), \bibinfo{pages}{333}.
\newblock


\bibitem[Baez et~al\mbox{.}(2018)]%
        {baez2018props}
\bibfield{author}{\bibinfo{person}{John~C. Baez}, \bibinfo{person}{Brandon
  Coya}, {and} \bibinfo{person}{Franciscus Rebro}.}
  \bibinfo{year}{2018}\natexlab{}.
\newblock \showarticletitle{Props in Network Theory.}
\newblock \bibinfo{journal}{\emph{Theory \& Applications of Categories}}
  \bibinfo{volume}{33} (\bibinfo{year}{2018}).
\newblock


\bibitem[Baez and Erbele(2015)]%
        {categoriescontrol}
\bibfield{author}{\bibinfo{person}{John~C. Baez} {and} \bibinfo{person}{Jason
  Erbele}.} \bibinfo{year}{2015}\natexlab{}.
\newblock \bibinfo{title}{Categories in Control}.
\newblock
\newblock
\showeprint[arxiv]{1405.6881}~[math.CT]
\urldef\tempurl%
\url{https://arxiv.org/abs/1405.6881}
\showURL{%
\tempurl}


\bibitem[Birkhoff(1946)]%
        {birkhoff}
\bibfield{author}{\bibinfo{person}{Garrett Birkhoff}.}
  \bibinfo{year}{1946}\natexlab{}.
\newblock \showarticletitle{Three observations on linear algebra}.
\newblock \bibinfo{journal}{\emph{Univ. Nac. Tucum\'{a}n. Revista A.}}
  \bibinfo{volume}{5} (\bibinfo{year}{1946}), \bibinfo{pages}{147--151}.
\newblock


\bibitem[Boisseau and Sobociński(2022)]%
        {electriccircuits}
\bibfield{author}{\bibinfo{person}{Guillaume Boisseau} {and}
  \bibinfo{person}{Paweł Sobociński}.} \bibinfo{year}{2022}\natexlab{}.
\newblock \showarticletitle{String Diagrammatic Electrical Circuit Theory}.
\newblock \bibinfo{journal}{\emph{Electronic Proceedings in Theoretical
  Computer Science}}  \bibinfo{volume}{372} (\bibinfo{date}{Nov.}
  \bibinfo{year}{2022}), \bibinfo{pages}{178–191}.
\newblock
\showISSN{2075-2180}
\urldef\tempurl%
\url{https://doi.org/10.4204/eptcs.372.13}
\showDOI{\tempurl}


\bibitem[Bonchi et~al\mbox{.}(2019)]%
        {petrinets}
\bibfield{author}{\bibinfo{person}{Filippo Bonchi}, \bibinfo{person}{Joshua
  Holland}, \bibinfo{person}{Robin Piedeleu}, \bibinfo{person}{Pawe\l{}
  Soboci\'{n}ski}, {and} \bibinfo{person}{Fabio Zanasi}.}
  \bibinfo{year}{2019}\natexlab{}.
\newblock \showarticletitle{Diagrammatic algebra: from linear to concurrent
  systems}.
\newblock \bibinfo{journal}{\emph{Proc. ACM Program. Lang.}}
  \bibinfo{volume}{3}, \bibinfo{number}{POPL}, Article \bibinfo{articleno}{25}
  (\bibinfo{date}{jan} \bibinfo{year}{2019}), \bibinfo{numpages}{28}~pages.
\newblock
\urldef\tempurl%
\url{https://doi.org/10.1145/3290338}
\showDOI{\tempurl}


\bibitem[Bonchi et~al\mbox{.}(2018)]%
        {bonchi2018deconstructing}
\bibfield{author}{\bibinfo{person}{Filippo Bonchi}, \bibinfo{person}{Pawe{\l}
  Soboci{\'n}ski}, {and} \bibinfo{person}{Fabio Zanasi}.}
  \bibinfo{year}{2018}\natexlab{}.
\newblock \showarticletitle{Deconstructing Lawvere with distributive laws}.
\newblock \bibinfo{journal}{\emph{Journal of logical and algebraic methods in
  programming}}  \bibinfo{volume}{95} (\bibinfo{year}{2018}),
  \bibinfo{pages}{128--146}.
\newblock


\bibitem[Cho and Jacobs(2019)]%
        {cho2019disintegration}
\bibfield{author}{\bibinfo{person}{Kenta Cho} {and} \bibinfo{person}{Bart
  Jacobs}.} \bibinfo{year}{2019}\natexlab{}.
\newblock \showarticletitle{Disintegration and Bayesian inversion via string
  diagrams}.
\newblock \bibinfo{journal}{\emph{Mathematical Structures in Computer Science}}
  \bibinfo{volume}{29}, \bibinfo{number}{7} (\bibinfo{year}{2019}),
  \bibinfo{pages}{938--971}.
\newblock


\bibitem[Claret et~al\mbox{.}(2013)]%
        {bernoulliprobpaper}
\bibfield{author}{\bibinfo{person}{Guillaume Claret},
  \bibinfo{person}{Sriram~K. Rajamani}, \bibinfo{person}{Aditya~V. Nori},
  \bibinfo{person}{Andrew~D. Gordon}, {and} \bibinfo{person}{Johannes
  Borgstr\"{o}m}.} \bibinfo{year}{2013}\natexlab{}.
\newblock \showarticletitle{Bayesian Inference Using Data Flow Analysis}. In
  \bibinfo{booktitle}{\emph{Proceedings of the 2013 9th Joint Meeting on
  Foundations of Software Engineering}} (Saint Petersburg, Russia)
  \emph{(\bibinfo{series}{ESEC/FSE 2013})}. \bibinfo{publisher}{Association for
  Computing Machinery}, \bibinfo{address}{New York, NY, USA},
  \bibinfo{pages}{92–102}.
\newblock
\showISBNx{9781450322379}
\urldef\tempurl%
\url{https://doi.org/10.1145/2491411.2491423}
\showDOI{\tempurl}


\bibitem[Coecke and Kissinger(2017)]%
        {quantumprocessesbook}
\bibfield{author}{\bibinfo{person}{Bob Coecke} {and} \bibinfo{person}{Aleks
  Kissinger}.} \bibinfo{year}{2017}\natexlab{}.
\newblock \bibinfo{booktitle}{\emph{Picturing Quantum Processes: A First Course
  in Quantum Theory and Diagrammatic Reasoning}}.
\newblock \bibinfo{publisher}{Cambridge University Press}.
\newblock


\bibitem[Corradini and Gadducci(1999)]%
        {corradini1999algebraic}
\bibfield{author}{\bibinfo{person}{Andrea Corradini} {and}
  \bibinfo{person}{Fabio Gadducci}.} \bibinfo{year}{1999}\natexlab{}.
\newblock \showarticletitle{An algebraic presentation of term graphs, via
  gs-monoidal categories}.
\newblock \bibinfo{journal}{\emph{Applied Categorical Structures}}
  \bibinfo{volume}{7} (\bibinfo{year}{1999}), \bibinfo{pages}{299--331}.
\newblock


\bibitem[Dash et~al\mbox{.}(2023)]%
        {lazypplpaper}
\bibfield{author}{\bibinfo{person}{Swaraj Dash}, \bibinfo{person}{Younesse
  Kaddar}, \bibinfo{person}{Hugo Paquet}, {and} \bibinfo{person}{Sam Staton}.}
  \bibinfo{year}{2023}\natexlab{}.
\newblock \showarticletitle{Affine Monads and Lazy Structures for Bayesian
  Programming}.
\newblock \bibinfo{journal}{\emph{Proc. ACM Program. Lang.}}
  \bibinfo{volume}{7}, \bibinfo{number}{POPL}, Article \bibinfo{articleno}{46}
  (\bibinfo{date}{jan} \bibinfo{year}{2023}), \bibinfo{numpages}{31}~pages.
\newblock
\urldef\tempurl%
\url{https://doi.org/10.1145/3571239}
\showDOI{\tempurl}


\bibitem[Fierens et~al\mbox{.}(2014)]%
        {fierens2014}
\bibfield{author}{\bibinfo{person}{Daan Fierens}, \bibinfo{person}{Guy Van~den
  Broeck}, \bibinfo{person}{Joris Renkens}, \bibinfo{person}{Dimitar
  Shterionov}, \bibinfo{person}{Bernd Gutmann}, \bibinfo{person}{Ingo Thon},
  \bibinfo{person}{Greda Janssens}, {and} \bibinfo{person}{Luc De~Raedt}.}
  \bibinfo{year}{2014}\natexlab{}.
\newblock \showarticletitle{Inference and learning in probabilistic logic
  programs using weighted Boolean formulas}.
\newblock \bibinfo{journal}{\emph{Theory and Practice of Logic Programming}}
  \bibinfo{volume}{15}, \bibinfo{number}{3} (\bibinfo{date}{April}
  \bibinfo{year}{2014}), \bibinfo{pages}{358–401}.
\newblock
\showISSN{1475-3081}
\urldef\tempurl%
\url{https://doi.org/10.1017/s1471068414000076}
\showDOI{\tempurl}


\bibitem[Fong(2013)]%
        {bayesiannetworks}
\bibfield{author}{\bibinfo{person}{Brendan Fong}.}
  \bibinfo{year}{2013}\natexlab{}.
\newblock \bibinfo{title}{Causal Theories: A Categorical Perspective on
  Bayesian Networks}.
\newblock
\newblock
\showeprint[arxiv]{1301.6201}~[math.PR]
\urldef\tempurl%
\url{https://arxiv.org/abs/1301.6201}
\showURL{%
\tempurl}


\bibitem[Fritz(2009)]%
        {fritzpresentation}
\bibfield{author}{\bibinfo{person}{Tobias Fritz}.}
  \bibinfo{year}{2009}\natexlab{}.
\newblock \showarticletitle{A presentation of the category of stochastic
  matrices}.
\newblock \bibinfo{journal}{\emph{arXiv preprint arXiv:0902.2554}}
  (\bibinfo{year}{2009}).
\newblock


\bibitem[Fritz(2020)]%
        {fritzmarkovcats}
\bibfield{author}{\bibinfo{person}{Tobias Fritz}.}
  \bibinfo{year}{2020}\natexlab{}.
\newblock \showarticletitle{A synthetic approach to Markov kernels, conditional
  independence and theorems on sufficient statistics}.
\newblock \bibinfo{journal}{\emph{Advances in Mathematics}}
  \bibinfo{volume}{370} (\bibinfo{date}{aug} \bibinfo{year}{2020}),
  \bibinfo{pages}{107239}.
\newblock
\urldef\tempurl%
\url{https://doi.org/10.1016/j.aim.2020.107239}
\showDOI{\tempurl}


\bibitem[Gehr et~al\mbox{.}(2016)]%
        {psipaper}
\bibfield{author}{\bibinfo{person}{Timon Gehr}, \bibinfo{person}{},
  \bibinfo{person}{Sasa Misailovic}, {and} \bibinfo{person}{Martin Vechev}.}
  \bibinfo{year}{2016}\natexlab{}.
\newblock \showarticletitle{PSI: Exact Symbolic Inference for Probabilistic
  Programs}. In \bibinfo{booktitle}{\emph{Computer Aided Verification}},
  \bibfield{editor}{\bibinfo{person}{Swarat Chaudhuri} {and}
  \bibinfo{person}{Azadeh Farzan}} (Eds.). \bibinfo{publisher}{Springer
  International Publishing}, \bibinfo{pages}{62--83}.
\newblock
\showISBNx{978-3-319-41528-4}


\bibitem[Giry(1982)]%
        {giry1982categorical}
\bibfield{author}{\bibinfo{person}{Mich{\`e}le Giry}.}
  \bibinfo{year}{1982}\natexlab{}.
\newblock \showarticletitle{A categorical approach to probability theory}.
\newblock \bibinfo{journal}{\emph{Categorical Aspects of Topology and
  Analysis}} (\bibinfo{year}{1982}), \bibinfo{pages}{68--85}.
\newblock


\bibitem[Gu et~al\mbox{.}(2023)]%
        {gu2023}
\bibfield{author}{\bibinfo{person}{Tao Gu}, \bibinfo{person}{Robin Piedeleu},
  {and} \bibinfo{person}{Fabio Zanasi}.} \bibinfo{year}{2023}\natexlab{}.
\newblock \showarticletitle{A Complete Diagrammatic Calculus for Boolean
  Satisfiability}.
\newblock \bibinfo{journal}{\emph{Electronic Notes in Theoretical Informatics
  and Computer Science}}  \bibinfo{volume}{Volume 1-Proceedings of...}
  (\bibinfo{year}{2023}).
\newblock
\showISSN{2969-2431}
\urldef\tempurl%
\url{https://doi.org/10.46298/entics.10481}
\showDOI{\tempurl}


\bibitem[Hoare et~al\mbox{.}(1987)]%
        {lawsofprogramming}
\bibfield{author}{\bibinfo{person}{C.~A.~R. Hoare}, \bibinfo{person}{I.~J.
  Hayes}, \bibinfo{person}{He Jifeng}, \bibinfo{person}{C.~C. Morgan},
  \bibinfo{person}{A.~W. Roscoe}, \bibinfo{person}{J.~W. Sanders},
  \bibinfo{person}{I.~H. Sorensen}, \bibinfo{person}{J.~M. Spivey}, {and}
  \bibinfo{person}{B.~A. Sufrin}.} \bibinfo{year}{1987}\natexlab{}.
\newblock \showarticletitle{Laws of programming}.
\newblock \bibinfo{journal}{\emph{Commun. ACM}} \bibinfo{volume}{30},
  \bibinfo{number}{8} (\bibinfo{date}{aug} \bibinfo{year}{1987}),
  \bibinfo{pages}{672–686}.
\newblock
\showISSN{0001-0782}
\urldef\tempurl%
\url{https://doi.org/10.1145/27651.27653}
\showDOI{\tempurl}


\bibitem[Holtzen et~al\mbox{.}(2020)]%
        {dicepaper}
\bibfield{author}{\bibinfo{person}{Steven Holtzen}, \bibinfo{person}{Guy
  Van~den Broeck}, {and} \bibinfo{person}{Todd Millstein}.}
  \bibinfo{year}{2020}\natexlab{}.
\newblock \showarticletitle{Scaling exact inference for discrete probabilistic
  programs}.
\newblock \bibinfo{journal}{\emph{Proceedings of the ACM on Programming
  Languages}} \bibinfo{volume}{4}, \bibinfo{number}{OOPSLA}
  (\bibinfo{year}{2020}), \bibinfo{pages}{1--31}.
\newblock


\bibitem[Kelly and Laplaza(1980)]%
        {kelly1980}
\bibfield{author}{\bibinfo{person}{G.M. Kelly} {and} \bibinfo{person}{M.L.
  Laplaza}.} \bibinfo{year}{1980}\natexlab{}.
\newblock \showarticletitle{Coherence for compact closed categories}.
\newblock \bibinfo{journal}{\emph{Journal of Pure and Applied Algebra}}
  \bibinfo{volume}{19} (\bibinfo{year}{1980}), \bibinfo{pages}{193--213}.
\newblock
\showISSN{0022-4049}
\urldef\tempurl%
\url{https://doi.org/10.1016/0022-4049(80)90101-2}
\showDOI{\tempurl}


\bibitem[Kock(1970)]%
        {kock1970monads}
\bibfield{author}{\bibinfo{person}{Anders Kock}.}
  \bibinfo{year}{1970}\natexlab{}.
\newblock \showarticletitle{Monads on symmetric monoidal closed categories}.
\newblock \bibinfo{journal}{\emph{Archiv der Mathematik}}  \bibinfo{volume}{21}
  (\bibinfo{year}{1970}), \bibinfo{pages}{1--10}.
\newblock


\bibitem[Kock(1971)]%
        {kock1971bilinearity}
\bibfield{author}{\bibinfo{person}{Anders Kock}.}
  \bibinfo{year}{1971}\natexlab{}.
\newblock \showarticletitle{Bilinearity and cartesian closed monads}.
\newblock \bibinfo{journal}{\emph{Math. Scand.}} \bibinfo{volume}{29},
  \bibinfo{number}{2} (\bibinfo{year}{1971}), \bibinfo{pages}{161--174}.
\newblock


\bibitem[Kozen(1981)]%
        {SemanticsOfProbabilisticPrograms}
\bibfield{author}{\bibinfo{person}{Dexter Kozen}.}
  \bibinfo{year}{1981}\natexlab{}.
\newblock \showarticletitle{Semantics of probabilistic programs}.
\newblock \bibinfo{journal}{\emph{J. Comput. System Sci.}}
  \bibinfo{volume}{22}, \bibinfo{number}{3} (\bibinfo{year}{1981}),
  \bibinfo{pages}{328--350}.
\newblock
\showISSN{0022-0000}
\urldef\tempurl%
\url{https://doi.org/10.1016/0022-0000(81)90036-2}
\showDOI{\tempurl}


\bibitem[Kozen(1997)]%
        {katpaper}
\bibfield{author}{\bibinfo{person}{Dexter Kozen}.}
  \bibinfo{year}{1997}\natexlab{}.
\newblock \showarticletitle{Kleene algebra with tests}.
\newblock \bibinfo{journal}{\emph{ACM Trans. Program. Lang. Syst.}}
  \bibinfo{volume}{19}, \bibinfo{number}{3} (\bibinfo{date}{may}
  \bibinfo{year}{1997}), \bibinfo{pages}{427–443}.
\newblock
\showISSN{0164-0925}
\urldef\tempurl%
\url{https://doi.org/10.1145/256167.256195}
\showDOI{\tempurl}


\bibitem[Lafont(2003)]%
        {lafont2003}
\bibfield{author}{\bibinfo{person}{Yves Lafont}.}
  \bibinfo{year}{2003}\natexlab{}.
\newblock \showarticletitle{Towards an algebraic theory of Boolean circuits}.
\newblock \bibinfo{journal}{\emph{Journal of Pure and Applied Algebra}}
  \bibinfo{volume}{184}, \bibinfo{number}{2} (\bibinfo{year}{2003}),
  \bibinfo{pages}{257--310}.
\newblock
\showISSN{0022-4049}
\urldef\tempurl%
\url{https://doi.org/10.1016/S0022-4049(03)00069-0}
\showDOI{\tempurl}


\bibitem[MacLane(1971)]%
        {maclane1971}
\bibfield{author}{\bibinfo{person}{Saunders MacLane}.}
  \bibinfo{year}{1971}\natexlab{}.
\newblock \bibinfo{booktitle}{\emph{Categories for the Working Mathematician}}.
\newblock \bibinfo{publisher}{Springer New York}.
\newblock
\showISBNx{9781461298397}
\showISSN{0072-5285}
\urldef\tempurl%
\url{https://doi.org/10.1007/978-1-4612-9839-7}
\showDOI{\tempurl}


\bibitem[Moss and Perrone(2022)]%
        {mossprobmonads}
\bibfield{author}{\bibinfo{person}{Sean Moss} {and} \bibinfo{person}{Paolo
  Perrone}.} \bibinfo{year}{2022}\natexlab{}.
\newblock \showarticletitle{Probability monads with submonads of deterministic
  states}. In \bibinfo{booktitle}{\emph{Proceedings of the 37th Annual ACM/IEEE
  Symposium on Logic in Computer Science}} (Haifa, Israel)
  \emph{(\bibinfo{series}{LICS '22})}. \bibinfo{publisher}{Association for
  Computing Machinery}, \bibinfo{address}{New York, NY, USA}, Article
  \bibinfo{articleno}{38}, \bibinfo{numpages}{13}~pages.
\newblock
\showISBNx{9781450393515}
\urldef\tempurl%
\url{https://doi.org/10.1145/3531130.3533355}
\showDOI{\tempurl}


\bibitem[Narayanan et~al\mbox{.}(2016)]%
        {hakarupaper}
\bibfield{author}{\bibinfo{person}{Praveen Narayanan}, \bibinfo{person}{Jacques
  Carette}, \bibinfo{person}{Wren Romano}, \bibinfo{person}{Chung-chieh Shan},
  {and} \bibinfo{person}{Robert Zinkov}.} \bibinfo{year}{2016}\natexlab{}.
\newblock \showarticletitle{Probabilistic Inference by Program Transformation
  in {H}akaru (System Description)}. In \bibinfo{booktitle}{\emph{Functional
  and Logic Programming}}, \bibfield{editor}{\bibinfo{person}{Oleg Kiselyov}
  {and} \bibinfo{person}{Andy King}} (Eds.). \bibinfo{publisher}{Springer
  International Publishing}, \bibinfo{address}{Cham}, \bibinfo{pages}{62--79}.
\newblock
\showISBNx{978-3-319-29604-3}


\bibitem[Olmedo et~al\mbox{.}(2018)]%
        {olmedo2018conditioning}
\bibfield{author}{\bibinfo{person}{Federico Olmedo}, \bibinfo{person}{Friedrich
  Gretz}, \bibinfo{person}{Nils Jansen}, \bibinfo{person}{Benjamin~Lucien
  Kaminski}, \bibinfo{person}{Joost-Pieter Katoen}, {and}
  \bibinfo{person}{Annabelle McIver}.} \bibinfo{year}{2018}\natexlab{}.
\newblock \showarticletitle{Conditioning in probabilistic programming}.
\newblock \bibinfo{journal}{\emph{ACM Transactions on Programming Languages and
  Systems (TOPLAS)}} \bibinfo{volume}{40}, \bibinfo{number}{1}
  (\bibinfo{year}{2018}), \bibinfo{pages}{1--50}.
\newblock


\bibitem[Perrone(2023)]%
        {perrone2023markov}
\bibfield{author}{\bibinfo{person}{Paolo Perrone}.}
  \bibinfo{year}{2023}\natexlab{}.
\newblock \showarticletitle{Markov categories and entropy}.
\newblock \bibinfo{journal}{\emph{IEEE Transactions on Information Theory}}
  (\bibinfo{year}{2023}).
\newblock


\bibitem[Piedeleu and Zanasi(2020)]%
        {piedeleu2020}
\bibfield{author}{\bibinfo{person}{Robin Piedeleu} {and} \bibinfo{person}{Fabio
  Zanasi}.} \bibinfo{year}{2020}\natexlab{}.
\newblock \bibinfo{title}{A String Diagrammatic Axiomatisation of Finite-State
  Automata}.
\newblock
\newblock
\showeprint[arxiv]{2009.14576}~[cs.FL]
\urldef\tempurl%
\url{https://arxiv.org/abs/2009.14576}
\showURL{%
\tempurl}


\bibitem[Piedeleu and Zanasi(2023)]%
        {introdiagrams}
\bibfield{author}{\bibinfo{person}{Robin Piedeleu} {and} \bibinfo{person}{Fabio
  Zanasi}.} \bibinfo{year}{2023}\natexlab{}.
\newblock \bibinfo{title}{An Introduction to String Diagrams for Computer
  Scientists}.
\newblock
\newblock
\showeprint[arxiv]{2305.08768}~[cs.LO]
\urldef\tempurl%
\url{https://arxiv.org/abs/2305.08768}
\showURL{%
\tempurl}


\bibitem[R{\'o}{\.z}owski et~al\mbox{.}(2023)]%
        {probgkat}
\bibfield{author}{\bibinfo{person}{Wojciech R{\'o}{\.z}owski},
  \bibinfo{person}{Tobias Kapp{\'e}}, \bibinfo{person}{Dexter Kozen},
  \bibinfo{person}{Todd Schmid}, {and} \bibinfo{person}{Alexandra Silva}.}
  \bibinfo{year}{2023}\natexlab{}.
\newblock \showarticletitle{Probabilistic Guarded KAT Modulo Bisimilarity:
  Completeness and Complexity}. In \bibinfo{booktitle}{\emph{50th International
  Colloquium on Automata, Languages, and Programming (ICALP 2023)}}.
  Schloss-Dagstuhl-Leibniz Zentrum f{\"u}r Informatik.
\newblock


\bibitem[Selinger(2010)]%
        {selinger2010}
\bibfield{author}{\bibinfo{person}{P. Selinger}.}
  \bibinfo{year}{2010}\natexlab{}.
\newblock \bibinfo{booktitle}{\emph{A Survey of Graphical Languages for
  Monoidal Categories}}.
\newblock \bibinfo{publisher}{Springer Berlin Heidelberg},
  \bibinfo{pages}{289–355}.
\newblock
\showISBNx{9783642128219}
\showISSN{1616-6361}
\urldef\tempurl%
\url{https://doi.org/10.1007/978-3-642-12821-9_4}
\showDOI{\tempurl}


\bibitem[Stein and Staton(2023)]%
        {stein2023probabilistic}
\bibfield{author}{\bibinfo{person}{Dario Stein} {and} \bibinfo{person}{Sam
  Staton}.} \bibinfo{year}{2023}\natexlab{}.
\newblock \showarticletitle{Probabilistic Programming with Exact Conditions}.
\newblock \bibinfo{journal}{\emph{J. ACM}} (\bibinfo{year}{2023}).
\newblock


\bibitem[Tolpin et~al\mbox{.}(2016)]%
        {anglicanpaper}
\bibfield{author}{\bibinfo{person}{David Tolpin}, \bibinfo{person}{Jan~Willem
  van~de Meent}, \bibinfo{person}{Hongseok Yang}, {and} \bibinfo{person}{Frank
  Wood}.} \bibinfo{year}{2016}\natexlab{}.
\newblock \bibinfo{title}{Design and Implementation of Probabilistic
  Programming Language Anglican}.
\newblock
\newblock
\showeprint[arxiv]{1608.05263}~[cs.PL]
\urldef\tempurl%
\url{https://arxiv.org/abs/1608.05263}
\showURL{%
\tempurl}


\bibitem[Zanasi(2015)]%
        {zanasi2015interacting}
\bibfield{author}{\bibinfo{person}{Fabio Zanasi}.}
  \bibinfo{year}{2015}\natexlab{}.
\newblock \emph{\bibinfo{title}{Interacting {H}opf Algebras-the Theory of
  Linear Systems}}.
\newblock \bibinfo{thesistype}{Ph.\,D. Dissertation}. \bibinfo{school}{Ecole
  normale sup{\'e}rieure de {L}yon}.
\newblock


\end{thebibliography}

\appendix

\end{document}